\DeclareMathOperator{\sgn}{sgn}
\newtheorem{Theorem}{Theorem}
  \providecommand\BibTeX{{%
    \normalfont B\kern-0.5em{\scshape i\kern-0.25em b}\kern-0.8em\TeX}}}
\begin{document}

\title{Supervised Deep Hashing for High-dimensional and Heterogeneous Case-based Reasoning}

\author{Qi Zhang}
\email{qi.zhang-13@student.uts.edu.au,zhangqi_cs@bit.edu.cn}
\orcid{0000-0002-1037-1361}
\affiliation{%
  \institution{University of Technology Sydney}
  \streetaddress{15 Broadway, Ultimo}
  \city{Sydney}
  \state{NSW}
  \country{Australia}
  \postcode{2007}
}
\affiliation{%
  \institution{Beijing Institute of Technology}
  \streetaddress{No 5 Yard, Zhong Guancun South Stree, Haidian District}
  \city{Beijing}
  \country{China}
  \postcode{100081}
}

\author{Liang Hu}
\email{rainmilk@gmail.com}
\affiliation{%
  \institution{Tongji University}
  \streetaddress{1239 Siping Road, Yangpu District}
  \city{Shanghai}
  \country{China}
  \postcode{200070}
}
\affiliation{%
  \institution{DeepBlue Academy of Sciences}
  \streetaddress{No. 369, Weining Road, Xinjing Town, Changning District}
  \city{Shanghai}
  \country{China}
  \postcode{200240}
}

\author{Chongyang Shi}
\email{cy_shi@bit.edu.cn}
\affiliation{%
  \institution{Beijing Institute of Technology}
  \streetaddress{No 5 Yard, Zhong Guancun South Stree, Haidian District}
  \city{Beijing}
  \country{China}
  \postcode{100081}
}


\author{Ke Liu}
\email{liuke0222@126.com}
\affiliation{
 \institution{Beijing Normal University}
 \city{Beijing}
 \country{China}
 \postcode{100875}
}

\author{Longbing Cao}
\authornote{Corresponding Author.}
\email{longbing.cao@uts.edu.au}
\affiliation{%
  \institution{University of Technology Sydney}
  \streetaddress{15 Broadway, Ultimo}
  \city{Sydney}
  \state{NSW}
  \country{Australia}
  \postcode{2007}
}

\renewcommand{\shortauthors}{Trovato and Tobin, et al.}

\begin{abstract}
Case-based Reasoning (CBR) on high-dimensional and heterogeneous data is a trending yet challenging and computationally expensive task in the real world. A promising approach is to obtain low-dimensional hash codes representing cases and perform a similarity retrieval of cases in a Hamming space. However, previous methods based on data-independent hashing rely on random projections or manual construction, inapplicable to address specific data issues (e.g., high-dimensionality and heterogeneity) due to their insensitivity to data characteristics. To address these issues, this work introduces a novel deep hashing network to learn similarity-preserving compact hash codes for efficient case retrieval and proposes a deep-hashing-enabled CBR model HeCBR. Specifically, we introduce position embedding to represent heterogeneous features and utilize a multilinear interaction layer to obtain case embeddings, which effectively filtrates zero-valued features to tackle high-dimensionality and sparsity and captures inter-feature couplings. Then, we feed the case embeddings into fully-connected layers, and subsequently a hash layer generates hash codes with a quantization regularizer to control the quantization loss during relaxation. To cater for incremental learning of CBR, we further propose an adaptive learning strategy to update the hash function. Extensive experiments on public datasets show HeCBR greatly reduces storage and significantly accelerates the case retrieval. HeCBR achieves desirable performance compared with the state-of-the-art CBR methods and performs significantly better than hashing-based CBR methods in classification.
\end{abstract}

\begin{CCSXML}
<ccs2012>
   <concept>
		<concept_id>10002951.10003227.10003351.10003445</concept_id>
		<concept_desc>Information systems~Nearest-neighbor search</concept_desc>
		<concept_significance>500</concept_significance>
	</concept>
   <concept>
       <concept_id>10010147.10010257.10010293.10010294</concept_id>
       <concept_desc>Computing methodologies~Neural networks</concept_desc>
       <concept_significance>500</concept_significance>
   </concept>
	<concept>
       <concept_id>10003752.10003809.10010055.10010060</concept_id>
       <concept_desc>Theory of computation~Nearest neighbor algorithms</concept_desc>
       <concept_significance>500</concept_significance>
   </concept>
   <concept>
       <concept_id>10010147.10010257.10010293.10010315</concept_id>
       <concept_desc>Computing methodologies~Instance-based learning</concept_desc>
       <concept_significance>100</concept_significance>
   </concept>
</ccs2012>
\end{CCSXML}

\ccsdesc[500]{Information systems~Nearest-neighbor search}
\ccsdesc[500]{Computing methodologies~Neural networks}
\ccsdesc[500]{Theory of computation~Nearest neighbor algorithms}
\ccsdesc[100]{Computing methodologies~Instance-based learning}

\keywords{case-based reasoning, neural networks, supervised hashing}

\maketitle

\section{Introduction}
Case-based Reasoning (CBR) is an incremental learning methodology of analogy solution making, inspired by cognitive science that humans handle new problems by referring to past analogous experiences (cases)~\cite{ZhangSNC19}. Generally, a complete CBR system includes four key steps: case retrieval, case reuse, case revision, and case retention. Case retrieval assesses the similarity between a target case and past cases and obtains the most similar past cases. Case reuse suggests a solution for the target case according to the solutions of past cases. Case revision verifies the correctness of the suggested solution and revises the solution if necessary. Case retention maintains the solved cases into a case base for future problem-solving. Due to its good interpretability and practicability, CBR has been applied successfully to classification~\cite{ZhangSNC19}, diagnosis~\cite{BegumAFXF11}, decision support~\cite{GuLBZW12}, fault detection~\cite{YanWZZ14}, and various fields, e.g., finance~\cite{Chuang13,SartoriMG16}, industry~\cite{Chan05,KhosravaniNW19,LaoCHYTP12}, manufacturing~\cite{lim2015fast,KhosravaniN20}, and medicine~\cite{Montani11,LamySGBS19}.

Similarity measures play a decisive role in obtaining similar cases and thus largely affect the performance of CBR systems. Appropriate similarity measures should maximize the model fitness to data characteristics and explore the inherent data characteristics for handling the underlying problems~\cite{dst_Cao15}. However, enormous complex data, specifically high-dimensional and heterogeneous data, has penetrated every corner of our lives, bringing significant challenges to quantifying data characteristics and building accurate similarity measures. In addition, the increasing amount of cases retained during the incremental learning of CBR often leads to a huge knowledge base (case base) with enormous cases and bring a huge computation burden of similarity calculation and ranking in case retrieval. The aforementioned issues usually degrade the performance and efficiency of CBR systems. Intuitively, it is crucial to probe the intrinsic characteristics of high-dimensional and heterogeneous data and build efficient data-aware similarity measures, where approaches such as approximate nearest neighbor search (e.g., hashing techniques studied in this work) are promising to improve the retrieval efficiency.


\subsection{Challenges of High-dimensional and Heterogeneous Data}
Measuring the similarity of large amounts of high-dimensional data is critical for data mining and machine learning tasks and applications, e.g., information retrieval~\cite{ZhongR0GZ20}, clustering~\cite{DangDYH20}, and hashing~\cite{ShiMZZYSLM20}. The volume and complexity of such data usually entails prohibitively large storage and time consumption. Handcrafting an accurate similarity measure is challenging since it is usually the case that only partial unknown features are relevant to the task at hand~\cite{LiuBS15}. Therefore, many traditional similarity measures, e.g., Euclidean distance, cosine similarity, and Jaccard coefficient, perform poorly or even work out of action under high dimensionality. To address these issues, existing research usually projects high-dimensional data into a low-dimensional space by dimensionality reduction like manifold learning~\cite{Echihabi20} or principal component analysis~\cite{ZhuLZXYW17} and then learns an approximate similarity measure in the reduced space. However, high-dimensional data is often accompanied by the issue of sparsity where many useful features are rarely observed, and datapoints are scattered in multiple lower dimensional manifolds. It brings new challenges that these dimensionality reduction techniques may be inapplicable and parameters for the reduction easily lead to severe overfitting to the data.

In addition to the high-dimensionality issue, it is also crucial to consider the heterogeneity embodied in complex cases in similarity assessment. In heterogeneous data, attributes may follow different distributions and show different significance. Intuitively, this may lead to the inaccuracy of most handcrafted similarity measures, which adopt consistent attribute similarity or linear (e.g., average) aggregation functions~\cite{MuangprathubBP13}. To tackle the heterogeneity of case data in the real world, recent studies usually utilize different similarity measures on different types of attributes, e.g., using the Euclidean distance for numerical attributes and the Hamming distance for categorical attributes~\cite{RezvanHS13}, and then optimize the allocation of attribute weights and aggregation functions for similarity assessment~\cite{YanSG14,ZhangSSN16,ZhangSNC19}. Those methods optimize similarity measures by linear or non-linear aggregation functions in a data-independent manner, which cannot depict the complex attribute coupling relationships and heterogeneity among attributes~\cite{ipm_Cao15}. Accordingly, a more promising but challenging approach is to capture the heterogeneity of attributes~\cite{HuCCGXY16,DalleauCS20,ZhengLSZLW17} and learn data-aware similarity metrics~\cite{NguyenWHK12,WangSC13} to capture feature couplings~\cite{ChengMWC13,ijcai_PangCC16,untie}.

\subsection{Gaps of Efficient Case-based Reasoning}
The CBR problem-solving process mainly includes the successive execution of case retrieval and case reuse since case revision and retention are usually performed offline. Case retrieval is the most time-consuming step which usually consists of traversing all cases for similarity calculation and ranking the cases according to their similarity scores. With the increasing number of cases maintained in the case base, the retrieval efficiency and storage burden of CBR become increasingly critical and affect the applicability of CBR systems. To improve the efficiency, recent studies focus on reducing the search space by building indices to structurally organize the case base \cite{LiuC12,Sanchez-RuizO14} or partition and index cases by clustering~\cite{ZhangSNC19} and mapping~\cite{GuoHP14,ZhuHQMP15}. However, these methods usually rely on domain expertise and similarity measures to partition and index the case base and need additional time costs to update the case base and maintain its indexing structure. This may greatly degrade the performance and efficiency of CBR systems on high-dimensional and heterogeneous data where domain expertise can hardly penetrate.

Another promising way to accelerate similarity retrieval is approximate nearest neighbor search such as locality-sensitive hashing (LSH)~\cite{IndykM98,ZhaoXCLSZ09}. LSH maps similar datapoints (cases) into same `buckets' (encoded with low-dimensional binary codes) with high probability, which preserves the similarity relationships between datapoints and can be regarded as a way of dimensionality reduction on high-dimensional data~\cite{DatarIIM04}. Several studies introduce LSH into CBR systems to approximate the nearest neighbor search process and scale traditional CBR systems to large-scale data~\cite{JalaliL15,WoodbridgeMBS16,JalaliL18}. The studies show CBR systems equipped with hashing techniques can greatly improve retrieval efficiency and achieve desirable performance with expected loss in accuracy. Recently, a tremendous amount of research shows that data-dependent hashing (a.k.a., learn to hash) methods perform significantly better than data-independent hashing (e.g., LSH)~\cite{ShenSLS15,WangZSSS18,ZhuLCLZ20}. Data-dependent hashing learns similarity-preserving compact hash codes from data with/without (supervised/unsupervised) given similarity information by various flexible hash functions, e.g., PCA~\cite{WangKC12}, kernel functions~\cite{ShenSLS15}, and deep neural networks~\cite{WuDL0W19}. Naturally, data-dependent hashing facilitates to capture complex data characteristics and potentially improves the performance of hashing-based CBR systems, which, however, has not yet been introduced into CBR.


\subsection{Contributions}
To address the above challenges and gaps, we introduce supervised deep hashing in CBR systems and propose a hash function with an adaptive hashing network to build a Hashing-enabled CBR system (short for HeCBR). Specifically, each feature is represented by the multiplication of its `amplitude' (position embedding for the feature) and `frequency' (the feature value). Subsequently, a multilinear interaction layer is introduced to aggregate the feature embeddings to capture multiview feature couplings and obtain case embeddings. The above two embedding layers filter out zero-valued features and calculate the case embeddings in multilinear time complexity to efficiently handle the high-dimensionality and heterogeneity issues. Then, the case embeddings are fed into fully-connected layers and a hash layer to generate the binary hash codes. To learn the hashing network, we construct the similarity groundtruth from the solutions of cases (precisely two cases being similar if they have same/similar solutions, e.g., classification label, dissimilar otherwise). We then optimize the learning objectives of minimizing the loss between the similarity groundtruth and case similarity in the hash space and constrain the loss with a quantization regularizer. Considering the nature of incremental learning in CBR, we further propose a mechanism of incremental learning combining an adaptive learning objective and an update strategy to update the hash function and hash codes respectively for retaining solved cases. Accordingly, we construct a hashing-enabled CBR model to integrate the adaptive hashing network to index the case base and accelerate the similarity retrieval speed. The overview problem-solving process of HeCBR is shown in Fig. \ref{fig:ahcbc}. In summary, the contributions of this work mainly include:

\begin{figure}
    \centering
    \includegraphics[width=0.5\linewidth]{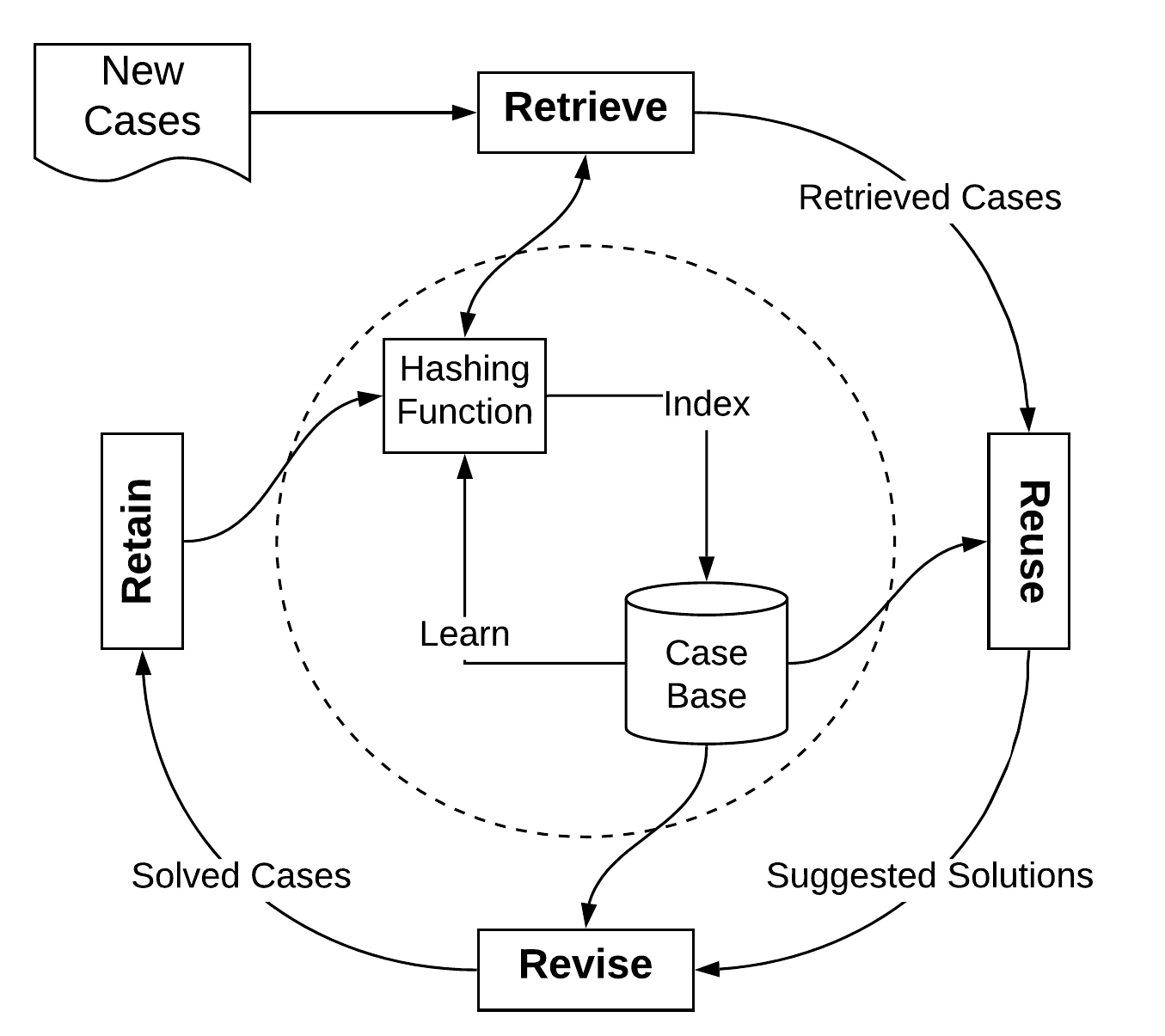}
    \caption{The problem-solving process of case-based reasoning with adaptive hashing.}
    \label{fig:ahcbc}
\end{figure}

\begin{itemize}
     \item We propose an adaptive hashing network to build a hashing-enabled CBR system. The proposed hashing network learns similarity-preserving compact hash codes of the cases and the corresponding hash function simultaneously. It is beneficial to map the high-dimensional cases into the low-dimensional hash space to reduce storage cost and approximate the nearest neighbor search in the hash space to improve similarity retrieval efficiency.
    \item To tackle the challenges and gaps of high-dimensional heterogeneous case data, we introduce position embedding to perform feature embedding and propose a multilinear interaction layer to represent complex cases. The network design filters out all zero-valued features and captures the feature couplings, assisting in efficiently handling high-dimensionality and heterogeneity while preserving similarity relationships.
	\item We further introduce an incremental learning mechanism to retain the solved cases. Specifically, we propose an adaptive learning objective to update the hash function and an updating strategy to efficiently update the hash codes of all cases.
\end{itemize}

We perform a collection of experiments on eight real-world datasets of different applications and make comparisons with other state-of-the-art hashing-enabled CBR models and several typical CBR models to investigate the effectiveness of our proposed HeCBR. All experimental results demonstrate that our HeCBR significantly outperforms the hashing-enabled methods in terms of classification tasks and achieves desirable performance compared with typical CBR methods. Theoretical and empirical analysis show that HeCBR can greatly reduce storage cost and significantly improve efficiency in relation to typical CBR models.

\section{Related Work}
Our work aims to improve the efficiency and scalability of CBR systems with high-dimensional and heterogeneous data by leveraging supervised deep hashing. We therefore first discuss the state-of-the-art methods aiming to improve the efficiency and scalability of CBR systems. Then, we provide a brief review of the current studies on handling the high-dimensionality and heterogeneity-related issues. In addition, we also present the recent progress on the data-dependent hashing. 

\subsection{Scalable Case-based Reasoning}
Due to the increase of case number in case base during retaining the newly solved cases, the efficiency and scalability of CBR systems have always been  crucial hindering the CBR systems from being exploited in applying large-scale and complex cases. Early studies pay attention to the strategies of case retention which expect to reduce redundant cases and maintain informative cases only~\cite{ShiuP04,MantarasMBLSCFMCFKAW05,ZhangSNC19}. Those methods leverage similarity-threshold filters to prune redundant cases or perform structure reduction to refine the structure of the case base, which alleviates the rapid growth of the case base to some extent. However, they highly rely on domain experience or expertise to calculate the similarity and set threshold and degrade the applicability of CBR systems on large-scale real life data.

A more promising approach to handle the efficiency and scalability issues is to build well-organized structures to index cases. For example, the typical CBR system D´ej´a Vu~\cite{SmythKC01} organizes software-design cases by hierarchically storing case description and the solutions in different layers and achieves desirable performance and efficiency. \citeauthor{LiuC12}~\cite{LiuC12} introduce an effective and efficient Z indexing approach to index cases and divide the case base into small sets in a tree. \citeauthor{Sanchez-RuizO14}~\cite{Sanchez-RuizO14} propose Least-Common Subsumer (LCS) trees to organize plan cases. Those methods leverage a hierarchical tree structure to index cases and narrow down the search space for the goal of efficiency improvement. In addition, many researchers adopt clustering techniques to accelerate case retrieval. A growing hierarchical self-organizing map (GHSMO)~\cite{GuoHP14,ZhuHQMP15} is introduced to categorize similar cases into same clusters and then index the clusters. \citeauthor{MuangprathubBP13} and \citeauthor{ZhangSNC19} further introduce a complete concept lattice for conceptual clustering to structurally organize and index cases. Like the structure-based CBR models, the clustering-based models improve the efficiency and accuracy of CBR systems, however all those models cost a large amount of time to construct and update the organizational structure and index and often require extra storage to maintain the structure and index, leading to low applicability in the real world.

Hashing techniques as a special indexing approach has also been applied into CBR systems. Hashing methods, e.g., LSH~\cite{IndykM98,ZhaoXCLSZ09} and E2LSH~\cite{DatarIIM04}, project (high-dimensional) cases into low-dimensional binary vectors (hash codes) and maintain the similarity information from the original space. The methods not only provide an efficient index of cases also enable approximate nearest neighbor search in the hash space and substantial data compression for the case base. Most previous studies introduce only LSH (data-independent hashing) as the underlying hash-based nearest neighbor search algorithm for large-scale cases. For example, \citeauthor{JalaliL15} presents a case study using Map Reduce and LSH to make the ensembles of adaptation of regression (EAR) in CBR feasible for large case bases and subsequently develops foundational scale-up methods using LSH for fast approximate nearest neighbor search of both cases and adaption rules for industrial scale prediction~\cite{JalaliL18}. \citeauthor{WoodbridgeMBS16} introduce LSH as an alternative to improve biomedical signal search results and accelerate search speed. Those methods prove the effectiveness of hashing methods in fast similarity search in CBR systems and also benefit performance improvement. However, data-independent hashing has its intuitive drawbacks that it often needs longer hash bits and cannot capture data characteristics. Recently, ~\citeauthor{JiangZHYM16} design a supervised hashing method based on linearly combined kernel functions associated with individual features from images for scalable histopathological image analysis in a CBR system~\cite{JiangZHYM16} to handle image data with efficiency improvement. Nevertheless, there exist quite few studies adopting  advanced data-dependent hashing techniques in CBR systems.

\subsection{High-dimensionality and Heterogeneity Issues}
In this section, we first briefly review the techniques for handling high-dimensionality and heterogeneity  respectively and then discuss the literature studying both  issues simultaneously.

The most common technique for handling the high-dimensionality  is dimensionality reduction~\cite{PandoveGR18,EspadotoMKHT21}, which roughly includes manifold learning~\cite{ZhuCHWZ18,TangFT21}, feature selection/extraction~\cite{zhuZWZZ18,TsamardinosBKPC19}, and the encoder-decoder framework~\cite{KusnerPH17,PangCCL18,KosiorekSTH19}. Those methods often  project high-dimensional data into low-dimensional representation and expect that the representation preserves certain patterns, e.g., neighbors, distances or clusters, or maintains the maximum mutual information with the original data. Regarding the incidental sparsity issues, some recent studies~\cite{WuHMY17,KrishnanLH18} further attempt to tackle the underfitting  incurred by  data sparsity. In addition, \citeauthor{0001C17} borrows the idea of factorization machine and integrates it with neural networks to handle sparsity~\cite{0001C17}. Inspired by them, we extend the idea of factorization machine and build a multilinear interaction layer to handle the high dimensionality and sparsity. Note that a vast literature focusing on high-dimensionality is not mentioned here, please refer to~\cite{PandoveGR18,EspadotoMKHT21} for more details.

Another common problem is heterogeneous data, with the underlying generation process changing across data sets or domains~\cite{ZhangHZGS17}. \citeauthor{WangSKC18} proposes a flexible information-based framework specializing the maximum entropy principle and the least effort principle to a principled multimodality information fusion formalism~\cite{WangSKC18}. \citeauthor{8878018} introduces transfer learning techniques to the heterogeneity between source and target domains and propose an evidence-based heterogeneous transfer classification model~\cite{8878018}. \citeauthor{ZhuCLYK18} proposes the HELIC model to capture both value-to-attribute and attribute-to-class hierarchical couplings to reveal the intrinsic heterogeneity in data. In addition, a growing amount of research pays attention to heterogeneous information network where both attributes (or datapoints) and their relation may be heterogeneous~\cite{0004JCGCA19,HongLYLFWQY20,untie}. Those methods model data heterogeneity and couplings and achieve significant performance improvement, proving the rationality and practicality of capturing heterogeneity.

Generally, real file high-dimensional data displays heterogeneity due to either heteroscedastic variance or other forms of non-location-scale covariate effect~\cite{wang2012quantile}, indicating the necessity of addressing the issues of heterogeneity and high dimensionality simultaneously. \citeauthor{HaoSLC17} combine the high-dimensional version of Expectation Conditional Maximization algorithm and graphical lasso penalty to jointly estimate multiple graphical models on heterogeneous and high-dimensional observations~\cite{HaoSLC17}. \citeauthor{wang2012quantile} introduces quantile regression to model heterogeneous data and regularize quantile regression with a non-convex penalty function to deal with ultra-high dimension. \citeauthor{PangC20}~\cite{PangC20} introduce a heterogeneous univariate outlier ensemble framework which ensembles a set of heterogeneous univariate outlier detectors optimized to capture different distribution of each individual feature. Inspired by the success of these methods, we tailor the proposed network to flexibly capture complex data characteristics and improve the practicality of CBR systems in real-life applications.

\subsection{Data-dependent Hashing}
Parallel to the data-independent hash, this paper mainly discusses data-dependent hashing from the perspective of non-deep hashing and deep hashing.

Data-dependent hashing achieves superior performance and learns the similarity-preserving hash functions from data by minimizing the gap between the similarity in the original space and that in the hash code space. Early studies present various approaches focusing on non-deep hashing, which relies on handcraft features to learn hash codes and hash functions. According to hash functions, those methods are roughly categorized into linear hash functions, kernel hash functions, and eigenfunction hash functions~\cite{WangZSSS18}. For example, spectral hashing~\cite{WeissTF08} and hashing with graphs~\cite{LiuWKC11} are representative algorithms with eigenfunction hash functions. ICA hashing~\cite{HeCRB11} and LDA hashing~\cite{SongYYHS13} are linear hashing methods. Non-deep hashing often introduces code balance constraints to avoid learning collapse and facilitate the generation of compact hash codes, achieving desirable performance in similarity retrieval.

With rapid progress made in deep representation learning, deep hashing has achieved significantly better performance than non-deep hashing and has thus been widely applied. Deep hashing methods build neural hash functions to obtain robust and powerful feature representations for complex data and learn neural hash functions and hash codes simultaneously~\cite{LiuWSC19}. The earliest work in deep hashing, semantic hashing~\cite{SalakhutdinovH09}, utilizes a deep generative model to handle text data. Subsequently, extensive studies, e.g., \cite{ZhuL0C16,YangLC18,CaoLL018}, introduce successive CNN-based networks to capture high-level features from images and customize learning objectives to preserve similarity. Inspired by these deep hashing methods, we build a specific neural network to learn feature representations from high-dimensional and heterogeneous data and utilize a pairwise loss function to learn neural hash functions and hash codes.



\section{Deep Hashing Network}
Supervised learning has been prevalent and successful in achieving high-quality semantic hash codes. Below, we outline the general settings in supervised hashing. Assume $\mathcal{X}\subset\mathbb{R}^d$ and $\mathcal{Y}\subset\{-1,1\}^r$ refer to the input (original) space and binary hash space respectively, where $d$ and $r$ denote their respective dimensions. Let us denote the given/calculated pairwise supervision of similarity information $\mathbf{S}\in\{0,1\}^{n\times n}$ for $n$ data points where $s_{ij}=1$ if data points $\mathbf{x}_i$ and $\mathbf{x}_j$ in $\mathcal{X}$ are semantically similar, and $s_{ij}=0$ otherwise. The aim of supervised hashing is to learn a mapping function $H_\phi:=\mathcal{X}\to\mathcal{Y}$ with parameters $\phi$ (e.g., a neural network) by minimizing the gaps between the similarity in the input space and that in the hash space.

Taking advantage of recent advances in deep hashing, we introduce a deep hashing network tailored for high-dimensional and heterogeneous data. The network architecture is shown in Fig. \ref{fig:network} which contains four components: feature embedding, multiview feature interaction, fully-connected layers, and a hashing layer. Next, we introduce the components in detail.

\begin{figure}
    \centering
    \includegraphics[width=0.9\linewidth]{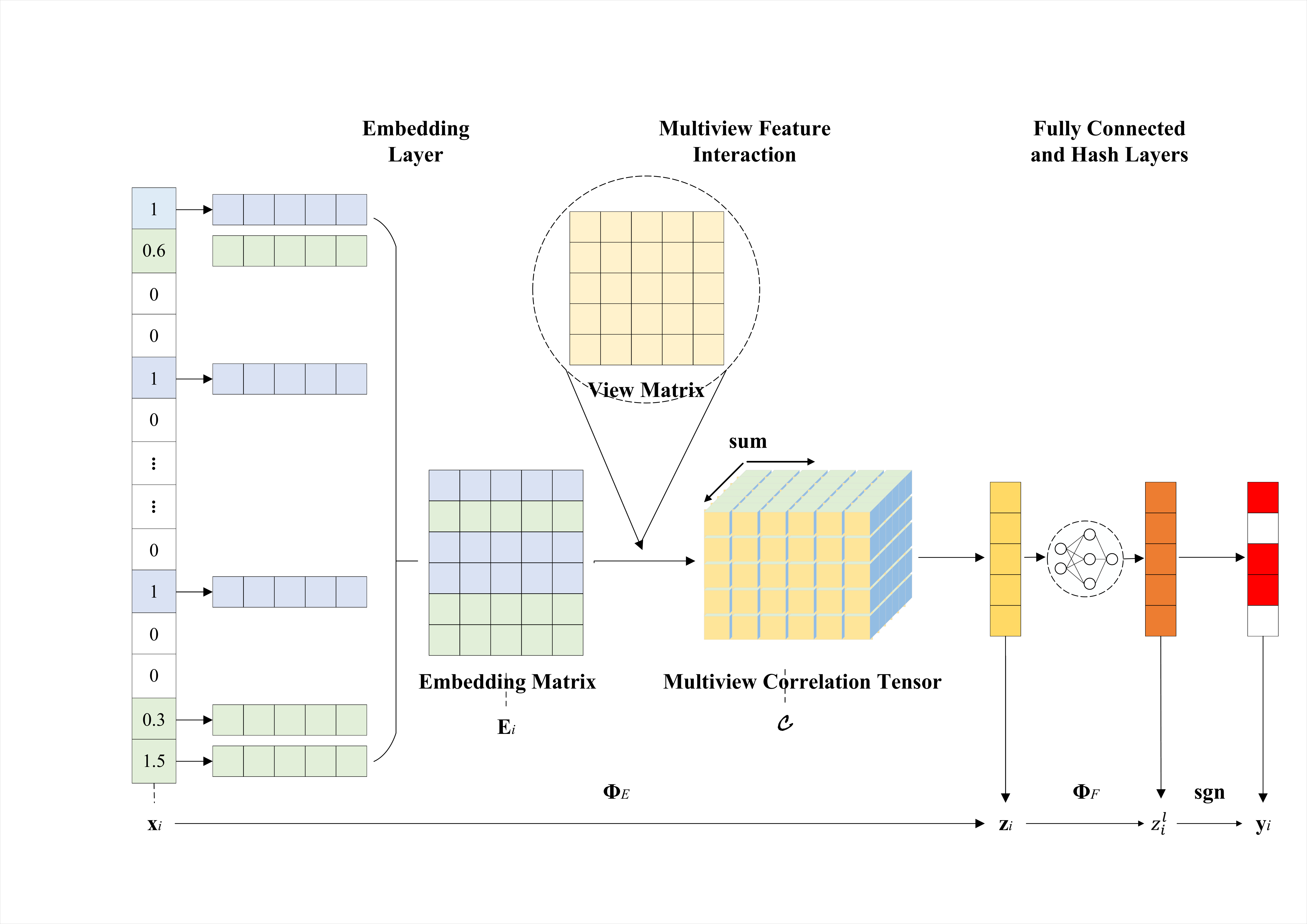}
    \caption{The architecture of deep hashing network.}
    \label{fig:network}
\end{figure}

\subsection{Feature Embedding}
Due to the high dimensionality and heterogeneity of case data, it is challenging to directly feed the feature vector of each case into neural networks. Inspired by the idea of performing feature embedding in \cite{9404857}, we introduce position embedding to represent heterogeneous features in a unified distributional space to tackle the above issues. Specifically, assume $\mathbf{X}:=\{\mathbf{x}_i\in\mathcal{X}\}_{i=1}^n$ as the input data of $n$ cases, and let $\mathbf{x}_i=[x_{i1},x_{i2}\cdots,x_{id}]^T\in \mathbb{R}^{d\times 1}$ denote the feature vector of the $i$-th case, i.e., $\mathbf{x}_i\in \mathbf{X}$. Note that all categorical features in the raw feature vector are converted to binary features by one-hot encoding. Given the $j$-th feature in the $i$-th case, i.e., $x_{ij}\in\mathbf{x}_i$ where $i\in \{1,\cdots,n\}$ and $j\in \{1,\cdots,d\}$, we have:
\begin{equation}
\label{eqn:feature_embedding}
    \mathbf{e}_{ij}=\left\{
        \begin{aligned}
            &\mathbf{0}, \ &x_{ij}=0\\
            &x_{ij}\mathbf{w}^p_j,\ &x_{ij}\neq 0
        \end{aligned}
        \right.
\end{equation}
where $\mathbf{e}_{ij}$ denotes the embedding vector for the feature $x_{ij}$, and $\mathbf{w}_j^p\in \mathbb{R}^{k_w\times 1}$ denote the embedding vector for the $j$-th position and $k_w$ denotes the embedding dimension. Here, we introduce a position embedding matrix $\mathbf{W}^p=\{\mathbf{w}_1^p,\mathbf{w}_2^p,\cdots,\mathbf{w}_d^p\}\in\mathbb{R}^{k_w\times d}$ to represent each feature position, where $\mathbf{w}_j$ corresponds to the $j$-th position in each feature vector. Accordingly, we have $\mathbf{E}_i=\{\mathbf{e}_{i1},\cdots,\mathbf{e}_{id}\}\in\mathbb{R}^{k_w\times d}$ as the feature embedding matrix for case $\mathbf{x}_i$.

According to Eq. (\ref{eqn:feature_embedding}), the embedding vector of each feature is the multiplication between its feature value and position embedding vector. Intuitively, the position embedding projects heterogeneous features onto the unified space $\mathbb{R}^{k_w}$ where it captures the heterogeneous feature couplings. In addition, since the embeddings of zero-valued (binary or numerical) features are constantly $\mathbf{0}\in \mathbb{R}^{k_w\times 1}$, the resultant large proportion of all-zero vectors in each feature embedding matrix $\mathbf{E}$ facilitates to filtrate zero-valued features and to alleviate the sparsity  in high-dimensional cases (see more analysis in Section \ref{sec:mfi}).

\subsection{Multiview Feature Interaction}
\label{sec:mfi}
To obtain the representation for each case, we propose a multiview feature interaction module to aggregate feature embeddings. Specifically, we introduce a view matrix $\mathbf{V}\in\mathbb{R}^{k_w\times k_v}$ ($k_v$ to denote the view dimension) and then resort to CANDECOMP/PARAFAC (CP) factorization to calculate the case representation below. Given case $x_i$, we have:
\begin{equation}
\label{eqn:case_embedding}
    \mathbf{z}_i=\sum_{p=1}^{d}\sum_{q=p+1}^{d}\left.\left[\sum_{j=1}^{k_w}\mathbf{E}_{i,j\cdot}\circ\mathbf{E}_{i,j\cdot}\circ\mathbf{V}_{j\cdot}\right]\right|_{pq\cdot}
\end{equation}
where $\mathbf{E}_{i,j\cdot}\in\mathbb{R}^{d\times 1}$ denotes the $j$-th row in the feature embedding matrix $\mathbf{E}_i$, $\mathbf{V}_j\cdot\in\mathbb{R}^{k_v\times 1}$ is the $j$-th row in the view matrix $\mathbf{V}$ and $\circ$ denotes the outer product. In Eq. (\ref{eqn:case_embedding}), we calculate the multiview correlations (a tensor) among all features in $\mathbf{x}_i$, i.e., the CP term in the brackets denoted as $\mathcal{C}\in\mathbb{R}^{d\times d\times k_v}$, and we then sum over the correlation matrix to each view\footnote{Since the tensor $\mathcal{C}$ is symmetric, the summation is performed upon the upper-right elements of the correlation matrix of each view.}, i.e., the first two dimensions of the tensor $\mathcal{C}$, to generate the representation vector $\mathbf{z}_i\in\mathbb{R}^{k_v\times 1}$ for case $\mathbf{x}_i$. For convenience, we denote the projection process of feature embedding and multiview feature interaction as $\Phi_E:\mathbf{X}\mapsto\mathbf{Z}$, where $\mathbf{Z}=\{\mathbf{z}_1,\mathbf{z}_2,\cdots,\mathbf{z}_n\}\in\mathbb{R}^{k_v\times n}$.

Let us expand Eq. (\ref{eqn:case_embedding}) and probe the computation of each element in the resultant $\mathbf{z}_i$ for the case $\mathbf{x}_i$:
\begin{equation}
\label{eqn:elements}
z_{ik}=\sum_{p=1}^{d}\sum_{q=p+1}^{d}\left.\left[\sum_{j=1}^{k_w}v_{jk}\mathbf{E}_{i,j\cdot}\circ\mathbf{E}_{i,j\cdot}\right]\right|_{pq}=\sum_{p=1}^{d}\sum_{q=p+1}^{d}\langle\mathbf{E}_{i,\cdot p},\mathbf{E}_{i,\cdot q}*\mathbf{V}_{\cdot k}\rangle=\sum_{p=1}^{d}\sum_{q=p+1}^{d}\langle\mathbf{e}_{ip},\mathbf{e}_{iq}*\mathbf{V}_{\cdot k}\rangle.
\end{equation}
where $\mathbf{E}_{i,\cdot p}\in\mathbb{R}^{k_w\times 1}$ and $\mathbf{V}_{\cdot k}\in\mathbb{R}^{k_w\times 1}$ are the $p$-th column (i.e., the embedding vector of the $p$-th feature $\mathbf{e}_{ip}$) and $k$-th column $\mathbf{V}$ respectively, and $\langle\cdot,\cdot\rangle$ denotes the inner product and $*$ denotes the element-wise product. 

From Eq. (\ref{eqn:elements}), we easily find that all-zero vectors (i.e., the embeddings of zero-valued features) are absorbed in the summation and do not count in calculating the embedding vector $\mathbf{z}_i$ for the case $\mathbf{x}_i$. This treatment  efficiently extracts informative features from high-dimensional sparse cases. In addition, the multiview feature interaction module calculates the pairwise (second-order) interaction between any two features by  inner product and introduces a view matrix to diversify the calculation of feature interactions. Intuitively, the module captures inter-feature couplings between (heterogeneous) features from multiple views referring to the learned view matrix $\mathbf{V}$, accordingly learning diversified feature couplings for high-level layers. In addition, the interaction module can capture first-order interactions by manipulating the feature vector of each case $\mathbf{x}$, specifically concatenating a constant value of $1$ with $\mathbf{x}$ and obtaining an extended feature vector $[1,\mathbf{x}]$. In summary, \textit{Feature Multiview Interaction} module has three advantages:
\begin{itemize}
	\item It learns feature interactions from multiple views to capture diverse inter-feature couplings between heterogeneous features, which is more effective than feed-forward neural networks in capturing intrinsic feature correlations~\cite{BeutelCJXLGC18,9404857}.
	\item It effectively filtrates zero-valued features and generates informative representation for each case, shielding the influence of the high dimensionality and unpredictable sparsity.
	\item It does not introduce extra parameters except for the view matrix and efficiently performs the calculation with time complexity of $O(d k_w k_v)$. We provide detailed analysis in Section \ref{sec:complexity_analysis}.
\end{itemize}

\subsection{Fully-connected and Hash Layers}
After obtaining the dense representation vectors for all cases, we feed case representation into full-connected layers and subsequently a hash layer, which learn high-level semantic representations and  generate binary hash codes respectively. Specifically, we feed the representation $\mathbf{z}_i$ of any case $\mathbf{x}_i$ into $l$ fully connected layers. Each layer learns a nonlinear mapping:
\begin{equation}
\mathbf{z}_i^{\ell}=\sigma^{\ell}(\mathbf{W}^{\ell}\mathbf{z}_i^{\ell - 1}+\mathbf{b}^{\ell})
\end{equation}
where $\mathbf{W}^{\ell}$ and $\mathbf{b}^{\ell}$ are the weight and bias parameters of the $\ell$-th layer, $\sigma^{\ell}$ is the corresponding activation function, and $\mathbf{z}_i^{\ell}$ denotes the $\ell$-th layer hidden representation of case $\mathbf{x}_i$ (note that $\mathbf{z}_i^0=\mathbf{z}_i$). For convenience, let us denote the $l$ fully connected layers as $\Phi_F: \mathbf{Z}\mapsto\mathbf{Z}^l$, where $\mathbf{Z}^l=\{\mathbf{z}_1^l,\mathbf{z}_2^l,\cdots,\mathbf{z}_n^l\}\in\mathbb{R}^{r\times n}$ and $r$ denotes the dimension of hash codes. We then obtain the hash code by feeding $l$-layer output $\mathbf{z}_i^l$ into the hash layer which contains a sign function. Formally, we have the hash code for the case $\mathbf{x}_i$: $\mathbf{y}_i=\sgn(\mathbf{z}_i^l)$ and $\mathbf{Y}=\{\mathbf{y}_1,\cdots,\mathbf{y}_n\}\in\{-1,1\}^{r\times n}$. Accordingly, the hash mapping function, i.e., $H_\phi:\mathbf{X}\to\mathbf{Y}$, is defined below:%
\begin{equation}
    \label{equ:hashfunction}
    H_\phi(\mathbf{X})=\sgn\left(\Phi_F(\Phi_E(\mathbf{X}))\right)=\mathbf{Y}
\end{equation}
where $H_\phi$ represents the hashing network shown in Fig. \ref{fig:network}, $\phi=\{\mathbf{V},\mathbf{W}^p,\mathbf{W}^1,\cdots,\mathbf{W}^l,\mathbf{b}^1,\cdots,\mathbf{b}^l\}$ denotes the network parameters and $\operatorname{sgn(\cdot)}$ denotes the elementwise sign function, which results in -1 if its input is negative, otherwise 1.

\subsection{Learning Objectives}
Generally, the Hamming distance $d_H(\cdot,\cdot)$ is utilized to measure the distance between hash codes, which can be achieved by inner product $\langle\cdot,\cdot\rangle$, i.e., $d_H(\cdot,\cdot)=\frac{1}{2}(r-\langle\cdot,\cdot\rangle)$. Accordingly, we apply the inner product to define the following likelihood function: given a pair of cases $\mathbf{x}_i,\mathbf{x}_j$ and their hash codes $\mathbf{y}_i,\mathbf{y}_j$, to estimate the similarity $s_{ij}$ of them, we have
\begin{equation}
P(s_{ij}|\hat{s}_{ij})=\left\{
        \begin{aligned}
            \sigma(\alpha\hat{s}_{ij}),\ s_{ij}=1\\
            1-\sigma(\alpha\hat{s}_{ij}),\ s_{ij}=0
        \end{aligned}
        \right.
\end{equation}
where $\hat{s}_{ij}=\langle \mathbf{y}_i,\mathbf{y}_j\rangle$ denotes the estimated similarity between hash codes, $\sigma$ is a Sigmoid function to scale the inner product into a distribution, and $\alpha\in (0,1]$ is a scaling hyperparameter to control the bandwidth of Sigmoid function. Smaller $\alpha$ gives rise to a smaller saturation zone, where the Sigmoid function has zero gradient. In addition, the larger $\hat{s}_{ij}$ is, the larger $P(s_{ij}=1|\hat{s}_{ij})$ will be, i.e., a larger similarity between hash codes $\mathbf{y}_i$ and $\mathbf{y}_j$ implies a higher probability of two cases $\mathbf{x}_i$ and $\mathbf{x}_j$ being similar. Accordingly, to achieve the objective  of preserving the similarity between cases in the hash space, we maximize the likelihood of all pairs of cases (in the training set):
\begin{equation}
    \prod_{s_{ij}\in\mathbf{S}}P(s_{ij}|\hat{s}_{ij})=\prod_{s_{ij}\in\mathbf{S}}\sigma(\alpha\hat{s}_{ij})^{s_{ij}}(1-\alpha\hat{s}_{ij})^{1-s_{ij}}
\end{equation}
Taking the negative logarithm of the likelihood, we have the objective function w.r.t. the cross entropy loss:
\begin{equation}
    \label{equ:similarity_loss}
    \begin{aligned}
        \min_{H_\phi}\mathcal{L}(\mathbf{Y},\mathbf{S})&=\min_{H}-\sum_{s_{ij}\in\mathbf{S}} \log P(s_{ij}|\hat{s}_{ij})\\
        &=\min_{H_\phi}-\sum_{s_{ij}\in\mathbf{S}}(\alpha s_{ij}\hat{s}_{ij}-\log(1+e^{\alpha \hat{s}_{ij}}))\\
		 &=\min_{H_\phi}-\alpha\mathbf{S}*(\mathbf{Y}^T\mathbf{Y})+\log(1+e^{\alpha\mathbf{Y}^T\mathbf{Y}})
    \end{aligned}
\end{equation}
where $\hat{s}_{ij}=\langle \mathbf{y}_i,\mathbf{y}_j\rangle=\langle H_\phi(\mathbf{x}_i),H_\phi(\mathbf{x}_j)\rangle$.



Due to the binary constraints in Eq. (\ref{equ:hashfunction}), we relax the hash code to the final output of the fully-connected layers, i.e., $\mathbf{y}_i\approx\mathbf{z}_i^l$, to approximate the non-differentiable $\sgn$ function. Therefore, we have the approximate mapping function: $\bar{H}_{\phi}(\mathbf{X})=\mathbf{Z}^l=\Phi_F(\Phi_E(\mathbf{X}))$. To squash the $l$ layer representation $\mathbf{Z}^l$ within $[-1,1]$, we encourage $\mathbf{Z}^l$ to be binary by utilizing the Sigmoid-like function: $\sigma^{l}(x)=2/(1+e^{x})-1$. However, the continuous relaxation will cause two important issues: 1) introducing uncontrollable quantization error when binarizing $\bar{H}_{\phi}(\mathbf{X})$ to $\mathbf{Y}$, and 2) raising approximation error by performing inner product on $\bar{H}_{\phi}(\mathbf{X})$ as the surrogate of $\mathbf{Y}$~\cite{ZhuL0C16}. To control the quantization error and approximation error, we introduce a quantization regularizer to minimize the difference between the hash codes and their relaxation surrogate shown below:
\begin{equation}
    \label{equ:upper_bound}
    \mathcal{R}=\Vert \mathbf{Z}^l-\mathbf{Y} \Vert _2=\Vert\bar H_{\phi}(\mathbf{X})- H_{\phi}(\mathbf{X})\Vert _2.
\end{equation}
\begin{Theorem}[Regularizer Upper Bound] The quantization regularizer on $H_{\phi}$ is upper bounded by $nd-\mathbf{tr}(\bar{H}_{\phi}(\mathbf{X})^T \bar{H}_{\phi}(\mathbf{X}))$, i.e., 
    \begin{equation}
        \Vert\bar H_{\phi}(\mathbf{X})- H_{\phi}(\mathbf{X})\Vert _2\leq nd-\mathbf{tr}(\bar{H}_{\phi}(\mathbf{X})^T \bar{H}_{\phi}(\mathbf{X})).
    \end{equation}
\end{Theorem}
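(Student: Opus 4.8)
The plan is to reduce the matrix inequality to a single scalar inequality applied entrywise. First I would collect two facts from the construction above: (i) $H_{\phi}(\mathbf{X})=\sgn(\bar H_{\phi}(\mathbf{X}))$ with $\sgn$ taken elementwise on $\bar H_{\phi}(\mathbf{X})=\mathbf{Z}^l$, and (ii) the last fully-connected layer applies the squashing activation $\sigma^{l}(x)=2/(1+e^{x})-1$, so every entry $z$ of $\mathbf{Z}^l$ satisfies $-1<z<1$, and in particular $z^{2}\le|z|$. I would read $\Vert\cdot\Vert_2$ as the (squared) Frobenius norm, consistent with $\mathcal{R}$ being used as a differentiable penalty, so that $\Vert\bar H_{\phi}(\mathbf{X})-H_{\phi}(\mathbf{X})\Vert_2=\sum_{a}(z_a-\sgn z_a)^2$ where $a$ ranges over the entries of the code matrix, and $\mathbf{tr}(\bar H_{\phi}(\mathbf{X})^T\bar H_{\phi}(\mathbf{X}))=\sum_a z_a^2$.

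Next I would carry out the scalar estimate. Fix one entry $z$ of $\mathbf{Z}^l$ and set $y=\sgn z$; then
\begin{equation}
(z-y)^2 = z^2 - 2zy + 1 = z^2 - 2|z| + 1 \le z^2 - 2z^2 + 1 = 1 - z^2,
\end{equation}
where the middle equality uses $zy=z\,\sgn z=|z|$ together with $y^2=1$, and the inequality uses $z^{2}\le|z|$, which holds because $\sigma^{l}$ confines $z$ to $(-1,1)$. Summing this over all entries, the left-hand side collects into the regularizer $\Vert\bar H_{\phi}(\mathbf{X})-H_{\phi}(\mathbf{X})\Vert_2$, while the right-hand side collects into (number of code entries)$\,-\sum_a z_a^2 = nd - \mathbf{tr}(\bar H_{\phi}(\mathbf{X})^T\bar H_{\phi}(\mathbf{X}))$, which is exactly the claimed bound; here the constant is $n$ times the code length, i.e. the count of entries of $\mathbf{Y}$.

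The argument carries no real obstacle — it is a one-line entrywise comparison — so the only thing I would be careful to spell out is the dependence on the final activation. The key inequality $z^{2}\le|z|$, and hence the whole bound, fails as soon as any coordinate of the relaxed code leaves $[-1,1]$: the right-hand side $1-z^2$ can become negative while the left-hand side $(z-\sgn z)^2$ stays nonnegative. I would therefore state explicitly, before the estimate, both which matrix norm $\mathcal{R}$ denotes and that $\bar H_{\phi}(\mathbf{X})\in(-1,1)^{r\times n}$ by virtue of $\sigma^{l}$, so that the entrywise step is legitimate; the remainder is just linearity of the trace and of the Frobenius norm over entries.
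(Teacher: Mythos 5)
Your proof is correct and follows essentially the same route as the paper's: both reduce the bound to the entrywise scalar inequality $(|z|-1)^2\le 1-z^2$ (the paper via $\lVert\,|\bar H_{\phi}(\mathbf{X}_i)|-\mathbf{1}\,\rVert_2\le\lVert\bar H_{\phi}(\mathbf{X}_i)^T\bar H_{\phi}(\mathbf{X}_i)-d\rVert_1$, you via expanding $(z-\sgn z)^2$) and then sum over entries. Your write-up is in fact more careful than the paper's in making explicit that the inequality hinges on $|z|\le 1$, guaranteed by the final squashing activation, and that $\Vert\cdot\Vert_2$ must be read as the squared Frobenius norm.
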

\begin{proof}
Since we have $H_{\phi}(\mathbf{X})=\sgn(\bar{H}_{\phi}(\mathbf{X}))$, $H(\mathbf{X})$ and $\bar{H}_{\phi}(\mathbf{X})$ have the same sign, yielding that
    \begin{equation}
        \begin{aligned}
            \Vert\bar H_{\phi}(\mathbf{X})- H_{\phi}(\mathbf{X})\Vert _2&=\lVert|\bar{H}_{\phi}(\mathbf{X})|-|H_{\phi}(\mathbf{X})|\rVert_2\\
            &=\sum_{i=1}^n\lVert|\bar H_{\phi}(\mathbf{X}_i)|-\mathbf{1}\rVert_2\\
            &\leq \sum_{i=1}^n\lVert \bar H_{\phi}(\mathbf{X}_i)^T \bar H_{\phi}(\mathbf{X}_i)-d\rVert_1\\
            &=nd-\mathbf{tr}(\bar H_{\phi}(\mathbf{X})^T \bar H_{\phi}(\mathbf{X}))
        \end{aligned}
    \end{equation}
where $\mathbf{tr}$ is the trace function. Proved.
\end{proof}

Accordingly, we approximate hash codes $\mathbf{Y}$ with $\bar{H}_{\phi}(\mathbf{X})$ in Eq. (\ref{equ:similarity_loss}) and combine the loss function with the upper bound of the quantization regularizer, achieving the final learning objective for the proposed hashing network below:
\begin{equation}
    \label{equ:objective}
		\min_{\bar H_{\phi}}-\alpha\mathbf{S}*\Psi_{\bar H_{\phi}}+\log(1+e^{\alpha\Psi_{\bar H_{\phi}}})-\lambda\mathbf{tr}(\Psi_{\bar H_{\phi}})
\end{equation}
where $\Psi_{\bar H_{\phi}}=\bar H_{\phi}(\mathbf{X})^T \bar H_{\phi}(\mathbf{X})$, $\lambda\in (0,1)$ is a hyperparameter to balance the weights of similarity loss $\mathcal{L}$ and the quantization regularizer $\mathcal{R}$. By minimizing the learning objective in Eq. (\ref{equ:objective}), we can learn the model parameters of the proposed hashing network, i.e., $\bar H=\{\mathbf{W}^p,\mathbf{W}^1,\cdots,\mathbf{W}^l,\mathbf{b}^1,\cdots,\mathbf{b}^l\}$. Besides, hyperparameters $\{k_w,k_v,l,\alpha,\lambda\}$ are selected per empirical experiments and grid search. After the hashing network is well trained, we next introduce the network into case-based reasoning for effective storage and efficient case retrieval. In the stage of case retention, the network is adaptively updated during retaining new data (cases). For a better understanding of the CBR process, we place the detailed introduction to the adaptive update mechanisms in Section \ref{sec:case-retention}.

\section{Hashing-enabled Case-based Reasoning}

Leveraging the proposed hashing network, we propose a Hashing-enabled Case-Based Reasoning (HeCBR) model, in which the hash network is introduced to transform high-dimensional and heterogeneous cases into low-rank binary hash codes. The hash codes are similarity-preserving compact representations for cases, which provides effective and efficient similarity retrieval. As shown in Fig. \ref{fig:ahcbc}, next, we introduce each phase of the problem-solving process of HeCBR in detail.

\subsection{Case Representation}
A good case representation not only provides organization and indexing of cases for efficient case retrieval but also facilitates accurate similarity measurement. To achieve this, we adopt our proposed hash network to transform cases into compact hash codes. Based on hash codes, we construct a hash table (a form of an inverted index) to organize and index all cases. The hash table consists of buckets with each bucket indexed by a hash code. For example, if the dimension of hash code $r$ is $8$, then there are up to $2^8$ buckets with each bucket indexed by an $8$ bits hash code. Each case $\mathbf{x}_i$ is then placed into a bucket $H_{\phi}(\mathbf{x}_i)$. Due to the learned hash codes being similarity-preserving, the hash approach, different from the conventional hashing algorithm avoiding mapping two samples into the same bucket, essentially aims to maximize the probability of collision between similar cases and meanwhile minimize the probability of collision between dissimilar cases. Before constructing the hash table, we need to train the proposed hashing network $H_{\phi}$ based on the cases in the case base, where the input feature vectors of cases are given or extracted from the descriptions of the cases and the required similarity relations $\mathbf{S}$ are provided or calculated based on the solutions (labels) of the cases.

\subsection{Case Retrieval}
When a new case comes, we need to retrieve the most similar cases to the new case and leverage the solutions of the retrieved cases to solve the new case. Specifically, to search for the most similar cases to a new case $\mathbf{x}_c$, we first generate the hash code $\mathbf{y}_c$ for the case, i.e., $\mathbf{y}_c=H(\mathbf{x}_c)$. Then we retrieve the cases lying in the bucket indexed by hash code $\mathbf{y}_c$ and treat the cases as the candidates of the most similar cases to $\mathbf{x}_c$. Usually, this is followed by a reranking step: reranking the retrieved candidates according to the true distances computed using the original features of cases and attaining top-$N$ most similar cases. Note that if no cases exist in bucket $\mathbf{y}_c$, we can retrieve the cases in the nearest buckets (for example those within 2-hamming distance) with bucket $\mathbf{y}_c$ instead, and the indices of the nearest buckets can be obtained by modifying each bit in the hash code $\mathbf{b}_c$ into its alternative value in turn. Since case representation and case retention can be performed offline, case retrieval becomes the most time-consuming phase of CBR and is important to the applicability of CBR methods. Due to the benefits of hash table lookup, similar case candidates can be attained with time complexity of $O(1)$. The remaining time-consuming steps are to generate the hash code for the new case and rerank the candidate cases, which can also be completed in a short time period. We discuss the time complexity in detail in Section \ref{sec:complexity_analysis}.

\subsection{Case Reuse and Case Revision}
During the phase of case reuse, we utilize the retrieved top-$N$ most similar cases to suggest solutions for new cases. A general approach is to design a voting function and suggest the solution with the most votes for each new case. For convenience, we adopt a simple majority voting function where each case denotes one vote and thus the solution supported by most cases will be the optimal one. After suggesting solutions for new cases, we check the actual solutions for the new cases and revise the suggestion if the suggested solutions are different from the actual solutions. Next, the solved cases will be retained in the case base for future problem-solving.

\subsection{Case Retention}
\label{sec:case-retention}
In the phase of case retention, all solved cases will be retained and used to update the hash function (network) and the corresponding hash codes. During the update of the hash function $H_{phi}$, we need to consider whether or not to update the hash function at all. Accordingly, we  decide how to correct the hash function. Inspired by~\cite{CakirS15}, we adopt the hinge loss function to define an offset function:
\begin{equation}
	\label{eqn:offset}
	\mathcal{O}=\left\{
    \begin{aligned}
        \max(0,r\beta-\hat{s}_{ij})&, \ s_{ij}=1\\
		\max(0,r\beta+\hat{s}_{ij})&, \ s_{ij}=0
    \end{aligned}
\right.
\end{equation}
where $\beta\in[0,1]$ is a hyperparameter designating the extent to which the hash function may produce a loss. From Eq. (\ref{eqn:offset}), we can see the larger $\beta$ is, the larger the loss $l_{ij}$ is, indicating that more rigid similarity is preserved and more information is retained from the training pair. Imposing the offset function on the similarity loss function, we then obtain the learning objective $l_{ij}$ for adaptive update as follows:
\begin{equation}
	l_{ij}=\left\{
    \begin{aligned}
        &\max(0,r\beta-\hat{s}_{ij})\log(1+e^{-\alpha\hat{s}_{ij}}), \ s_{ij}=1\\
		&\max(0,r\beta+\hat{s}_{ij})\log(1+e^{\alpha\hat{s}_{ij}}), \ s_{ij}=0
    \end{aligned}
\right.
\end{equation}
where we do not consider the quantization regularizer for simplicity since higher $\beta$ acts as an alternative to the quantization regularizer. When $l_{ij}=0$, we do not perform any update, which effectively cuts invalid update that has small gradients. In addition, since updating the hash codes (table) is usually time-consuming, we perform the update every $n_u=100$ new cases to avoid frequently update the hash function and hash codes. The treatment also avoids the update overfitting to certain new cases. The update process can be done offline to improve efficiency.

\subsection{Complexity Analysis}
\label{sec:complexity_analysis}

\subsubsection{Time complexity of multiview feature interaction} Recalling Eq. (\ref{eqn:elements}), we reformulate the calculate for each element in $\mathbf{z}_i$ for $i$-th case. For convenience, we omit the subscript $i$ in the following.
\begin{equation}
\begin{aligned}
z_{k}&=\sum_{p=1}^{d}\sum_{q=p+1}^{d}\langle\mathbf{e}_{p},\mathbf{e}_{q}*\mathbf{V}_{\cdot k}\rangle\\
&=\frac{1}{2}\sum_{p=1}^{d}\sum_{q=1}^{d}\langle\mathbf{e}_{p},\mathbf{e}_{q}*\mathbf{V}_{\cdot k}\rangle-\frac{1}{2}\sum_{p=1}^{d}\langle\mathbf{e}_{p},\mathbf{e}_{p}*\mathbf{V}_{\cdot k}\rangle\\
&=\frac{1}{2}\left(\sum_{p=1}^{d}\sum_{q=1}^{d}\sum_{m=1}^{k_w} e_{mp} e_{mq} v_{mk}-\sum_{p=1}^{d}\sum_{m=1}^{k_w} e_{mp}^2 v_{mk}\right)\\
&=\frac{1}{2}\sum_{m=1}^{k_w}\left(\left(\sum_{p=1}^{d}e_{mp}\right)^2 v_{mk} - \sum_{p=1}^{d}e_{mp}^2 v_{mk}\right)
\end{aligned}\nonumber
\end{equation}
From the above equations, we know the computation complexity of $z_{k}$ is in $O(d k_w)$, and the complexity of calculating the embedding vector $\mathbf{z}$ for each case is $O(d k_w k_v)$. Besides, since only non-zero features kick in, we only need to sum over all non-zero features during the calculation. Thus, the complexity for calculating $\mathbf{z}$ is reduced to $O(d_n k_w k_v)$ where $d_n\ll d$ denotes the number of non-zero features.

\subsubsection{Time complexity of HeCBR}
The time complexity of HeCBR mainly comes from the phase of case retrieval in practical applications, since the other two time-consuming phases, i.e., case representation and case retention, can be done offline. In case retrieval, the time complexity consists of two parts: generating the hash code for each new case and reranking all the retrieved candidate cases, since the similar case candidates can be obtained with a time complexity of $O(1)$ in the hash table. The approximate time for generating hash code is $O(d_n k_w+d_n k_w k_v + lk_v^2+r)$ in which the four parts correspond to the time need for feature embedding, multiview feature interaction, fully-connect layers and the hash layer in the hash network respectively. Assume we obtain $n_r$ candidate cases, the time for reranking the candidates consists of $O(n_r d)$ for calculating the distances using the original features of cases and $O(n_r\log n_r)$ for ranking the distances if using the quick sort algorithm. Compared to the step of reranking, we can generate the hash codes for a batch of cases in parallel and even accelerate the calculation by a GPU processor. Thus, the time for generating hash codes can be ignored. Accordingly, the approximate time complexity of HeCBR is as follows: $O(n_r d+n_r\log n_r)$. Since we have $n_r\ll n$ which is usually the case and the average number of cases in each bucket is $n/2^r(\approx n_r)$, the complexity of HeCBR is far less than those CBR methods which traverse the case base, i.e., $n_r d+n_r\log n_r\ll n d+n\log n$. In addition, comparing to other clustering-based CBR methods, HeCBR obtains the candidate cases by hash table lookup with a time complexity $O(1)$ and also improves the retrieval efficiency.

\section{Experiments and Evaluation}
In this section, we conduct extensive experiments to investigate the following research problems:
\begin{itemize}
\item[Q1] How does HeCBR perform in terms of classification?
\item[Q2] How does HeCBR perform in terms of the similarity retrieval task?
\item[Q3] How robustly does HeCBR perform with different hyperparameters?
\item[Q4] How does the adaptive update in case retention affect the performance of HeCBR?
\end{itemize}

\begin{table}[!ht]
  \centering
  \caption{Data characteristics of eight high-dimensional sparse datasets.}
    \begin{tabular}{l|l|l|l|l|l}
    \toprule
    Dataset & Abbr. & \#instances & \#dimension & \#class & sparsity \\
    \midrule
    Internet Advertisements & ADV    & 3279  & 1557  & 2     & 0.01 \\
    Protein & PT    & 17766 & 357   & 3     & 0.29 \\
    Adult & ADT   & 45222 & 118   & 2     & 0.35 \\
    Dota2 & Dota  & 102944 & 172   & 2     & 0.08 \\
    Character Font Images & Font  & 391651 & 896   & 142   & 0.25 \\
    Movie Tweetings & MT    & 773442 & 90191 & 10    & 5.00E-05 \\
    Criteo & CT    & 1000000 & 199   & 2     & 0.2 \\
    MovieLen-1M & ML1M  & 1000209 & 9794  & 5     & 7.00E-04 \\
    \bottomrule
    \end{tabular}%
  \label{tab:data_stats}%
\end{table}%

\subsection{Experimental Settings}

\subsubsection{Datasets}
We verify the effectiveness of our proposed HeCBR on eight real-world high-dimensional heterogeneous datasets, which contains binary classification and multiclass classification problems and covers various domains such as transaction classification, movie rating prediction, font image classification, and protein classification: (1) Internet Advertisements (short for ADV) collects a set of possible advertisements on Internet pages, and the task is to classify an image into an advertisement or not\footnote{https://archive.ics.uci.edu/ml/datasets/Internet+Advertisements}. 
(2) Protein (PT), a multiclass classification dataset, contains protein information for studying the structure of proteins\footnote{https://www.csie.ntu.edu.tw/~cjlin/libsvmtools/datasets/}. (3) Adult (ADT) collects $45,222$ census records extracted from 1994 Census database, which contains both categorical and numeric attribute for classification task\footnote{https://archive.ics.uci.edu/ml/datasets/Adult}. (4) Dota2 (Dota) collects the battle formation from a popular computer game Data2 with two teams of 5 players, and the task is to predict which team wins. The above three datasets are collected from the UCI machine learning repository\footnote{https://archive.ics.uci.edu/ml/index.php}. 
(5) Character Font Images (Font) consist of images from $153$ character fonts and record a variety of font description, we select $142$ of $153$ character fonts in the experiments\footnote{https://archive.ics.uci.edu/ml/datasets/Character+Font+Images}. 
(6) Movie Tweetsing (MT) is a dataset consisting of ratings on movies that were contained in well-structured tweets on Twitter\footnote{http://github.com/sidooms/MovieTweetings}. 
(7) Criteo (CT) includes $45$ million user click records and contains both continous and categorical features\footnote{https://www.kaggle.com/c/criteo-display-ad-challenge}. Considering the computation burden, we randomly select $1$ million records from the Criteo dataset for evaluation in the experiments. 
(8) MovieLen-1m (ML1M) collects about $1$ million  anonymous ratings of approximately $3,900$ movies made by $6,040$ MovieLens users\footnote{https://grouplens.org/datasets/movielens/}.

The detailed characteristics of the eight datasets are reported in Table \ref{tab:data_stats}. The table shows the numbers of instances, features, classes and sparsity of each dataset where $sparsity$ reflects the proportion of non-zero features. Note that we convert the categorical features into binary features via one-hot encoding in each dataset. Table $\ref{tab:data_stats}$ reports the data characteristics of converted datasets.

\subsubsection{Baselines}
To investigate the performance of HeCBR, we first compare HeCBR with hashing-based methods including four state-of-the-art LSH-based CBR methods and four dimension reduction and representation methods as follows.

\begin{itemize}
	\item LSH: It incorporates the original locality-sensitive hash algorithm~\cite{IndykM98} to map the cases into binary hash codes for efficient case retrieval.
	\item WTAH: It introduces WTAHash, a sparse embedding method, to CBR that transforms the input into binary codes guaranteeing Hamming distance in the resultant space closely correlated with rank similarity measures.
	\item FlyH: It applies FlyHash~\cite{flyhash}, a hash algorithm inspired by fruit flies' olfactory circuits, to improve the performance of case retrieval on high-dimensional data.
	\item PMH: It uses PM-LSH~\cite{pmlsh}, a latest fast and accurate LSH framework based on a simple yet effective PM-tree, to improve the performance of computing $c$-ANN queries on high-dimensional data.
	\item ITQ: The method utilizes ITQ~\cite{GongLGP13}, which adopts PCA for dimension reduction and quantization, to generate binary representation for case retrieval.
	\item NFM: The method involves NFM~\cite{0001C17}, a deep neural factorization machine method employing feature embedding for handling high-dimensionality and sparsity, to extract case representation.
	\item SVAE: The method adopts SVAE~\cite{KrishnanLH18}, a sparse variational autoencoder specified to address high-dimensional and sparse data, to represent cases.
    \item M2V: The method introduce Mix2Vec~\cite{ZhuZCA20}, a state-of-the-art unsupervised mixed data representation based on mixed feature embedding, to obtain high-dimensional and heterogeneous case representation.
\end{itemize}

Note that NFM, SVAE, and M2V generate dense low-dimensional data representation rather than hash codes. To guarantee a fair comparison, we binarize the dense data representation to generate hash codes, where we add a new similarity-preserving objective under a joint-training manner for unsupervised representation methods, i.e., SVAE and M2V. Obviously, LSH, WTAH, FlyH, and PMH denotes the CBR baseline enabled by data-independent hash methods, while the other four denote competitors equipped with data-dependent (deep) hash methods. In addition, we compare HeCBR with the state-of-the-art CBR methods to investigate CBR performance in terms of retrieval efficiency.

\begin{itemize}
	\item SNCBR~\cite{PetrovicMS11}: It utilizes a heuristic simulated annealing algorithm to optimize the weight allocations in similarity calculation.
	\item GACBR~\cite{GuLZ17}: The method adopts a genetic algorithm to optimize the feature weights to improve similarity calculation.
	\item MCCBR~\cite{YanSG14}: The method obtains more rational weight allocations by the predefined evolution and communication rules and a regional sub-algorithm based on SA.
	\item ANNCBR~\cite{BiswasSinha-524}: It trains a neural network by predicting classification labels and treats connection weights as the corresponding attribute weights.
	\item HCBR~\cite{ZhangSNC19}: A state-of-the-art CBR model introduces conceptual clustering to capture structural relations among cases and incorporates the relations for calculating structural similarity.
\end{itemize}

These state-of-the-art methods are deliberately chosen for the following considerations: 1) we introduce state-of-the-art LSH into CBR methods since prior hash-based CBR models generally adopt LSH to improve retrieval efficiency; 2) we compare HeCBR with dimension reduction and representation methods to verify the superiority of HeCBR in addressing high-dimensionality and heterogeneity issues; 3) we choose the state-of-the-art CBR methods to investigate the retrieval efficiency and CBR performance of HeCBR over traditional CBR methods. To our best knowledge, few deep models are incorporated to improve the performance of CBR, let alone deep hash models. Therefore, we do not compare HeCBR with the state-of-the-art deep hashing methods.

\subsubsection{Evaluation Measures}
In the experiments, we perform 5-fold cross-validation and report the average evaluation results in the experiments. Specifically, we employ accuracy and AUC (area under the ROC curve) to evaluate classification performance. For multi-class problems, AUC is calculated below:
\begin{equation}
AUC=\frac{2}{|L|\times(|L|-1)}\sum_{i<j}\frac{A_{ij}+A_{ji}}{2}
\end{equation}
where $|L|$ denotes the number of class labels, and $A_{ij}$ and $A_{ji}$ are the AUC values calculated by considering only cases from classes $i$ and $j$. To evaluate the retrieval performance, we also adopt mean average precision (MAP) and precision@K (Prec@K). Specifically, given a top-$N$ ranked item set $\hat{R}_N$ and the target ground-truth item set $R$, Prec@$N$ is calculated as follows:
\begin{equation}
    Prec@N=\frac{|R\cap \hat{R}_N|}{N}.
\end{equation}
MAP@N is calculated via the mean of the average precision (AP@N) on all cases, and AP@N is defined by:
\begin{equation}
    AP@N=\frac{\sum_{i=1}^{N}Prec@i\times rel(i)}{\min(|R|,N)},
\end{equation}
where $rel(i)$ equals $1$ if $i\in R$, otherwise $0$.

We perform a grid search of the number of hashtables over $\{4,8,16,32,64\}$ and the bucket width over $\{5,10,15,20,25\}$ on validation sets to find the optimal configuration for the four LSH-based CBR baselines. In addition, we adopt the parameter settings recommended by the authors for the other comparative baselines. For our proposed HeCBR, we tune the embedding dimension $k_w$ and view dimension $k_v$ over $\{15,32,64,126,256\}$ by a grid search and adopt $k_w=64$, $k_v=64$, $r=36$ and $l=3$ for the fully-connected layers $\Phi_F$, i.e., the shape of dimension in $\Phi_F$ is fixed as $64-128-128-r$ if not specified. In addition, we perform a grid search over $\alpha\in\{0.2,0.4,0.6,0.8\}$ and $\lambda\in\{0,0.2,0.4,0.6,0.8\}$ with a step of $0.2$, to obtain the best results, and fixed $\beta=0.5$. To facilitate fair comparison, we adopt the same parameter settings for the variants of HeCBR. In addition, for all comparative methods, we select $N=10$ most similar cases to suggest class labels and report the best results if not specified.

\subsection{Classification Evaluation (Q1)}
In this section, we investigate the performance of HeCBR in terms of case-based classification to verify the advance of HeCBR in handling high-dimensionality and heterogeneity.

\begin{table}[!t]
  \centering
  
  \caption{Comparison of classification performance in terms of accuracy and the area under the ROC curve (AUC). The results are obtained with 36-bit binary codes and top-$10$ most similar cases.}
  \scalebox{0.95}{
    \begin{tabular}{l|ccccc|ccc|cc}
    \toprule
    \multicolumn{11}{c}{Accuracy} \\
    \toprule
    Dataset & LSH   & WATH  & FlyH  & PMH   & ITQ   & NFM   & SVAE  & M2V   & HeCBR$\dagger$ & HeCBR \\
    \midrule
    ADV   & 0.7975 & 0.7996 & 0.7966 & 0.8261 & \underline{0.9258} & 0.8589 & 0.8766 & 0.8522 & 0.9484 & \textbf{0.967} \\
    PT    & 0.385 & 0.3875 & 0.3916 & 0.4136 & \underline{\textbf{0.5054}} & 0.4189 & 0.4408 & 0.4182 & 0.4807 & 0.505 \\
    ADT   & 0.7102 & 0.6793 & 0.7223 & 0.7467 & 0.7573 & 0.7621 & 0.7631 & \underline{0.7641} & 0.8004 & \textbf{0.8129} \\
    Dota  & 0.4813 & 0.4956 & 0.4927 & 0.5011 & 0.5015 & 0.4993 & 0.5045 & \underline{0.5132} & 0.5315 & \textbf{0.5446} \\
    Font  & 0.4203 & 0.4171 & 0.4207 & 0.4234 & \underline{\textbf{0.5083}} & 0.4598 & 0.4721 & 0.4807 & 0.4941 & 0.5047 \\
    MT    & 0.2005 & 0.1858 & 0.1907 & 0.2031 & 0.1917 & 0.2038 & \underline{0.2087} & 0.2026 & 0.2128 & \textbf{0.2172} \\
    CT    & 0.7233 & 0.7192 & 0.7274 & 0.7342 & 0.7352 & 0.7292 & 0.7356 & \underline{0.7384} & 0.7413 & \textbf{0.7445} \\
    ML1M  & 0.2768 & 0.2822 & 0.2785 & 0.2821 & 0.2903 & 0.291 & 0.2933 & \underline{0.2982} & 0.3089 & \textbf{0.3168} \\
    \midrule
    Avg.R & 8.75 & 8.875  & 8.5 & 6.625 & 4.25 & 5.5   & 4.625     & 4.375  & 2.25  & \textbf{1.25} \\
    \bottomrule
    \multicolumn{11}{c}{AUC} \\
    \toprule
    Dataset & LSH   & WATH  & FlyH  & PMH   & ITQ   & NFM   & SVAE  & M2V   & HeCBR$\dagger$ & HeCBR \\
    \midrule
    ADV   & 0.578 & 0.5832 & 0.5724 & 0.6032 & \underline{0.8806} & 0.6288 & 0.7013 & 0.6877 & 0.907 & \textbf{0.9162} \\
    PT    & 0.5022 & 0.5167 & 0.5153 & 0.5233 & \underline{\textbf{0.6512}} & 0.5275 & 0.5382 & 0.5276 & 0.6328 & 0.6465 \\
    ADT   & 0.6427 & 0.6198 & 0.6832 & 0.6853 & 0.6994 & 0.7213 & 0.7412 & \underline{0.7447} & 0.8273 & \textbf{0.8347} \\
    Dota  & 0.5004 & 0.5061 & 0.5033 & \underline{0.5126} & 0.5087 & 0.5016 & 0.5041 & 0.507 & 0.5444 & \textbf{0.555} \\
    Font  & 0.5177 & 0.509 & 0.511 & 0.5144 & \underline{0.6126} & 0.5391 & 0.5402 & 0.5652 & 0.6017 & \textbf{0.6163} \\
    MT    & 0.5023 & 0.5011 & 0.5017 & 0.5033 & 0.5031 & 0.5115 & \underline{0.5186} & 0.5163 & 0.5215 & \textbf{0.5371} \\
    CT    & 0.5248 & 0.5248 & 0.5248 & 0.5248 & 0.5248 & \underline{0.5515} & 0.5492 & 0.5368 & 0.5763 & \textbf{0.5908} \\
    ML1M  & 0.5003 & 0.5005 & 0.5003 & 0.5103 & 0.5132 & 0.5189 & 0.5147 & \underline{0.5219} & 0.5546 & \textbf{0.5679} \\
    \midrule
    Avg.R & 8.6875 & 8.375   & 8.6875 & 6.65 & 4.375 & 5.5   & 4.5   & 4.875  & 2.25  & \textbf{1.125} \\
    \bottomrule
    \end{tabular}
    }%
  \label{tab:hash_comparison}%
\end{table}%

\subsubsection{Classification Performance}
To verify the effectiveness of HeCBR, we compare it with several state-of-the-art LSH-based CBR methods and representation learning methods in terms of the case-based classification task\footnote{Case-based classification is a common task in CBR and is convenient to evaluate the performance of CBR. To address a classification task, we adopt the majority voting to suggest the class label with the most votes (i.e., the label with the most number of supported cases) in top-$N$ most similar ones.}. The performance of all methods in terms of accuracy and AUC are reported in Table \ref{tab:hash_comparison} where the best results in each row are highlighted in bold and the best baseline method is underlined for each dataset. Avg.R denotes the average rank of each method over all the datasets, and HeCBR$\dagger$ denotes the variant of HeCBC which does not retain solved cases to update hash functions and hash codes. Table \ref{tab:hash_comparison} enables the following key observations.

\begin{itemize}
    \item First, compared with the baselines, HeCBR$\dagger$ and HeCBR achieve the best classification performance and rank top-2 in terms of the average rank of accuracy and AUC on all datasets. Especially, HeCBR$\dagger$ and HeCBR significantly improve accuracy by about $3.6\%-6.4\%$ and AUC by more than $6.2\%$ over the best baselines on Dota, ML1M, and ADT and obtain a desirable improvement of classification performance on the other datasets except for datasets PT and Font. All the results report that HeCBR$\dagger$ and HeCBR outperform the baselines in terms of the classification task.
    \item Second, LSH-enabled baselines perform much worse than the other comparative methods on most datasets, which verifies the superiority of data-dependent hashing methods over the data-independent ones in capturing data-specific features and addressing complex data issues, e.g., data heterogeneity. In addition, we observe the average accuracy and ACU of HeCBR$\dagger$ and HeCBR improve respectively by more than $9\%$ and $17\%$ over the LSH-enabled baselines. The results show that HeCBR outperforms state-of-the-art hash-enabled CBR methods which generally adopt hash methods from the LSH family to improve case retrieval efficiency.
    \item Third, comparing HeCBR$\dagger$ with the state-of-the-art data-dependent hash baselines, HeCBR$\dagger$ achieves higher average accuracy and AUC ranks on all datasets, indicating much better and more robust classification performance. Specifically, HeCBR$\dagger$ improves accuracy by more than $3.5\%$ and AUC by more than $6.3\%$ over the baselines on Dota, ML1M, and ADT. The results indicate that our proposed adaptive hashing network, especially the proposed Multiview Feature Interaction, is more effective than the baselines in handling high-dimensionality and heterogeneity and representing complex cases.
    \item Fourth, HeCBR$\dagger$ performs worse than ITQ on datasets PT and Font, which is attributable that high-dimensional datasets, i.e., PT and Font (image), have a large proportion of numeric features where PCA performs better than the feature representation methods (NFM, M2V and HeCBR) and reconstruction-based representation method (SVAE) to obtain effective and concise case representation under the high-dimensionality setting.
    \item \item Finally, HeCBR consistently performs better than HeCBR$\dagger$ and achieves obvious better accuracy and AUC than HeCBR$\dagger$ on all datasets. The results reflect that the incrementally retained solved cases to update hash functions is beneficial to retaining new knowledge for future problem-solving and our proposed update mechanism is effective in adaptively updating hash functions and hash codes.
\end{itemize}

\subsubsection{Ablation Study}
To investigate the effectiveness of the proposed feature embedding and multiview feature interaction, we introduce three variants (denoted as \textit{max}, \textit{concat} and \textit{plain}) of HeCBR for ablation study which replaces the proposed feature embedding or multiview feature interaction in HeCBR with specific designs:
\begin{itemize}
    \item \textit{max}: The variant performs the max pooling upon the feature embedding matrix $\mathbf{E}$, i.e., $max:\mathbf{E}\to\{\max(\mathbf{E}_{1\cdot}),\cdots,\max(\mathbf{E}_{k_w\cdot})\}\in\mathbb{R}^{k_w\times 1}$.
    \item \textit{concat}: The variant simply concatenates all feature embedding vectors, i.e., $concat:\mathbf{E}\to\operatorname{concat}(\{\mathbf{e}_i,\cdots,\mathbf{e}_d\})\in\mathbb{R}^{k_w\times d}$.
    \item \textit{plain}: The variant adopts a fully-connected layer to transform an original case vector (i.e., $\mathbf{x}$) to a $k_w$-sized vector, i.e., $plain:\mathbf{x}\to\mathbf{w}\mathbf{x}+b\in\mathbb{R}^{k_w\times 1}$.
\end{itemize}

Let's denote HeCBR as \textit{interaction} to indicate that HeCBR calculates the multiview feature interactions based on the feature embeddings.
The experimental results comparing the above four methods on the eight datasets are shown in Fig.\ref{fig:ablation_results}, where we perform the comparative methods with $k_w=64$ and different hash code dimensions $r\in\{12,24,36,48\}$ and show the corresponding hyperparmater settings on $\lambda$ and $\alpha$. From the results, we obtain the following observations:
\begin{itemize}
    \item Compared with the variants, \textit{interaction} achieves better performance in terms of accuracy, especially on ADV, PT, Dota, Font, and MT, and catches up with the variants as the code dimension increases on ADT, CT and ML1M. The results show the better superiority and stability of \textit{interaction}, i.e., HeCBR, under different data characteristics.
    
    \item Specifically, \textit{interaction} outperforms the variants \textit{max} and \textit{concat} on all datasets except for ADT and CT, which demonstrates the effectiveness of the proposed multiview feature interaction in capturing feature couplings. The variant \textit{max} downsamples sharpest features, while \textit{concat} retains all features. Their superiority on ADT and CT is attributed that they are suitable for dense data.
    
    \item Also, \textit{plain} performs worse than \textit{interaction} on most of the datasets, which further indicates the contribution of the proposed feature embedding for learning heterogeneity. In addition, \textit{plain} shows competitive and even better performance than \textit{max} and \textit{concat} on ADV, PT, Font and MT. This is reasonable since \textit{plain} performs direct transformation on raw features and learns effective representations from datasets with more numeric attributes, e.g., ADV, PT and Font.
\end{itemize}

In summary, \textit{interaction} achieves much better performance, and the accuracy of variants varies largely with different datasets. The results demonstrate the effectiveness of our proposed feature embedding and multiview feature interaction in handling high-dimensionality and heterogeneity issues.

\begin{figure*}[!t]
    \centering
    \subfigure[ADV $(\lambda=0.2,\alpha=0.6)$]{
    \begin{minipage}[t]{0.24\linewidth}
    \centering
    \includegraphics[width=\linewidth]{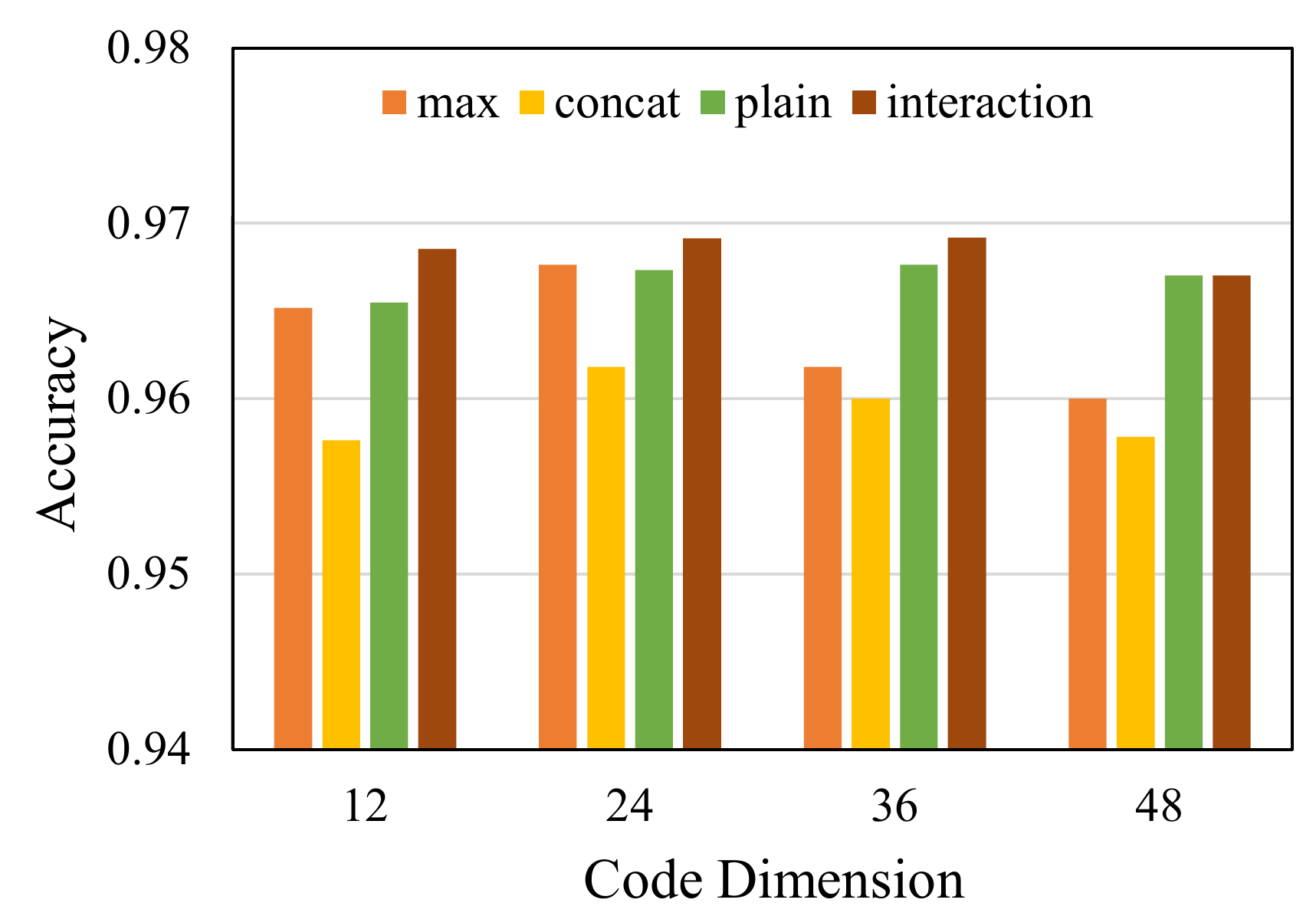}
    \end{minipage}%
    }%
	 \subfigure[PT $(\lambda=0.1,\alpha=0.6)$]{
    \begin{minipage}[t]{0.24\linewidth}
    \centering
    \includegraphics[width=\linewidth]{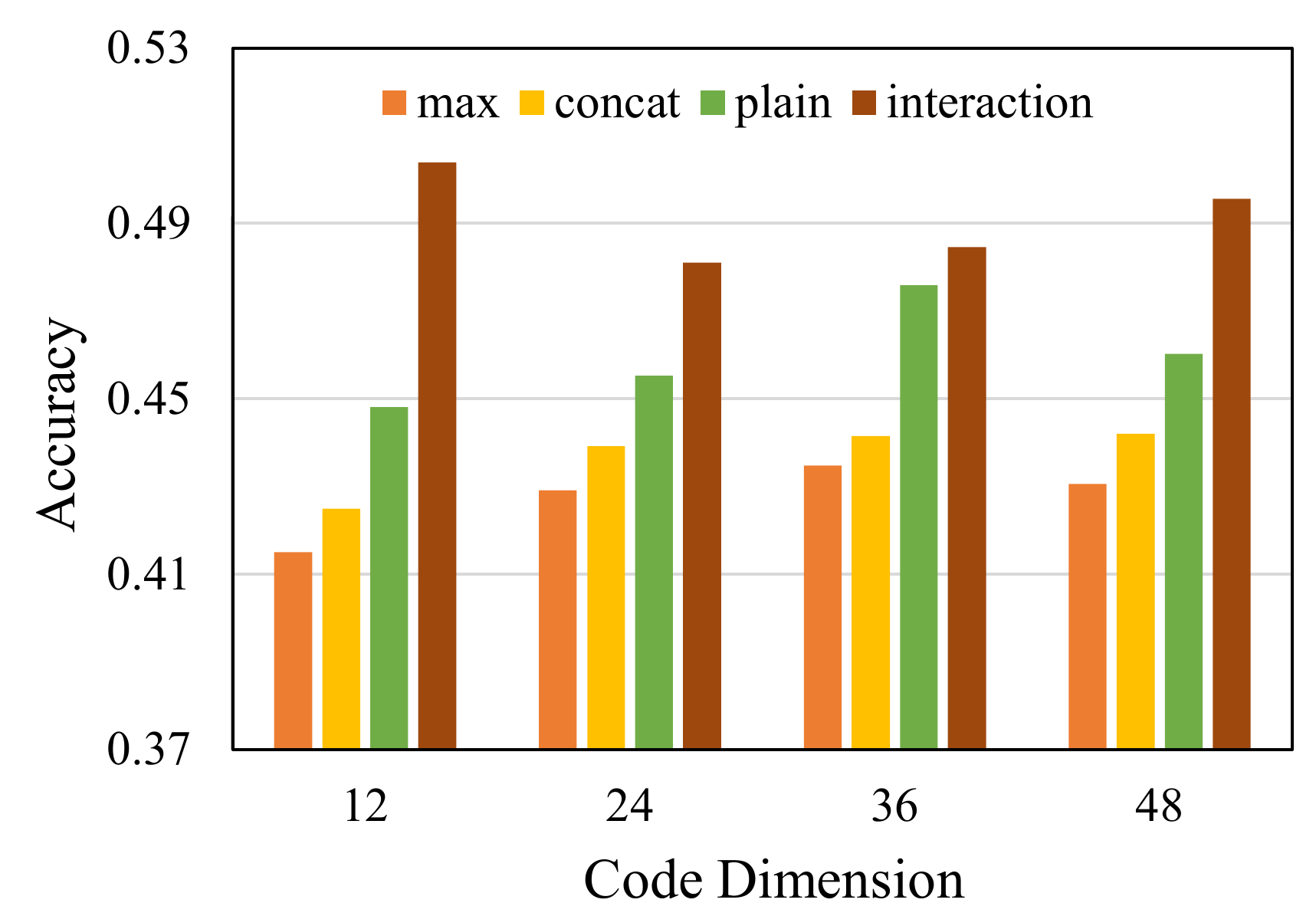}
    \end{minipage}%
    }%
    \subfigure[ADT $(\lambda=0.2,\alpha=0.6)$]{
    \begin{minipage}[t]{0.24\linewidth}
    \centering
    \includegraphics[width=\linewidth]{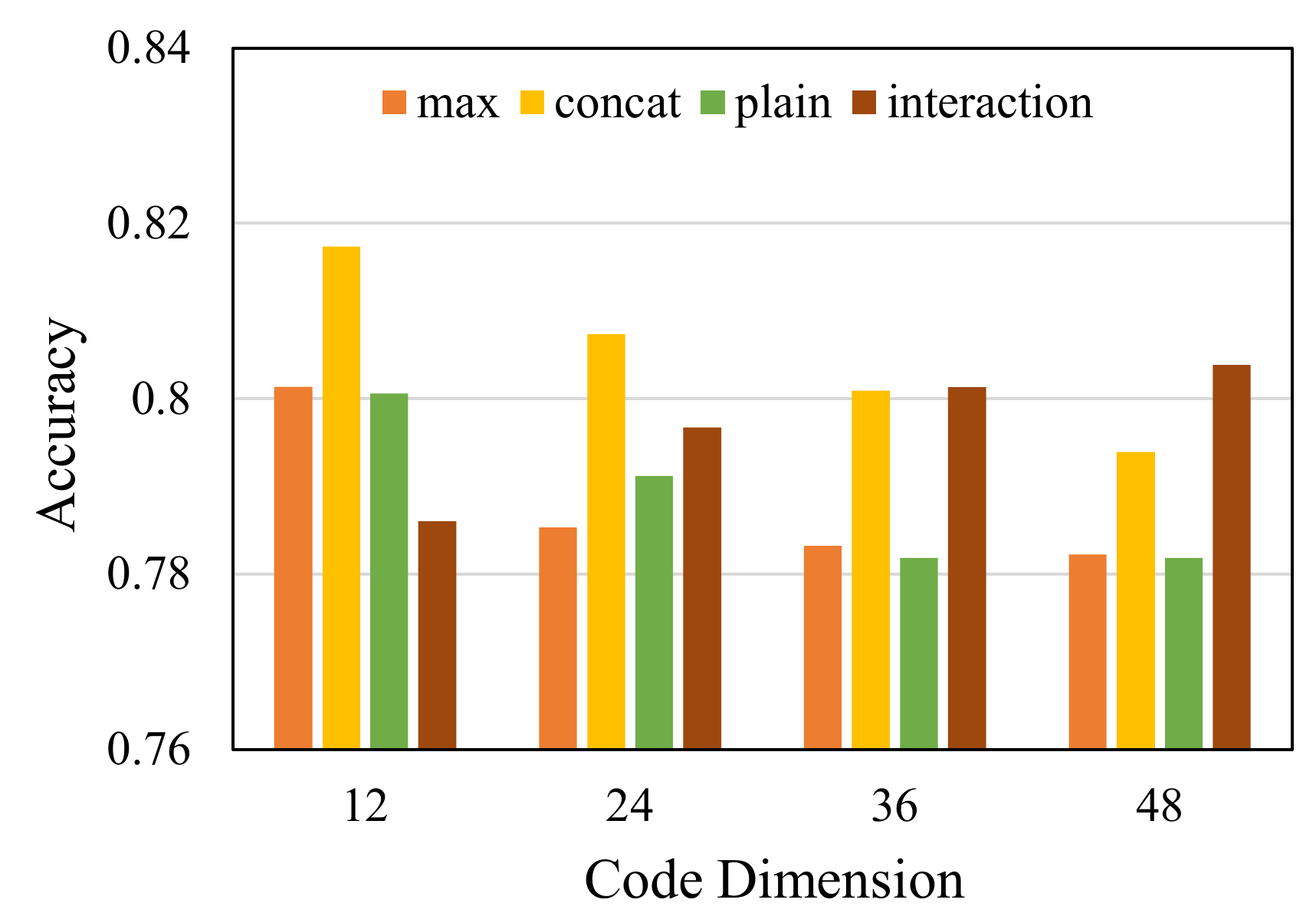}
    \end{minipage}%
    }%
    \subfigure[Dota $(\lambda=0.0,\alpha=0.8)$]{
    \begin{minipage}[t]{0.24\linewidth}
    \centering
    \includegraphics[width=\linewidth]{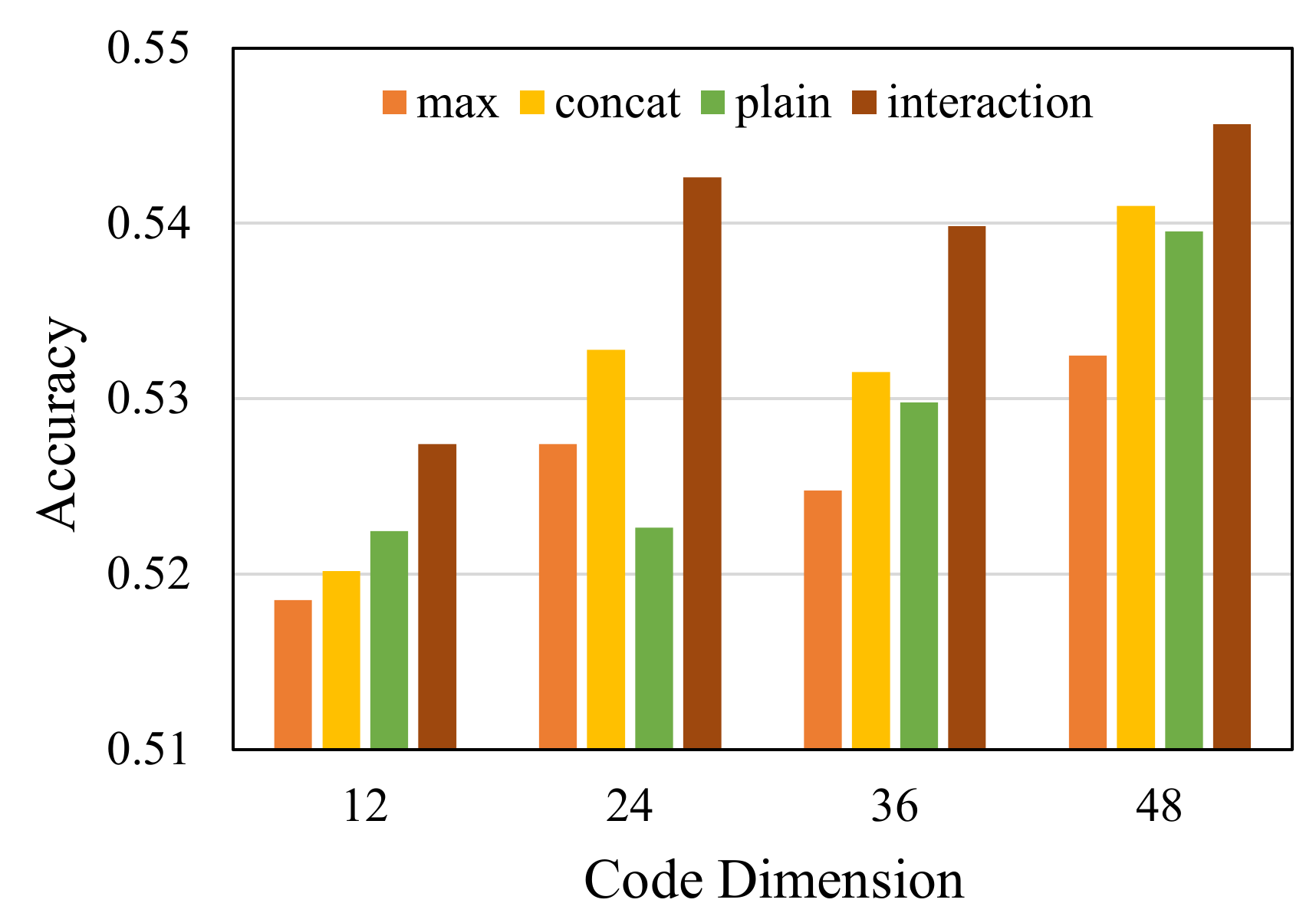}
    \end{minipage}%
    }%
    
	\subfigure[Font $(\lambda=0.2,\alpha=0.8)$]{
    \begin{minipage}[t]{0.24\linewidth}
    \centering
    \includegraphics[width=\linewidth]{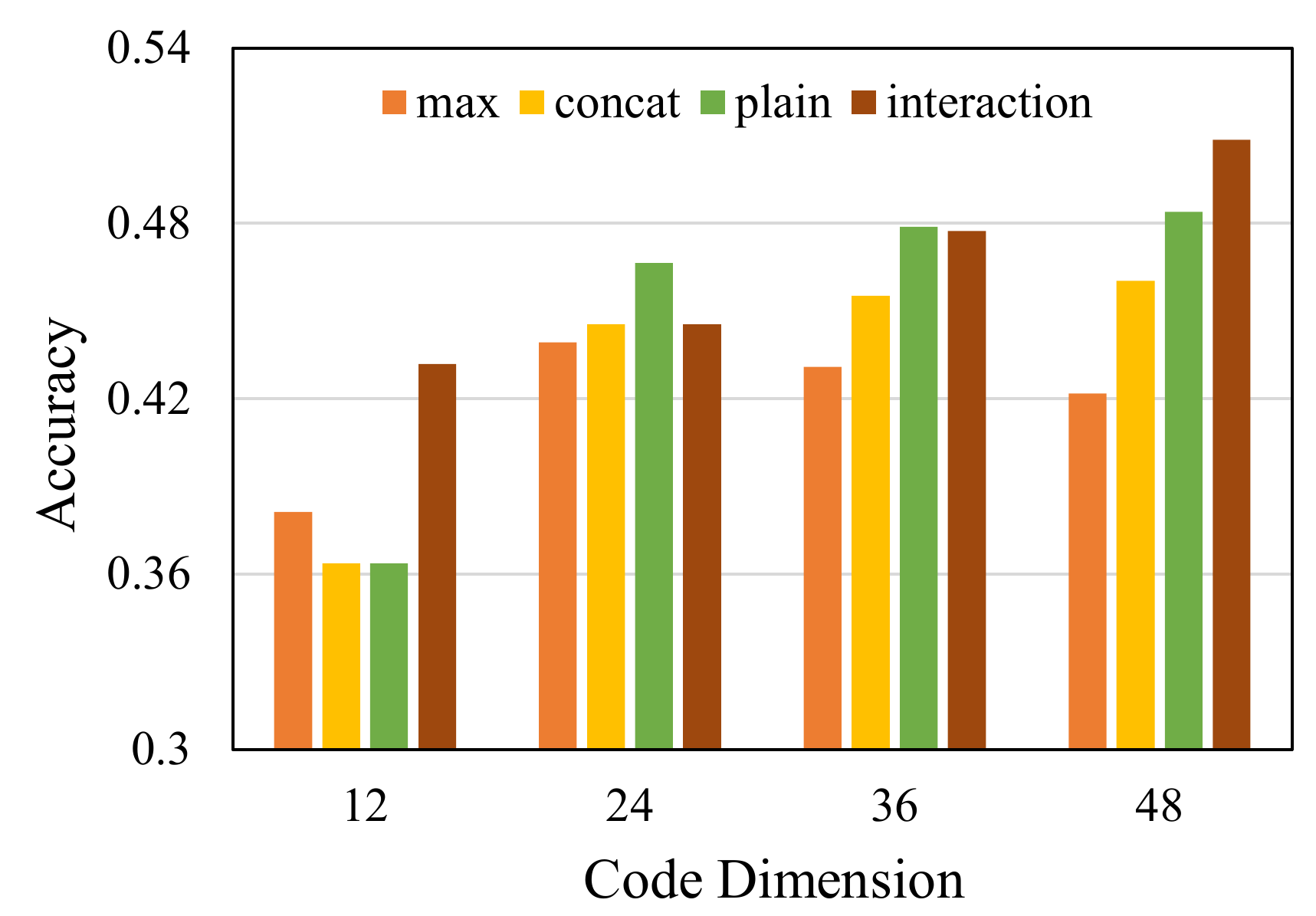}
    \end{minipage}%
    }%
    \subfigure[MT $(\lambda=0.4,\alpha=0.6)$]{
    \begin{minipage}[t]{0.24\linewidth}
    \centering
    \includegraphics[width=\linewidth]{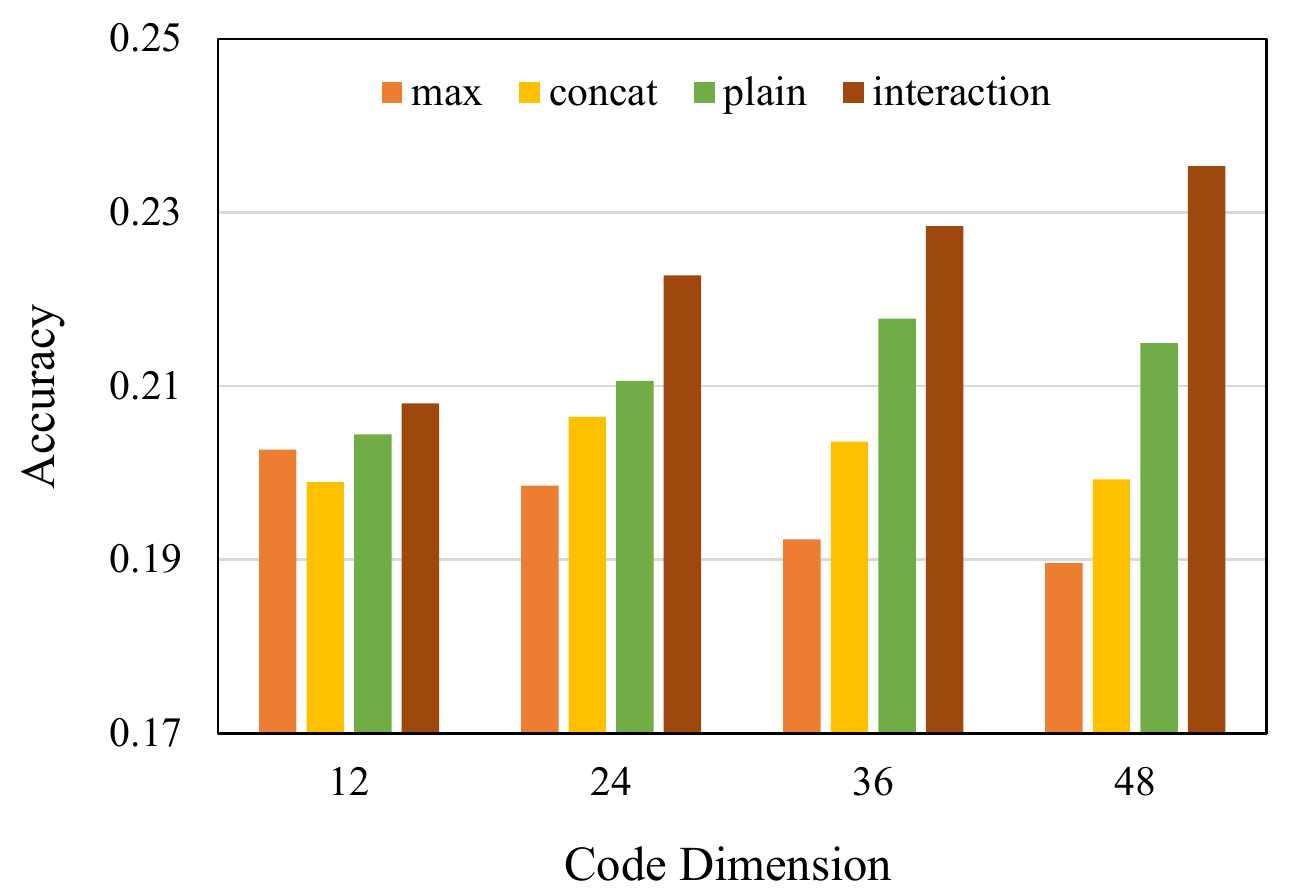}
    \end{minipage}%
    }%
    \subfigure[CT $(\lambda=0.2,\alpha=0.6)$]{
    \begin{minipage}[t]{0.24\linewidth}
    \centering
    \includegraphics[width=\linewidth]{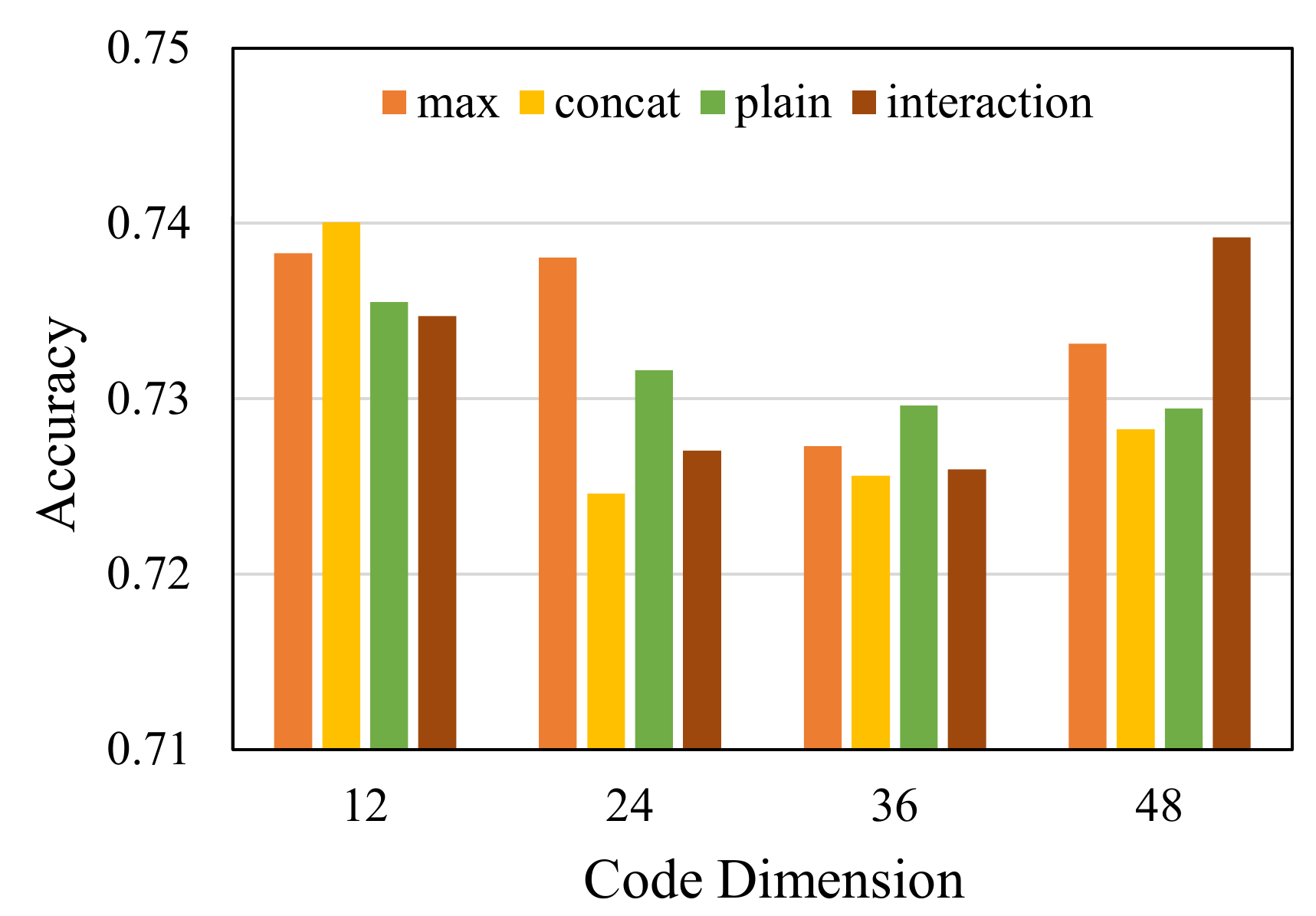}
    \end{minipage}%
    }%
    \subfigure[ML1M $(\lambda=0.2,\alpha=0.6)$]{
    \begin{minipage}[t]{0.24\linewidth}
    \centering
    \includegraphics[width=\linewidth]{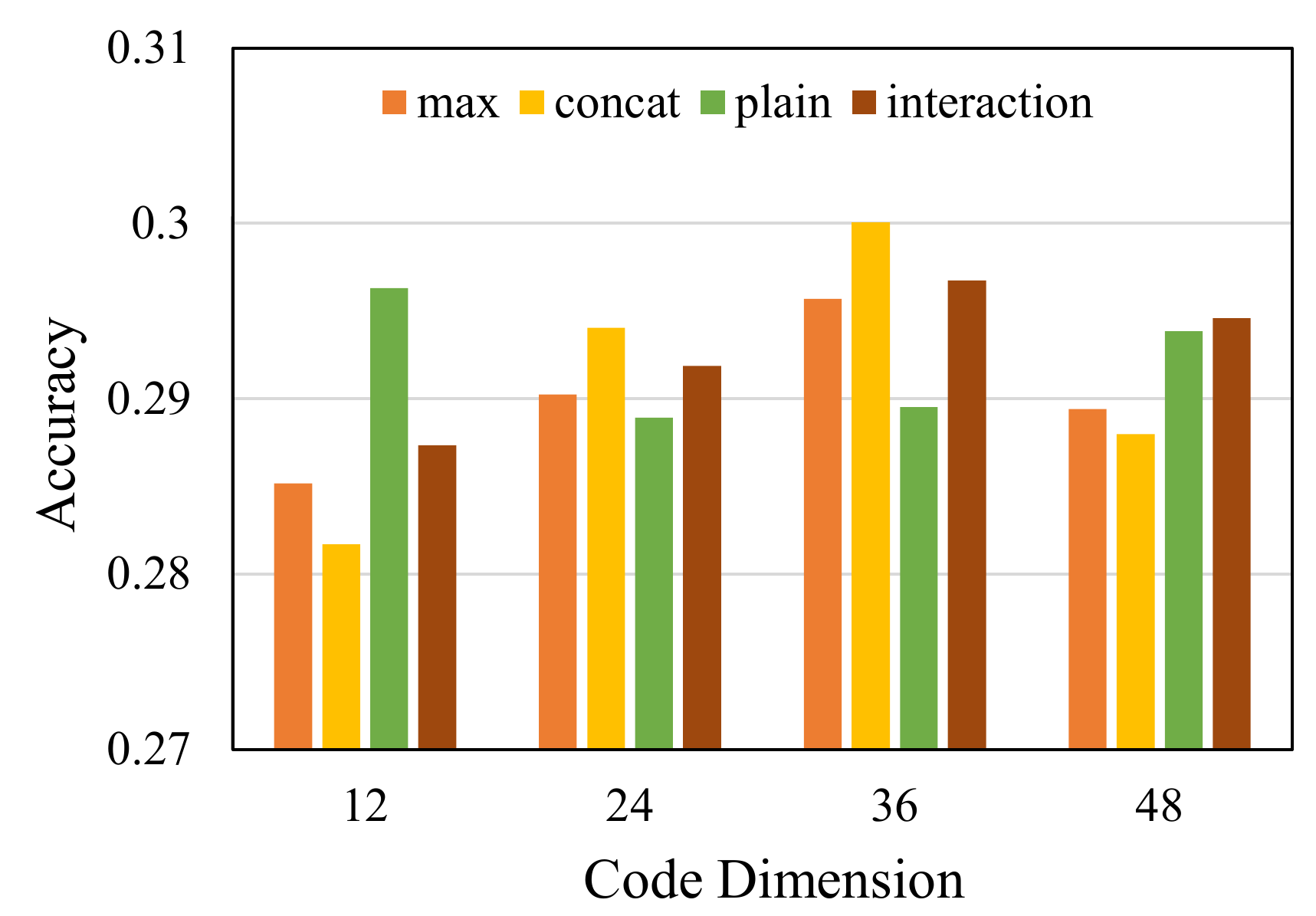}
    \end{minipage}%
    }%
    \caption{Accuracy comparison of \textit{interaction} and different variants under various code dimensions.}
    \label{fig:ablation_results}
\end{figure*}

\subsection{Retrieval Evaluation (Q2)}
In this section, we investigate the retrieval performance of HeCBR to verify the contributions of HeCBR to improving retrieval performance.

\subsubsection{Retrieval Accuracy}
We compare HeCBR with the state-of-the-art baselines in terms of retrieval performance. Specifically, we report the evaluation results of mean average precision (MAP) and precision on the retrieved top-$N$ most similar cases, as shown in Figures \ref{fig:map_results} and \ref{fig:precision_results} respectively. From the results, we observe the following findings:
\begin{itemize}
    \item HeCBR establishes a new state-of-the-art on all datasets except for PT, and it outperforms the baselines in terms of MAP and precision under different numbers of retrieved cases, especially on ADV, ADT, Dota and ML1M. The results show the superiority of HeCBR over the baselines in retrieving similar cases, confirming the better classification accuracy of HeCBR over the baselines in the evaluation of classification performance.
    \item Compared with the baselines, HeCBR shows less fluctuation and a smooth trend in MAP and has a more stable precision along with the increasing case numbers. The results reflect that the introduction of feature embeddings and feature interactions in HeCBR is effective to capture the intrinsic heterogeneity in (high-dimensional) cases and improve the robustness of HeCBR on data of different scales and types.
    \item In addition, data-independent methods, i.e., the LSH-based methods and ITQ, are more vulnerable to the number of retrieved case numbers and generally perform worse than data-dependent hashing methods, for example ITQ and PMH on ADT, FlyH on Dota, and LSH and PMH on MT. The results are attributed that data-independent methods rely on specific distance measurement or transformation (PCA) and hardly learn data specific features to address complex data issues, e.g., heterogeneity and high dimensionality.
\end{itemize}

From the above results, we conclude that HeCBR effectively captures feature heterogeneity and interactions to improve case representation and performs more accurate similar case retrieval.

\begin{figure*}[!t]
    \centering
    \subfigure[ADV]{
    \begin{minipage}[t]{0.24\linewidth}
    \centering
    \includegraphics[width=\linewidth,trim={3cm 3cm 3cm 3cm},keepaspectratio]{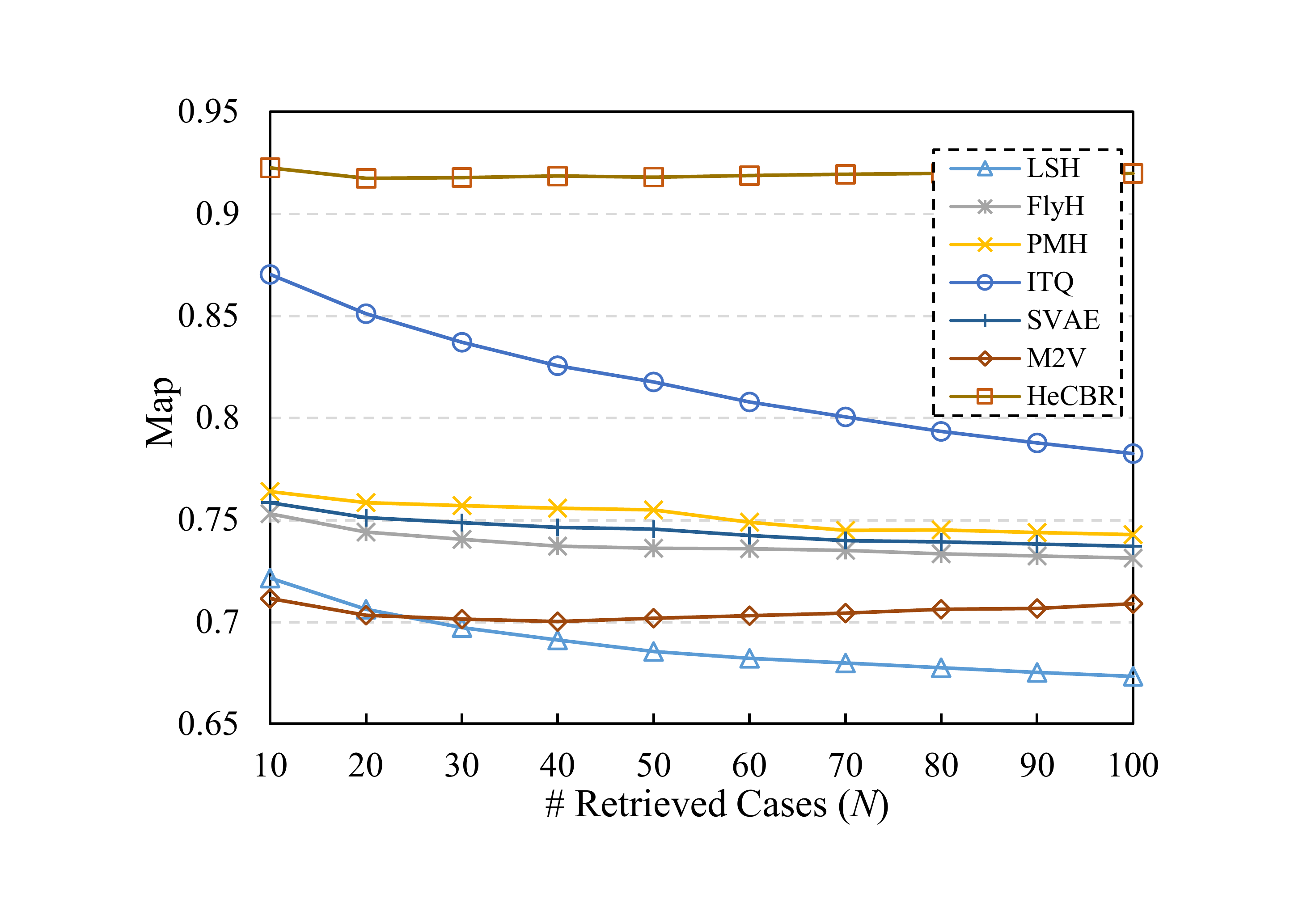}
    \end{minipage}%
    }%
	 \subfigure[PT]{
    \begin{minipage}[t]{0.24\linewidth}
    \centering
    \includegraphics[width=\linewidth,trim={3cm 3cm 3cm 3cm},keepaspectratio]{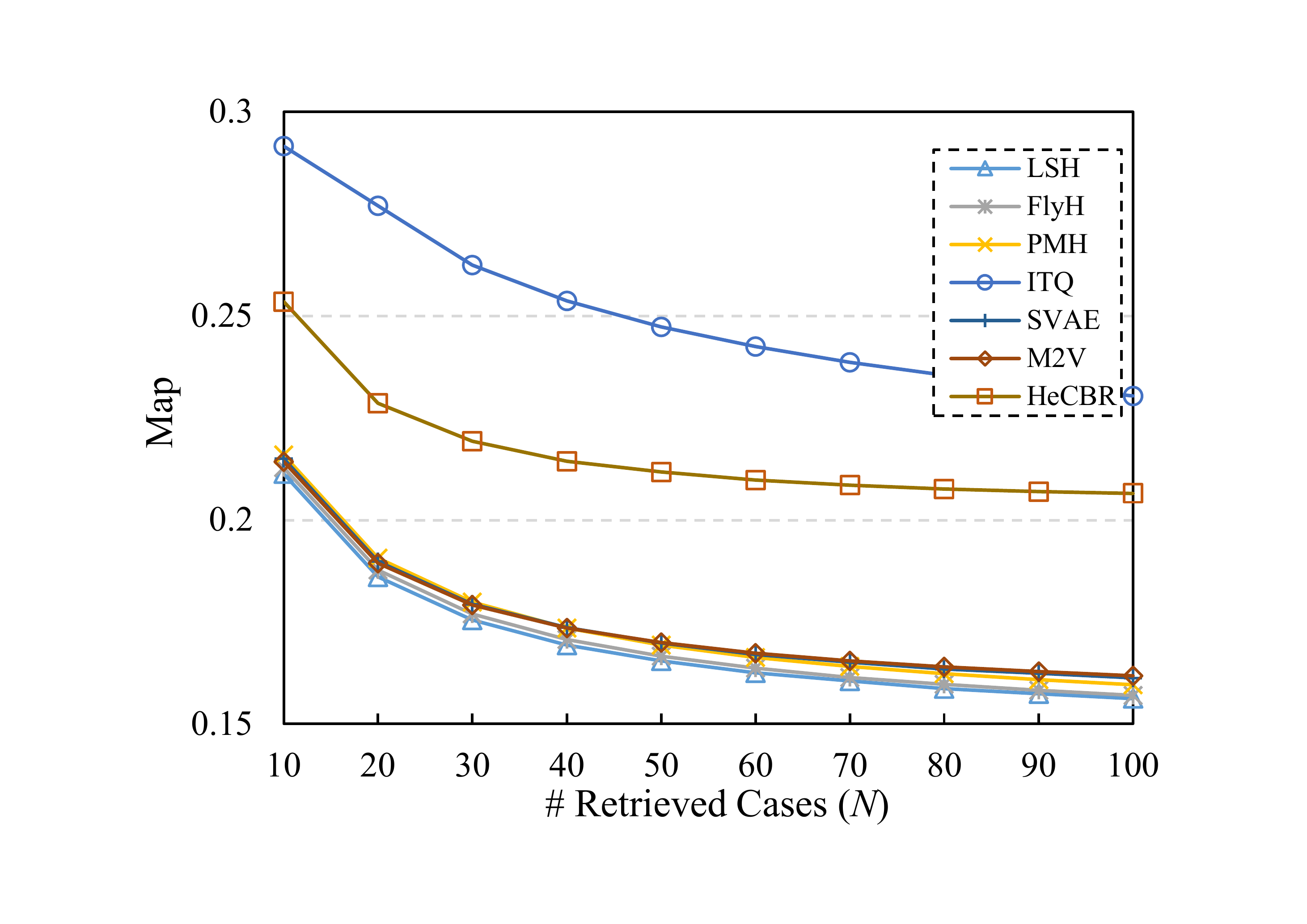}
    \end{minipage}%
    }%
	 \subfigure[ADT]{
    \begin{minipage}[t]{0.24\linewidth}
    \centering
    \includegraphics[width=\linewidth,trim={3cm 3cm 3cm 3cm},keepaspectratio]{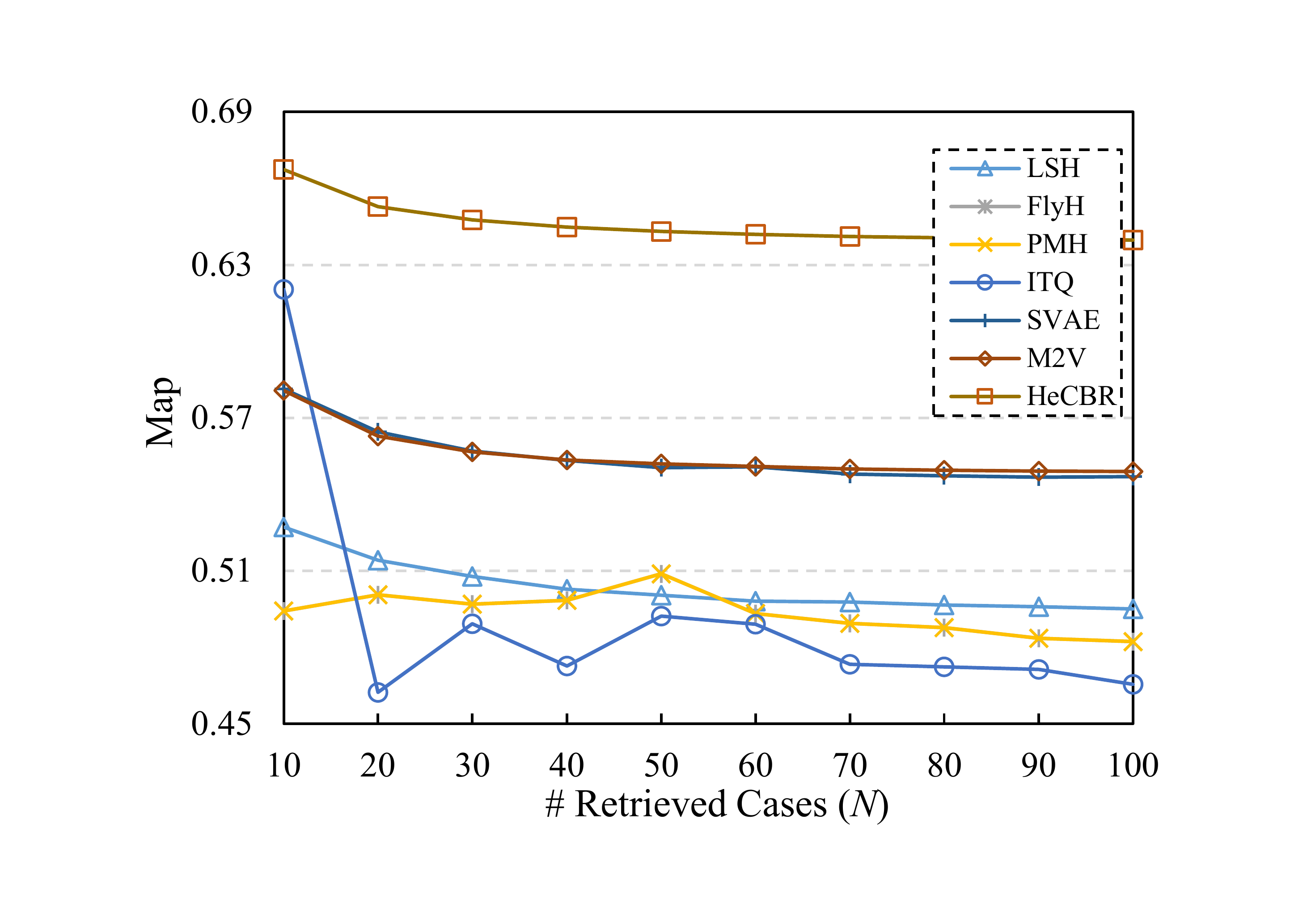}
    \end{minipage}%
    }%
    \subfigure[Dota]{
    \begin{minipage}[t]{0.24\linewidth}
    \centering
    \includegraphics[width=\linewidth,trim={3cm 3cm 3cm 3cm},keepaspectratio]{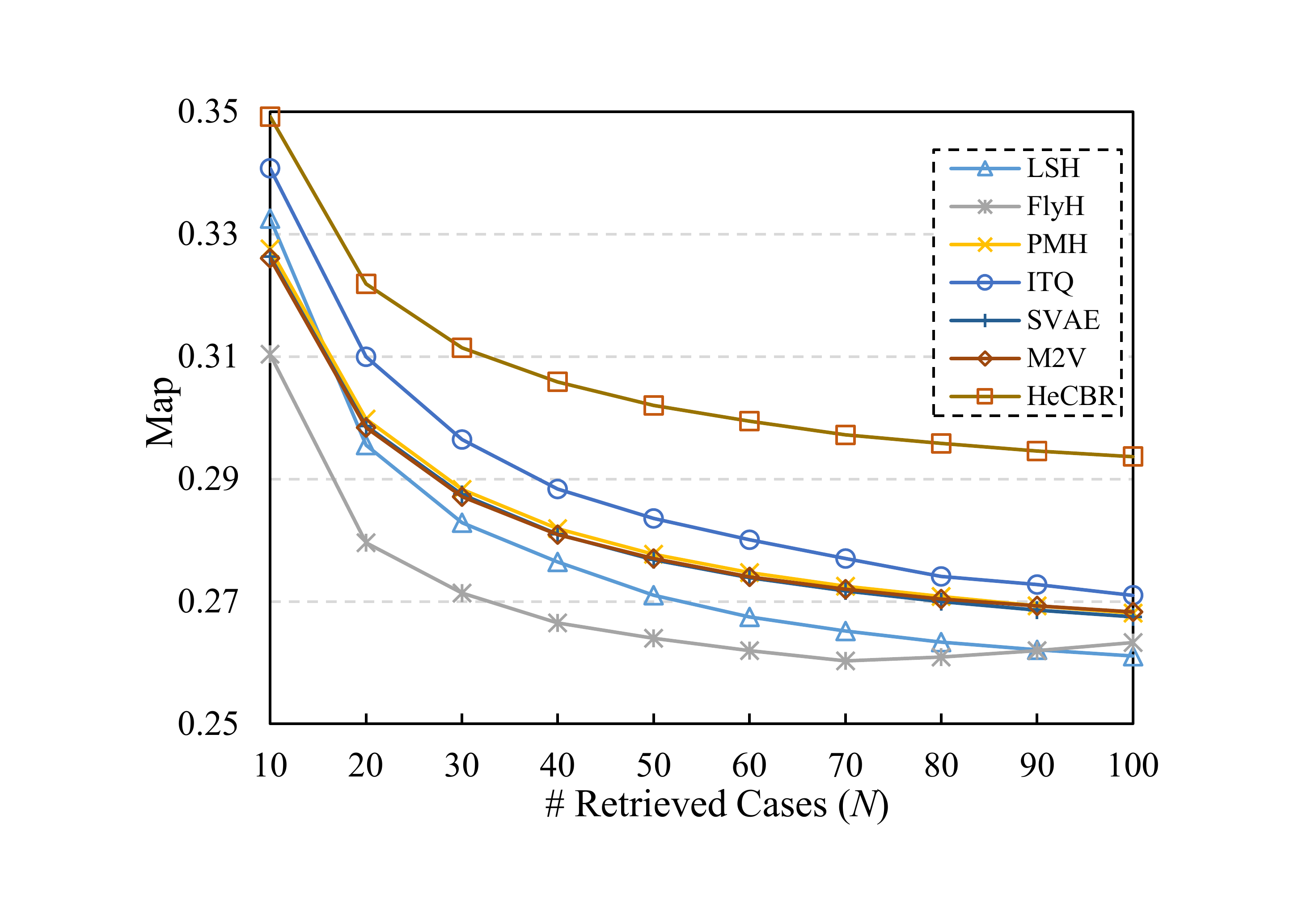}
    \end{minipage}%
    }%
    
	 \subfigure[Font]{
    \begin{minipage}[t]{0.24\linewidth}
    \centering
    \includegraphics[width=\linewidth,trim={3cm 3cm 3cm 3cm},keepaspectratio]{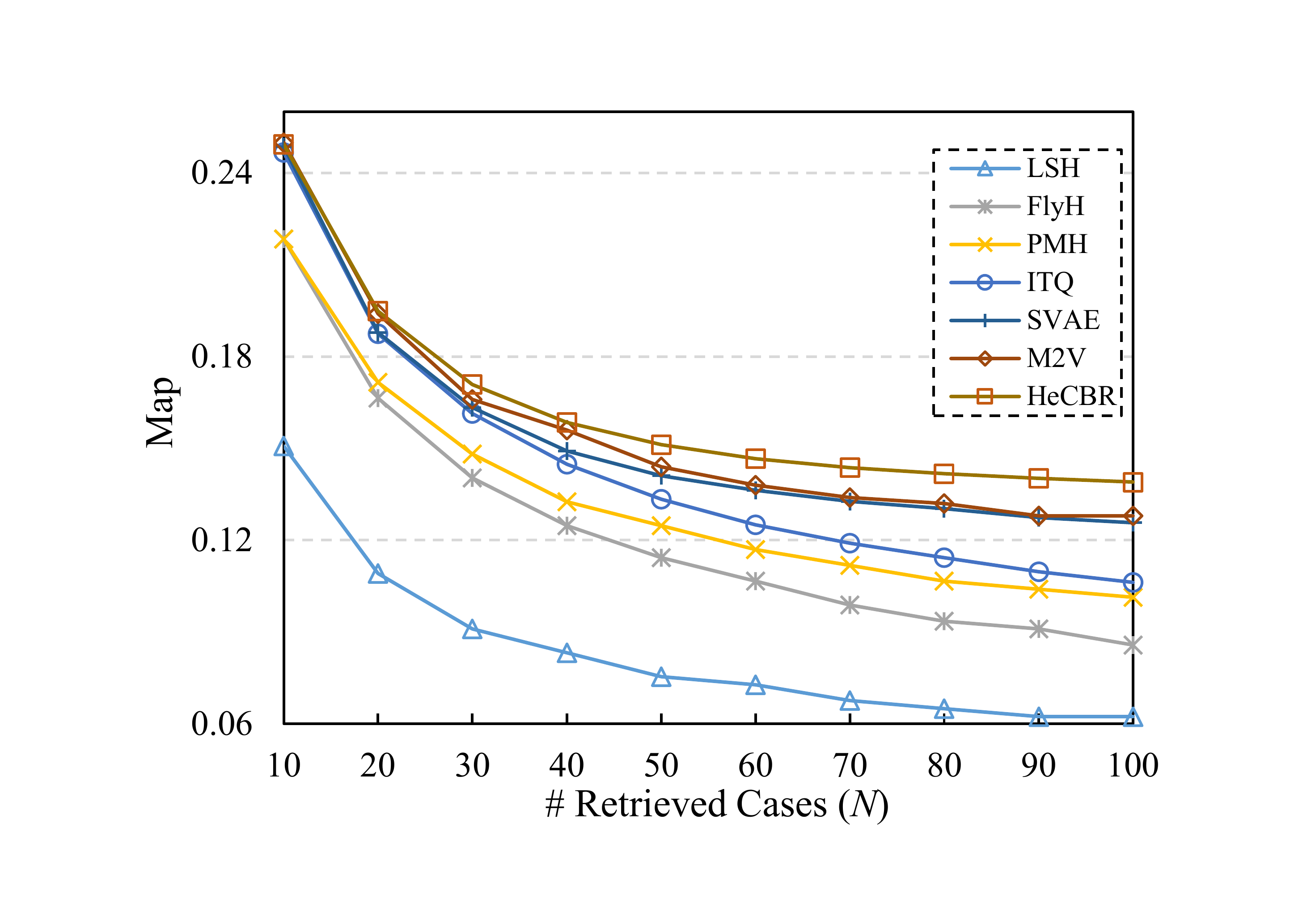}
    \end{minipage}%
    }%
    \subfigure[MT]{
    \begin{minipage}[t]{0.24\linewidth}
    \centering
    \includegraphics[width=\linewidth,trim={3cm 3cm 3cm 3cm},keepaspectratio]{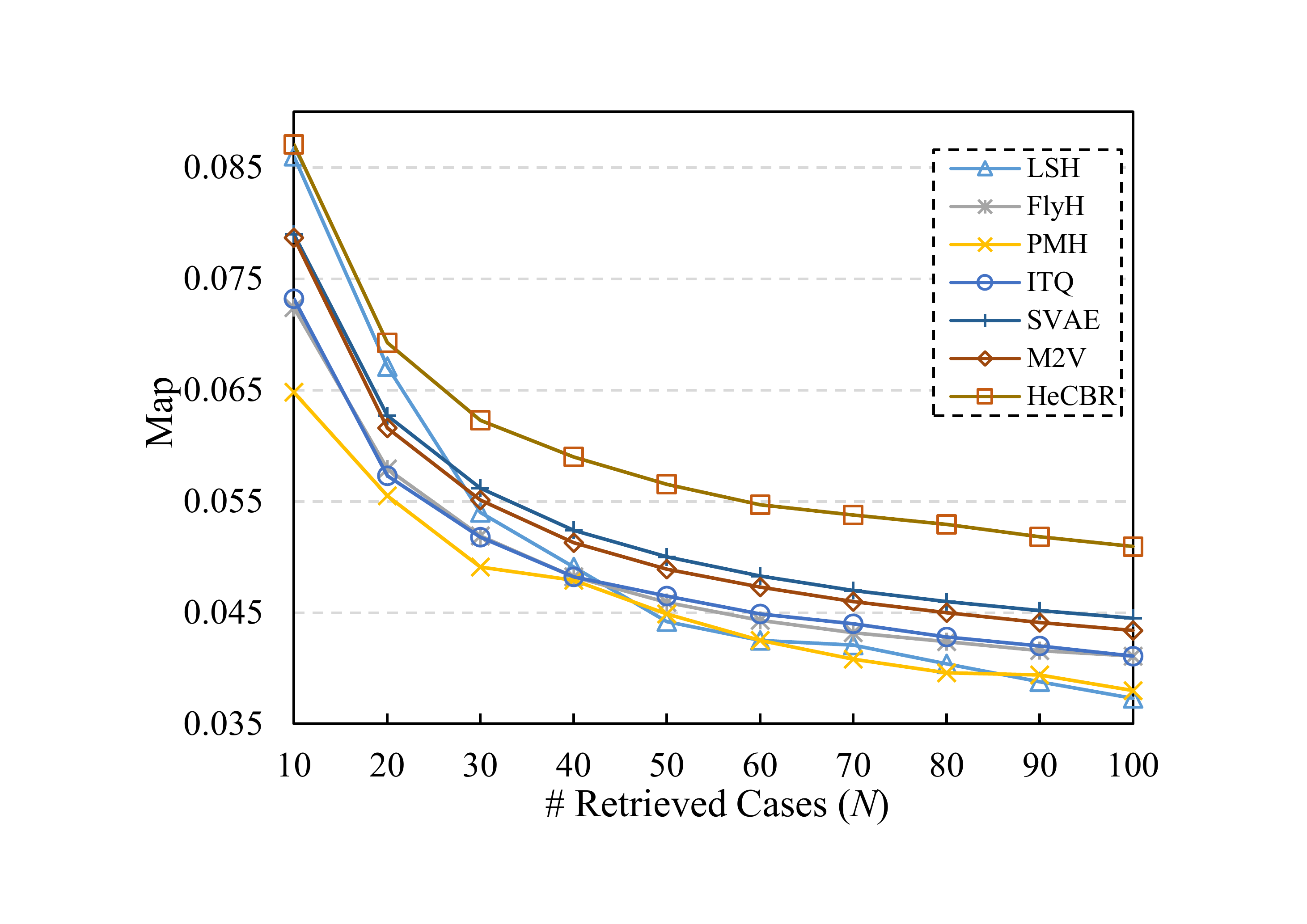}
    \end{minipage}%
    }%
	 \subfigure[CT]{
    \begin{minipage}[t]{0.24\linewidth}
    \centering
    \includegraphics[width=\linewidth,trim={3cm 3cm 3cm 3cm},keepaspectratio]{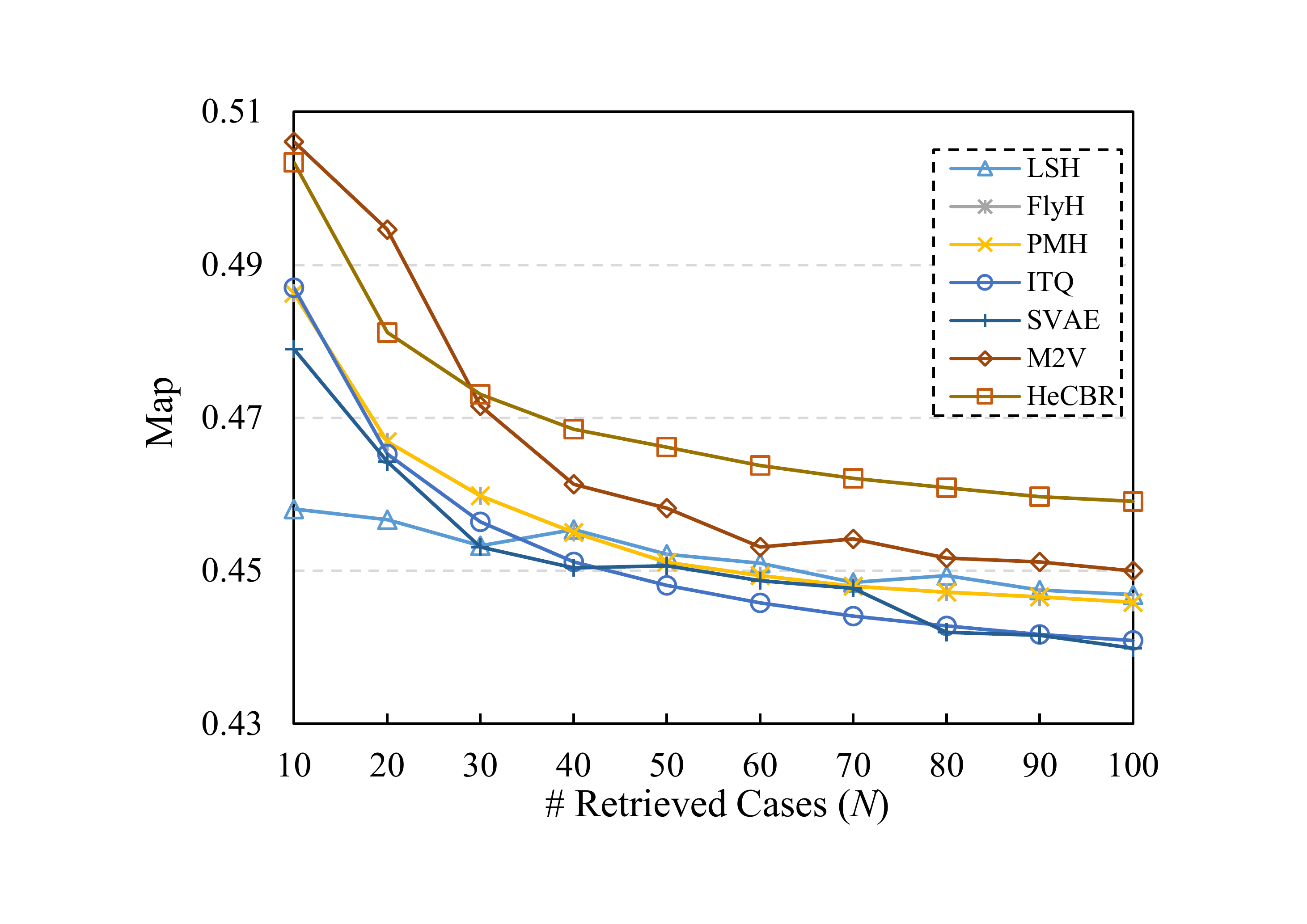}
    \end{minipage}%
    }%
    \subfigure[ML1M]{
    \begin{minipage}[t]{0.24\linewidth}
    \centering
    \includegraphics[width=\linewidth,trim={3cm 3cm 3cm 3cm},keepaspectratio]{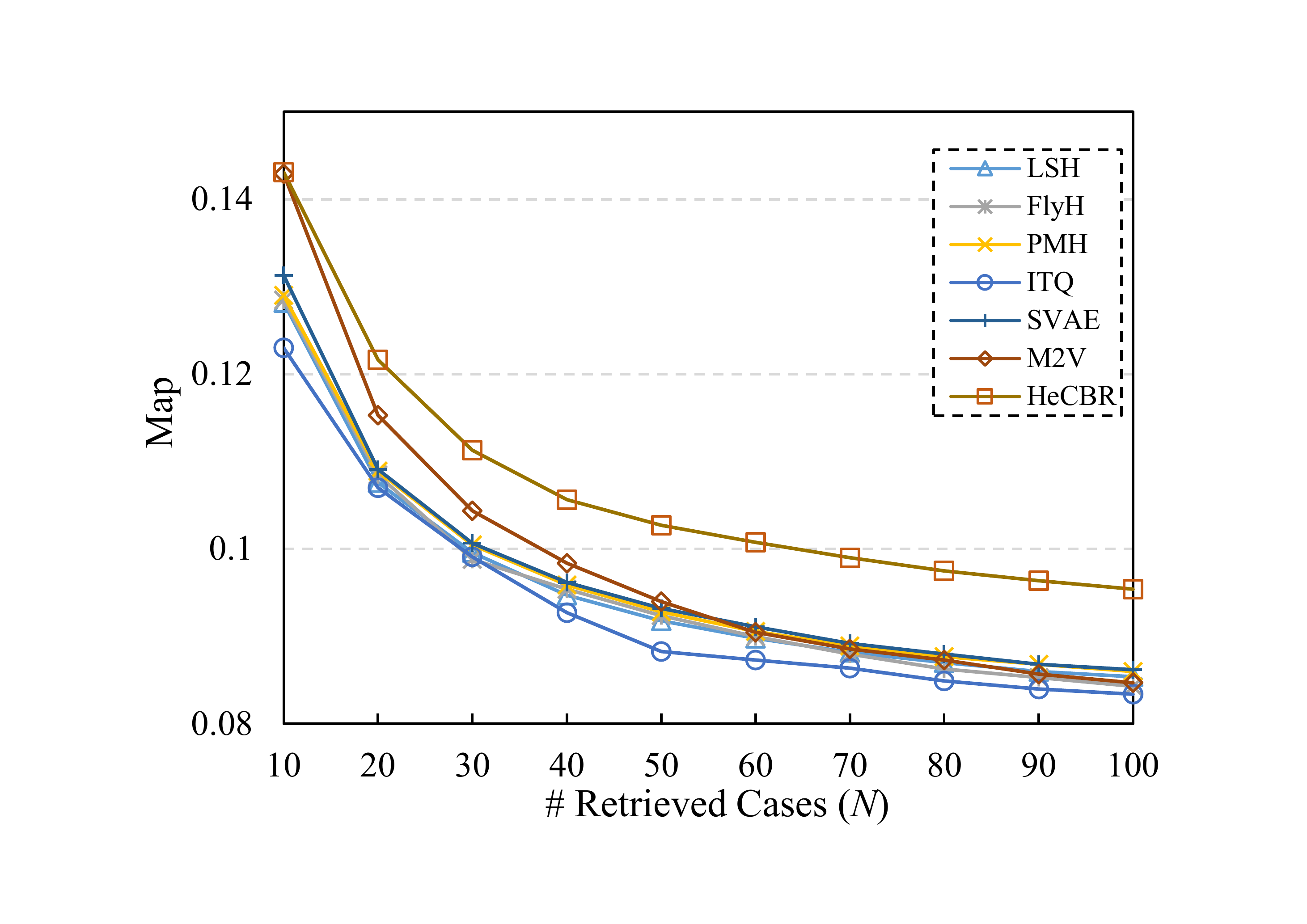}
    \end{minipage}%
    }%
    \caption{Comparison of retrieval performance in terms of mean average precision with different numbers of retrieved cases (MAP@N). Six representative baselines are compared with HeCBR under $36$-bit binary codes.}
    \label{fig:map_results}
\end{figure*}

\begin{figure*}[!t]
    \centering
    \subfigure[ADV]{
    \label{fig:hete}
    \begin{minipage}[t]{0.24\linewidth}
    \centering
    \includegraphics[width=\linewidth,trim={3cm 3cm 3cm 3cm},keepaspectratio]{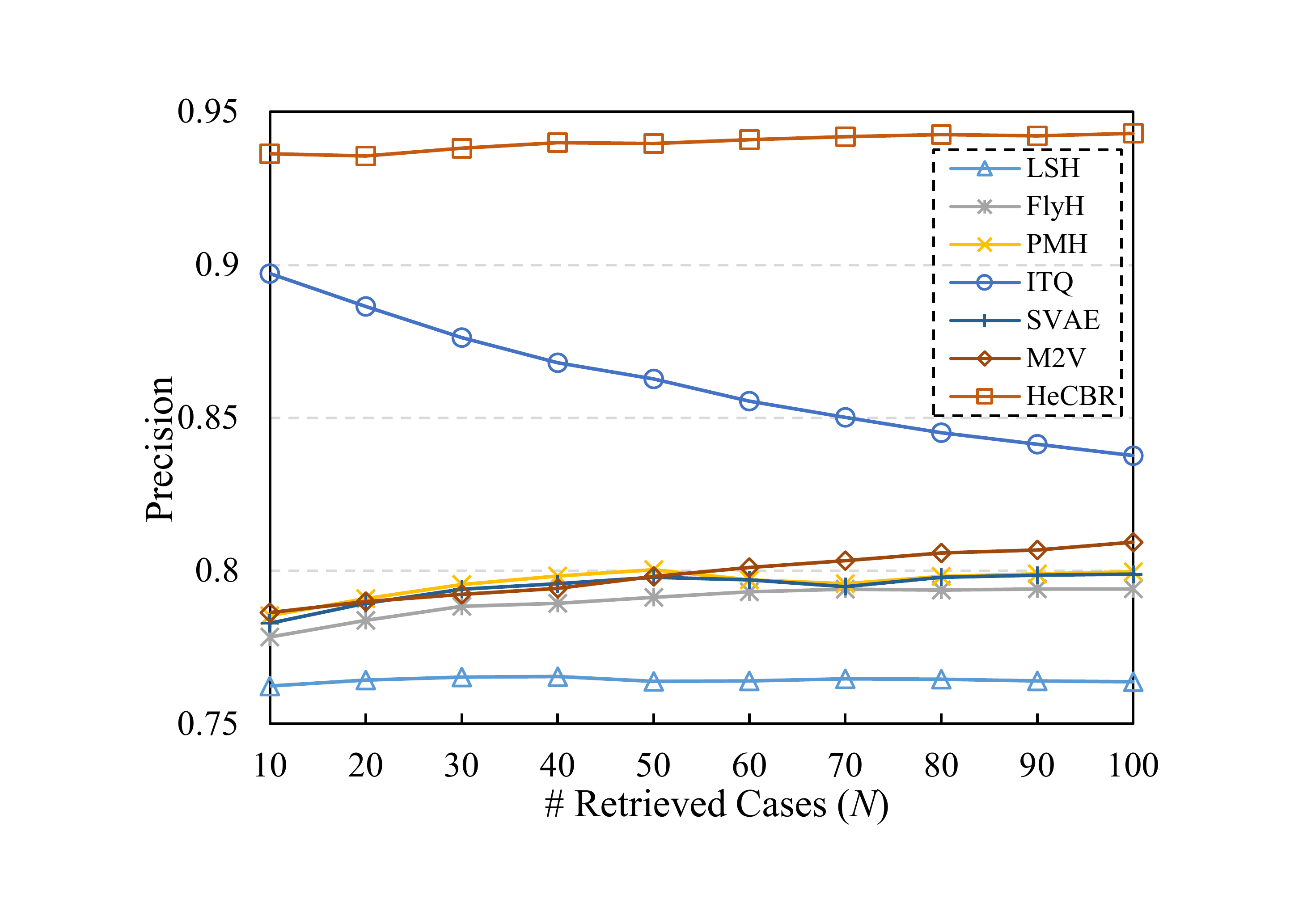}
    \end{minipage}%
    }%
	 \subfigure[PT]{
    \begin{minipage}[t]{0.24\linewidth}
    \centering
    \includegraphics[width=\linewidth,trim={3cm 3cm 3cm 3cm},keepaspectratio]{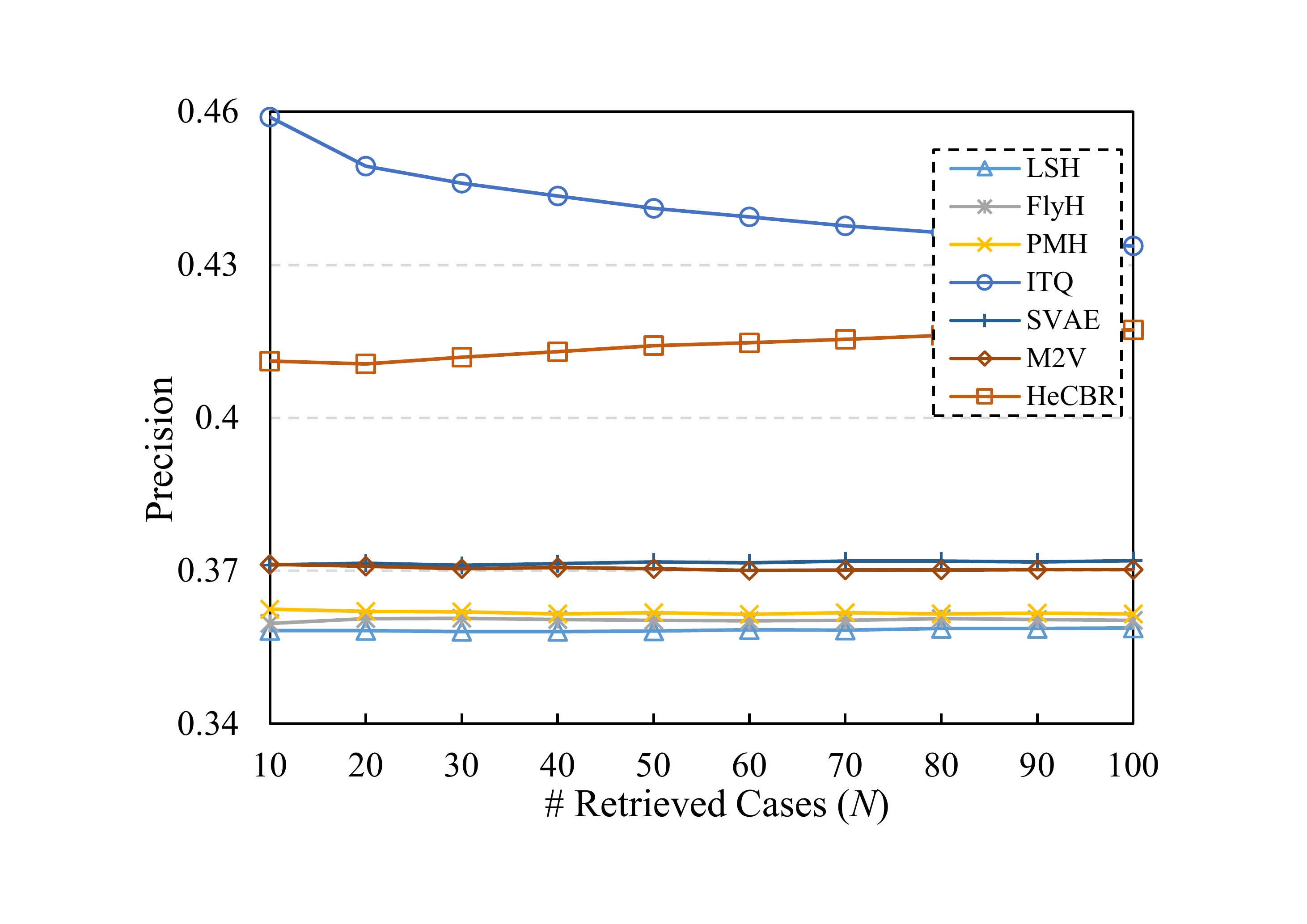}
    \end{minipage}%
    }%
	 \subfigure[ADT]{
    \begin{minipage}[t]{0.24\linewidth}
    \centering
    \includegraphics[width=\linewidth,trim={3cm 3cm 3cm 3cm},keepaspectratio]{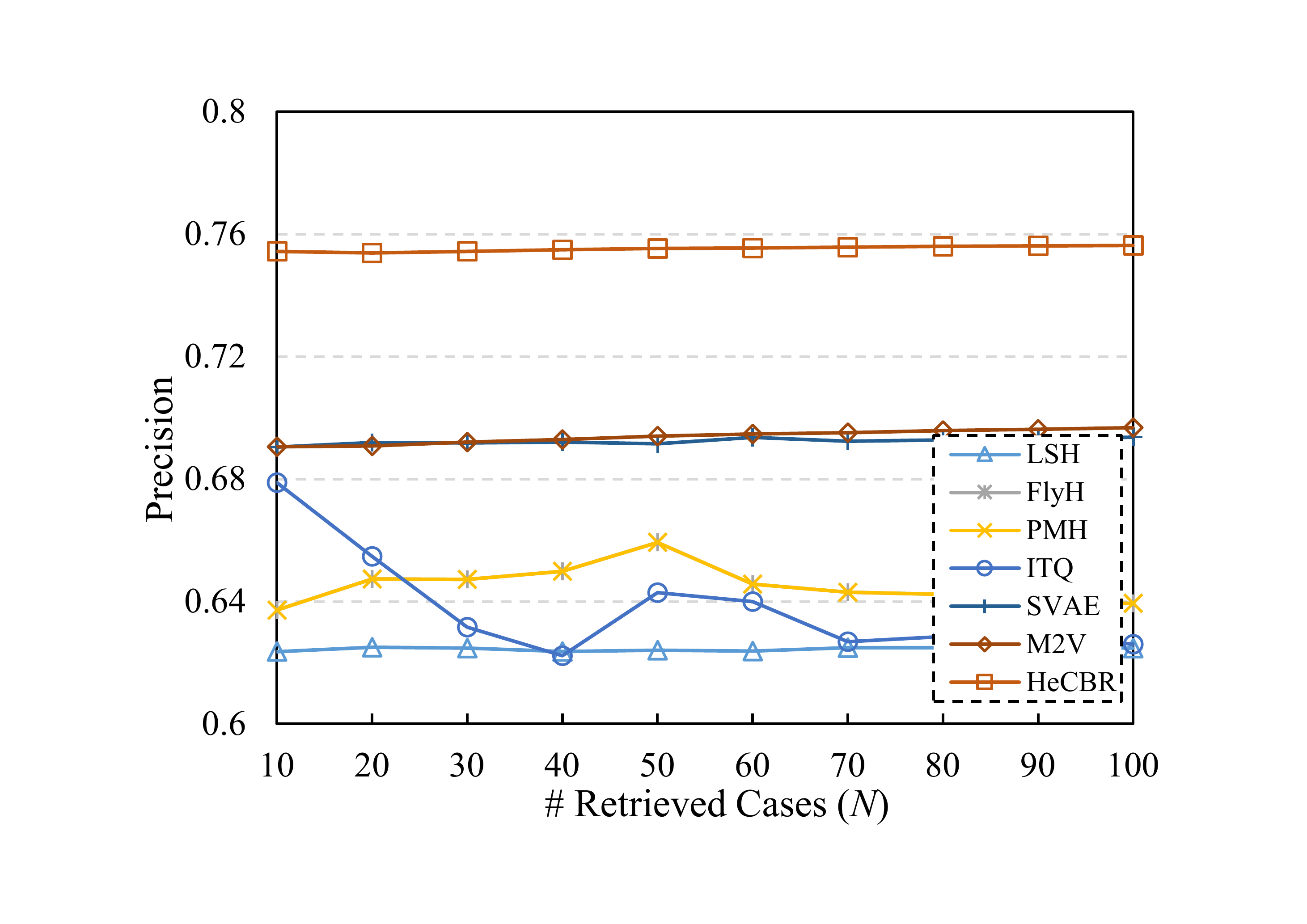}
    \end{minipage}%
    }%
    \subfigure[Dota]{
    \begin{minipage}[t]{0.24\linewidth}
    \centering
    \includegraphics[width=\linewidth,trim={3cm 3cm 3cm 3cm},keepaspectratio]{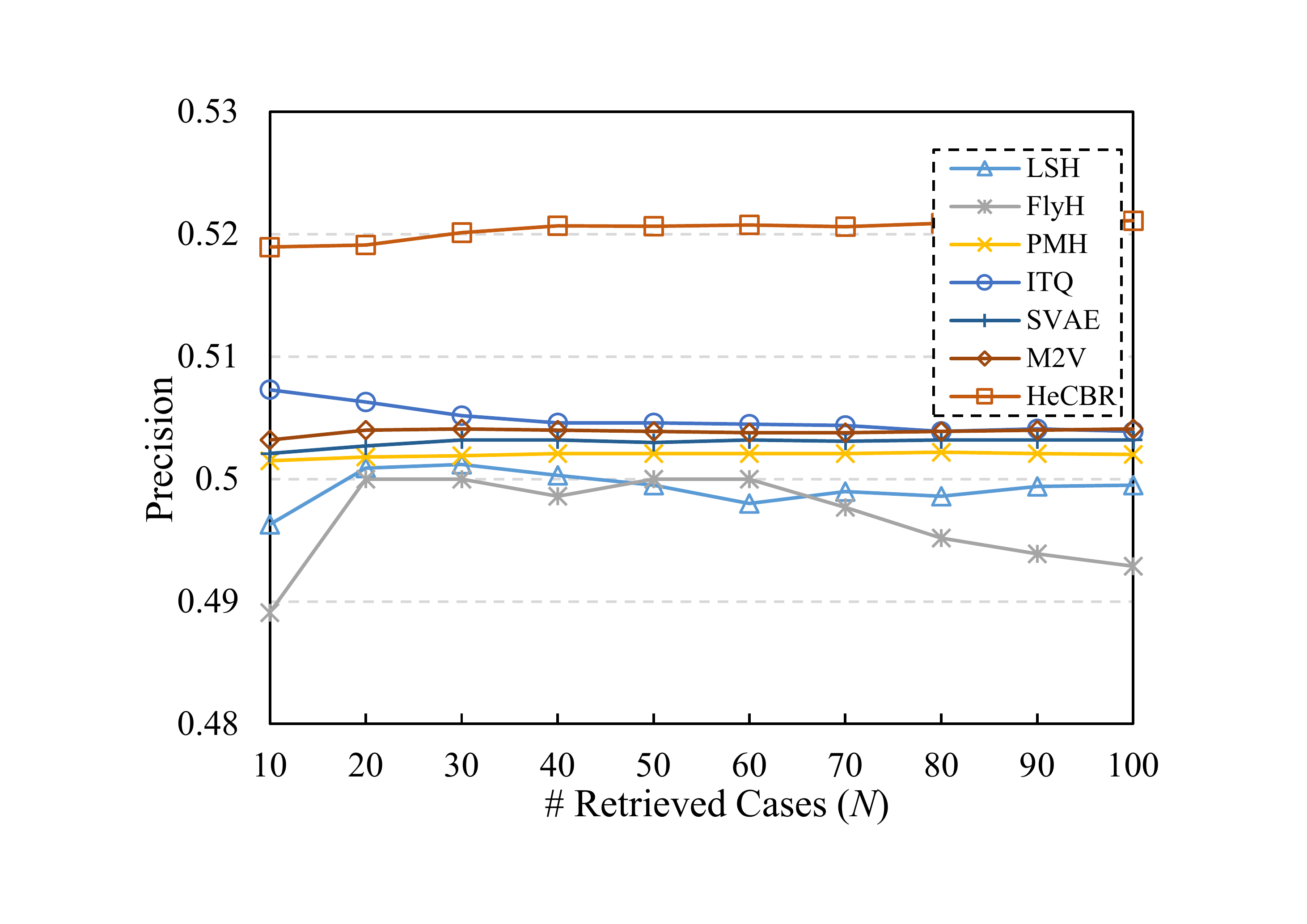}
    \end{minipage}%
    }%
    
	 \subfigure[Font]{
    \begin{minipage}[t]{0.24\linewidth}
    \centering
    \includegraphics[width=\linewidth,trim={3cm 3cm 3cm 3cm},keepaspectratio]{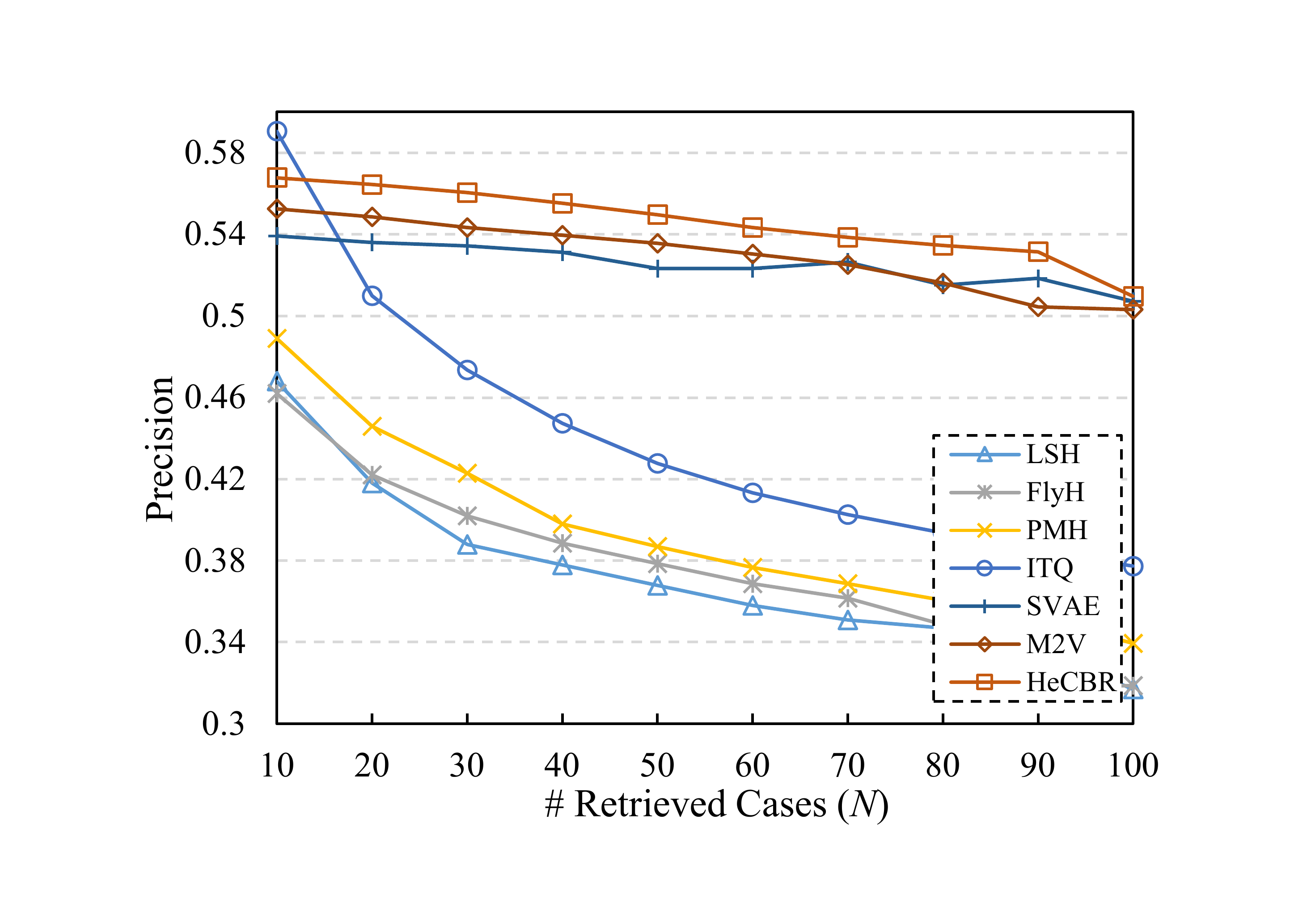}
    \end{minipage}%
    }%
    \subfigure[MT]{
    \begin{minipage}[t]{0.24\linewidth}
    \centering
    \includegraphics[width=\linewidth,trim={3cm 3cm 3cm 3cm},keepaspectratio]{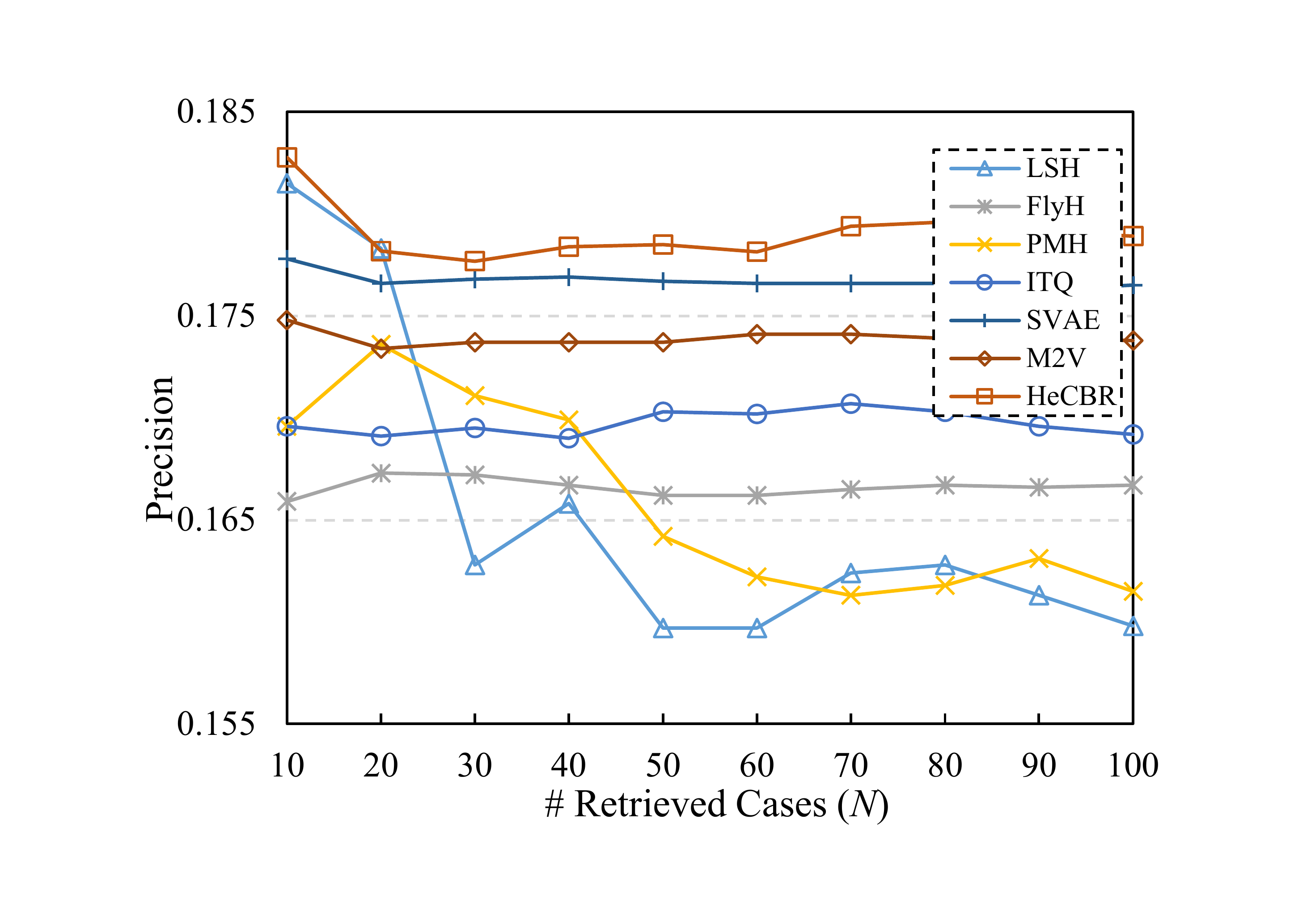}
    \end{minipage}%
    }%
	 \subfigure[CT]{
    \begin{minipage}[t]{0.24\linewidth}
    \centering
    \includegraphics[width=\linewidth,trim={3cm 3cm 3cm 3cm},keepaspectratio]{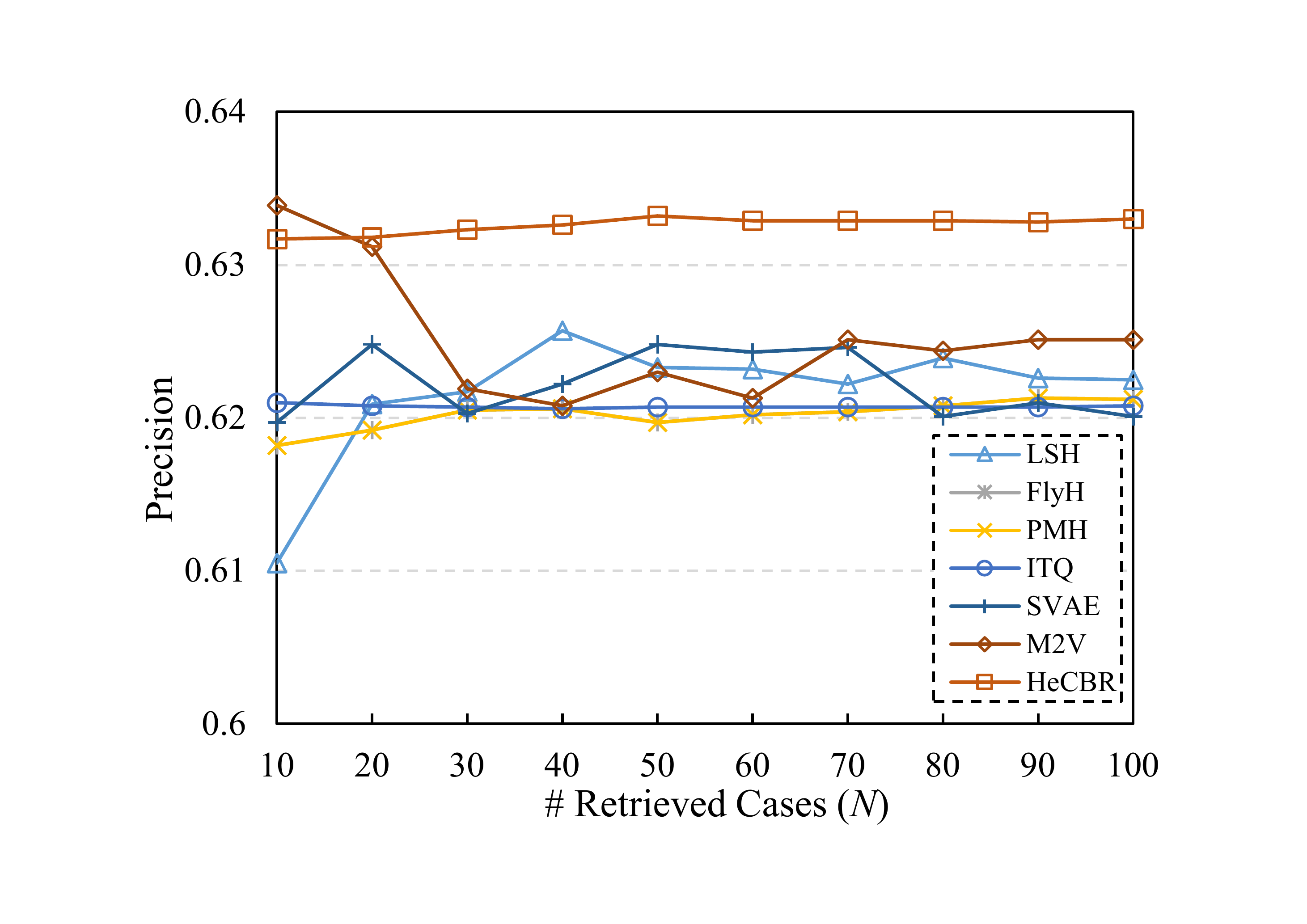}
    \end{minipage}%
    }%
    \subfigure[ML1M]{
    \begin{minipage}[t]{0.24\linewidth}
    \centering
    \includegraphics[width=\linewidth,trim={3cm 3cm 3cm 3cm},keepaspectratio]{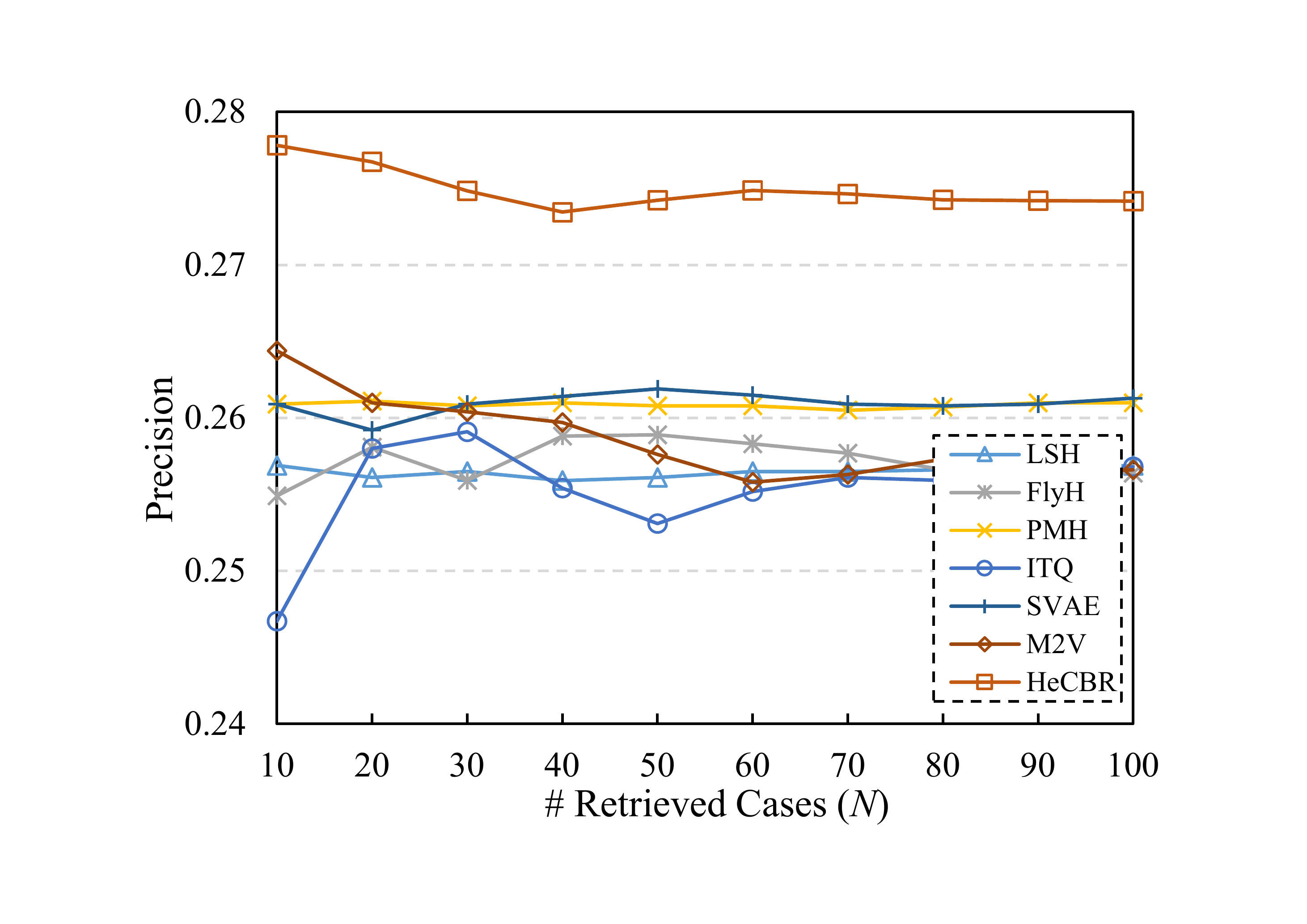}
    \end{minipage}%
    }%
    \caption{Comparison of retrieval performance in terms of precision with different numbers of retrieved cases N (Precision@N). Six representative baselines are compared with HeCBR under $36$-bit binary codes.}
    \label{fig:precision_results}
\end{figure*}

\begin{figure*}[!t]
    \centering
    \subfigure[ADV]{
    \begin{minipage}[t]{0.29\linewidth}
    \centering
    \includegraphics[width=0.95\linewidth]{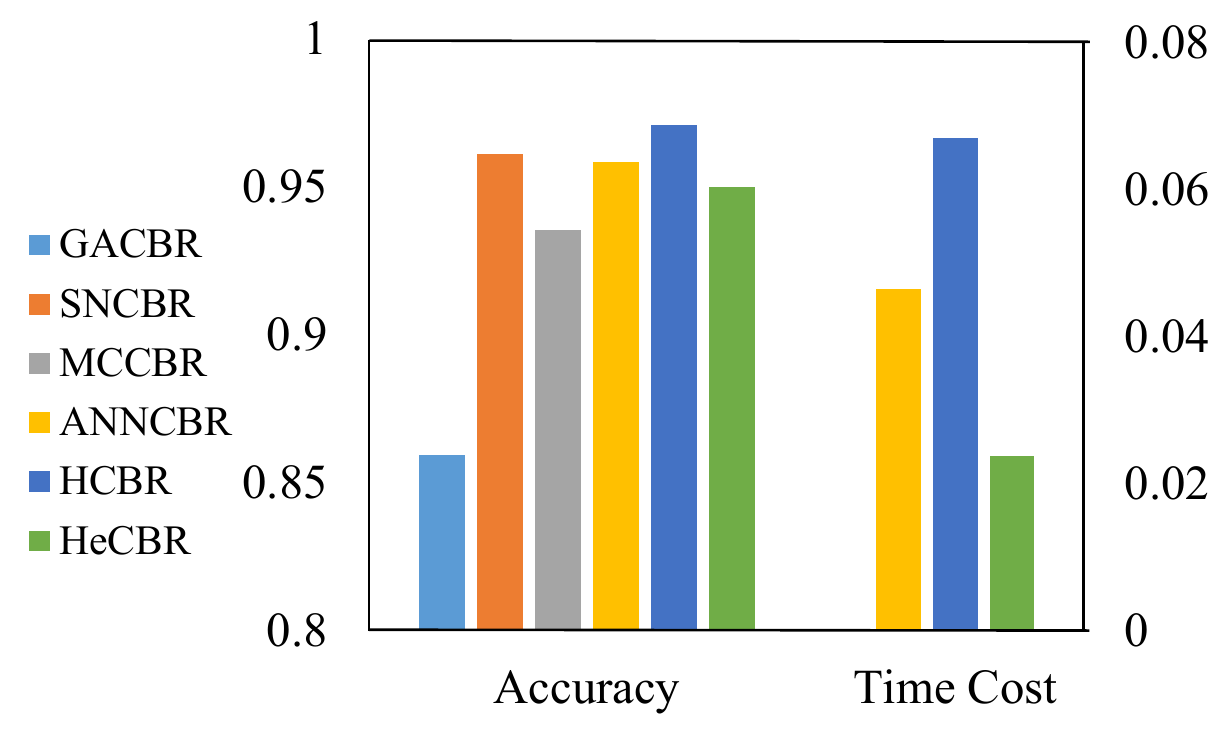}
    \end{minipage}%
    }%
	 \subfigure[PT]{
    \begin{minipage}[t]{0.236\linewidth}
    \centering
    \includegraphics[width=0.95\linewidth]{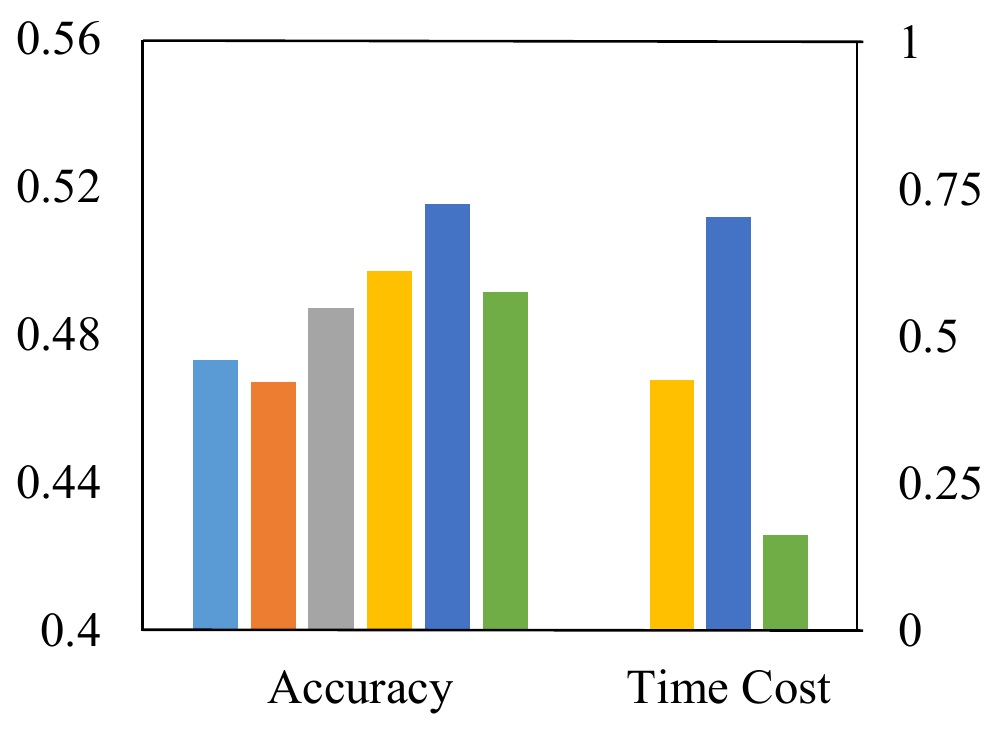}
    \end{minipage}%
    }%
    \subfigure[ADT]{
    \begin{minipage}[t]{0.236\linewidth}
    \centering
    \includegraphics[width=0.95\linewidth]{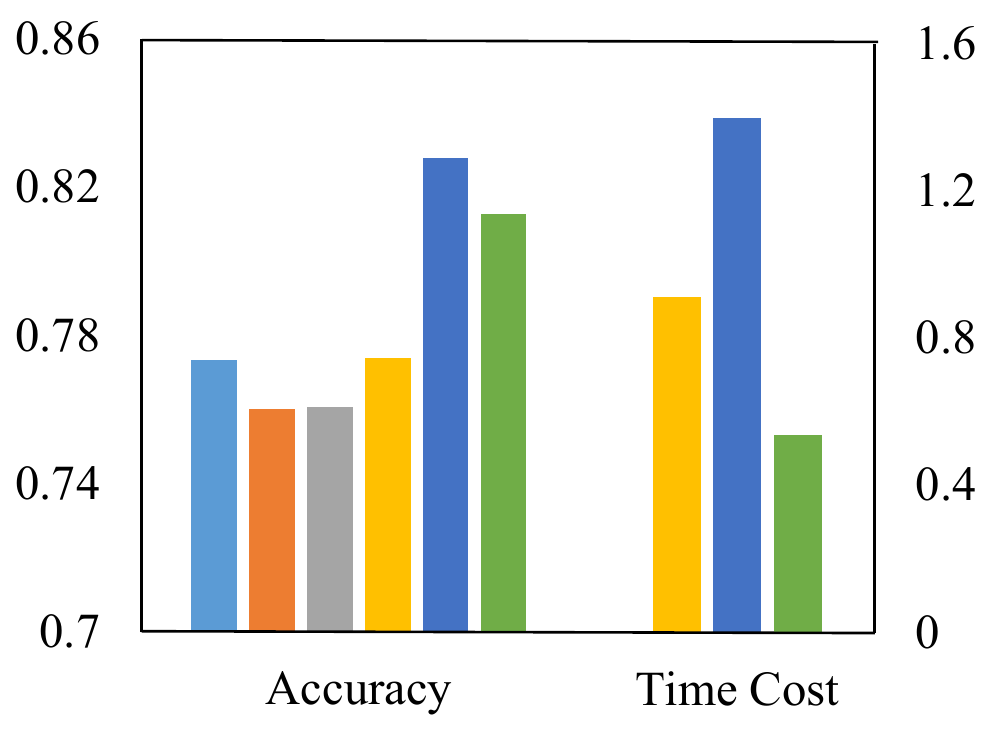}
    \end{minipage}%
    }%
    \subfigure[Dota]{
    \begin{minipage}[t]{0.236\linewidth}
    \centering
    \includegraphics[width=0.95\linewidth]{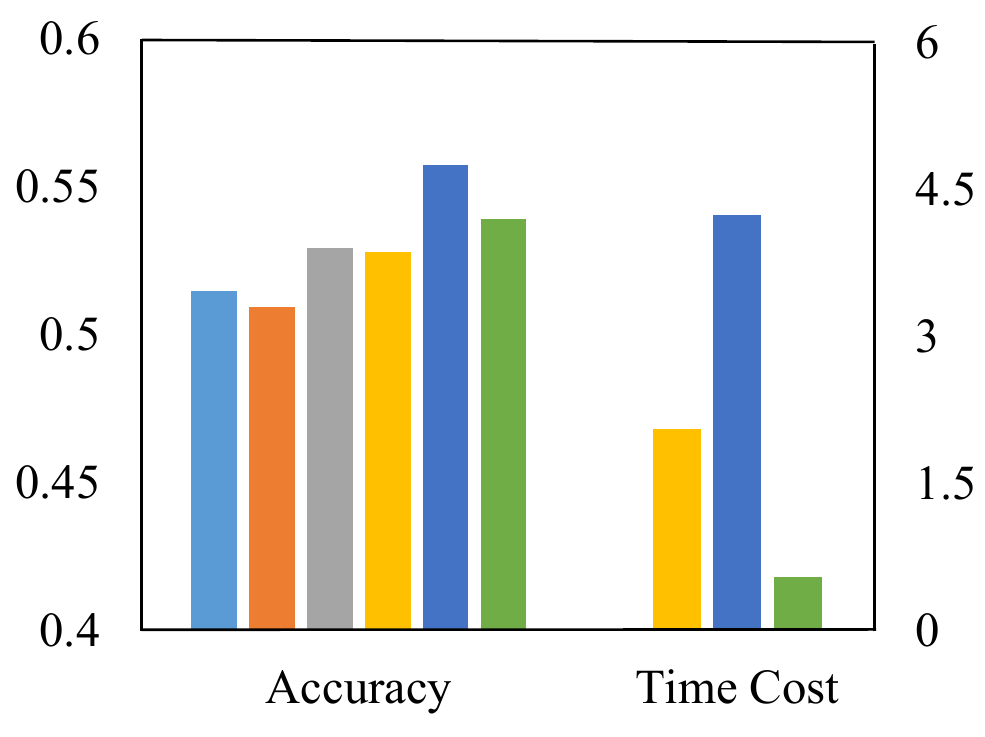}
    \end{minipage}%
    }%
    
	\subfigure[Font]{
    \begin{minipage}[t]{0.29\linewidth}
    \centering
    \includegraphics[width=0.95\linewidth]{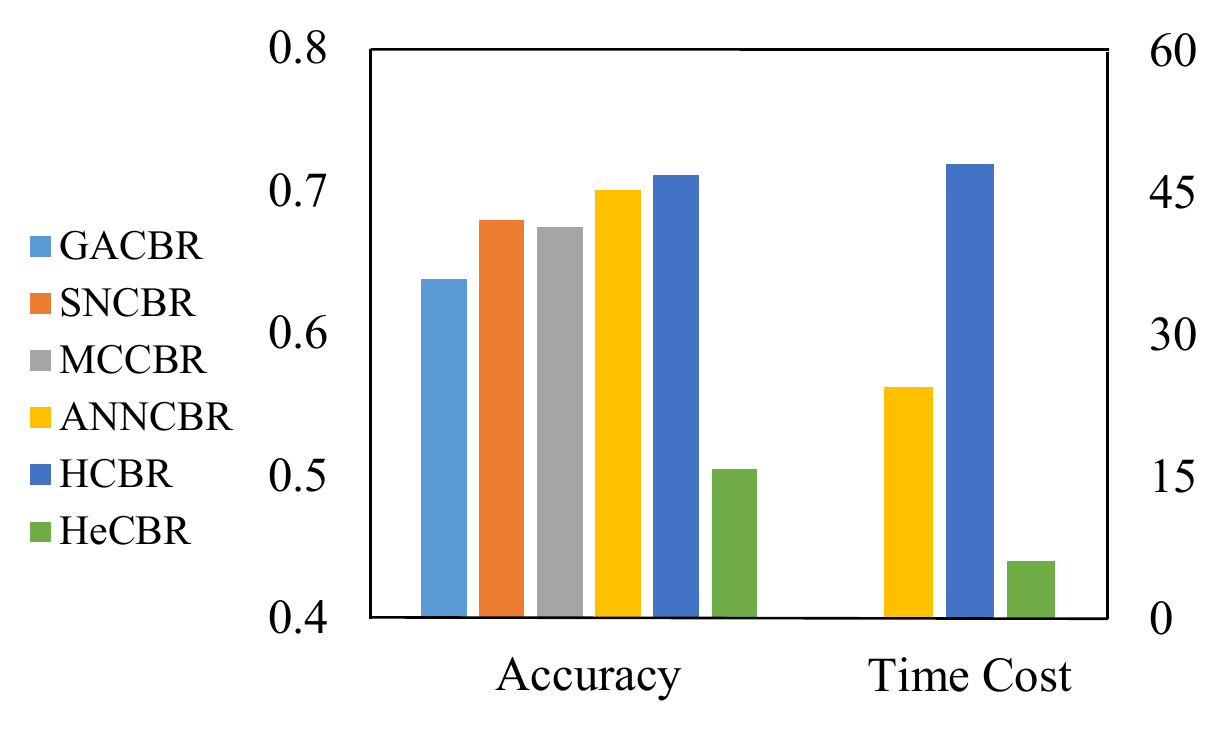}
    \end{minipage}%
    }%
    \subfigure[MT]{
    \begin{minipage}[t]{0.236\linewidth}
    \centering
    \includegraphics[width=0.95\linewidth]{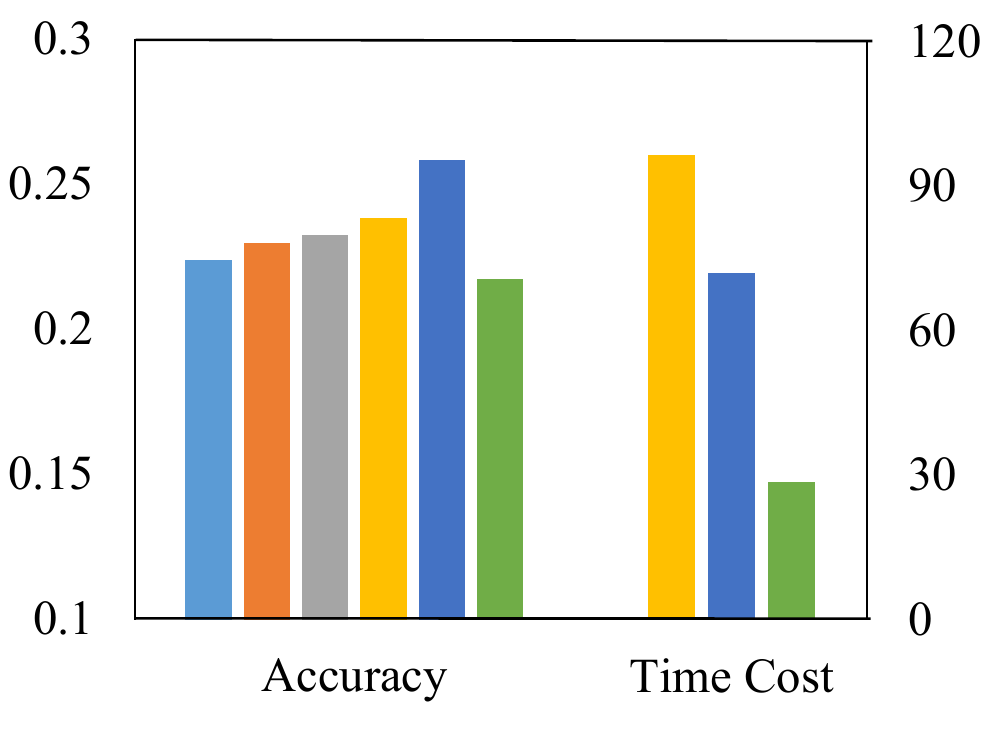}
    \end{minipage}%
    }%
    \subfigure[CT]{
    \begin{minipage}[t]{0.236\linewidth}
    \centering
    \includegraphics[width=0.95\linewidth]{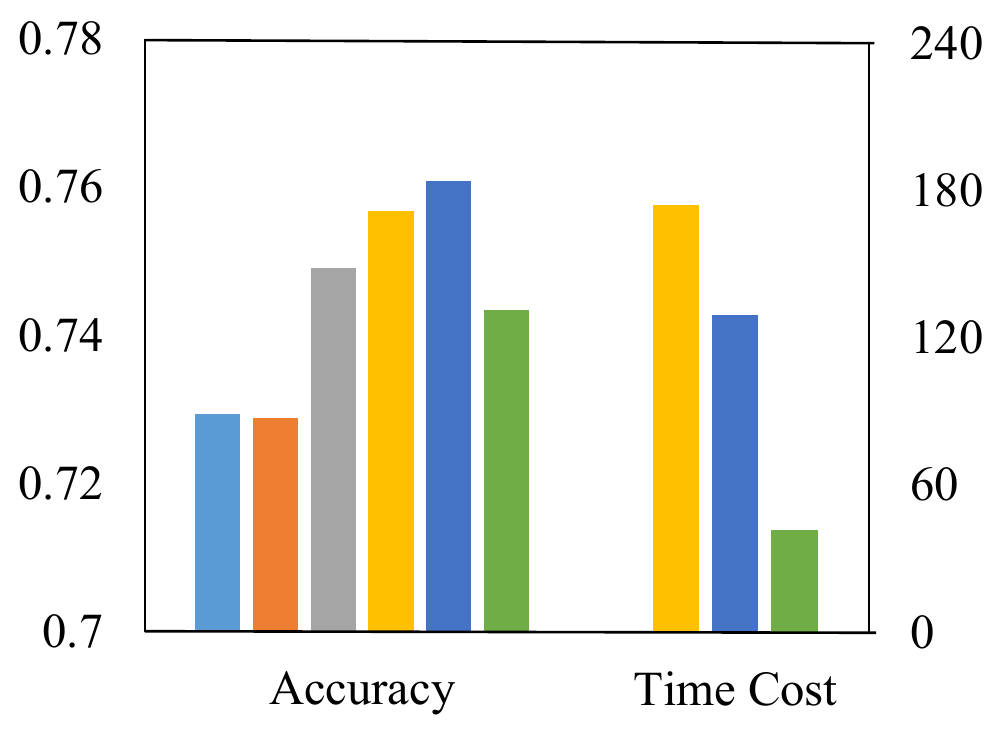}
    \end{minipage}%
    }%
    \subfigure[ML1M]{
    \begin{minipage}[t]{0.236\linewidth}
    \centering
    \includegraphics[width=0.95\linewidth]{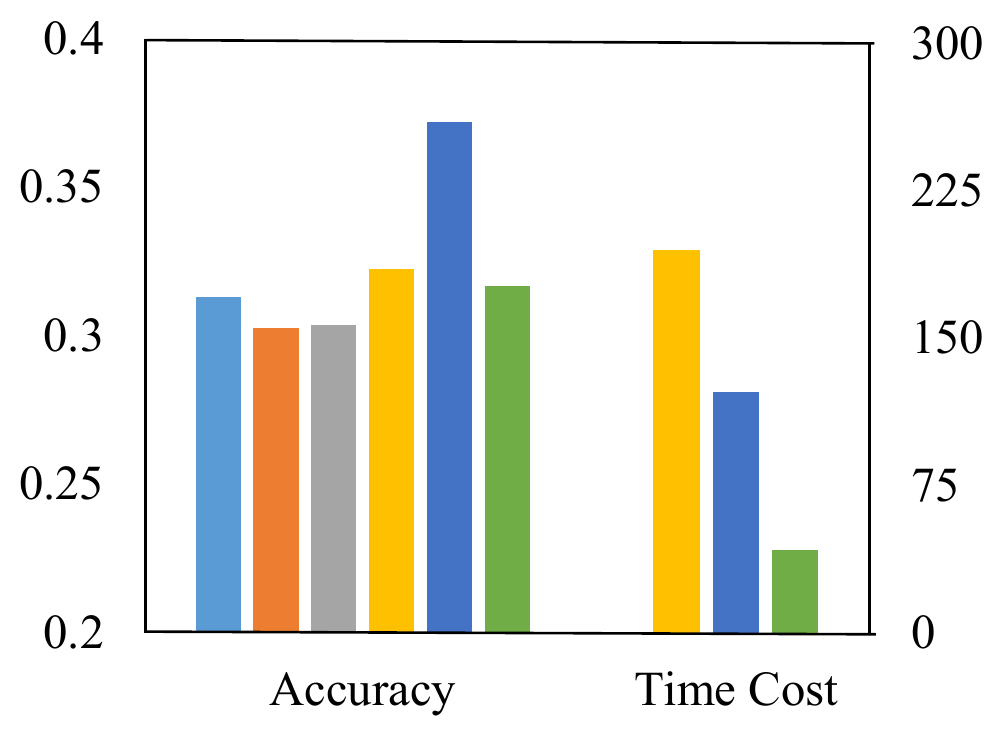}
    \end{minipage}
    }
    \caption{Accuracy and efficiency of comparison with state-of-the-art case-based classification methods. Classification accuracy results and retrieval time costs (seconds/per $10$ retrievals) are reported on the right and left of each figure respectively.}
    \label{fig:perfeffcomp}
\end{figure*}

\subsubsection{Retrieval Efficiency}
\label{sec:retrieval_efficiency}
Since HeCBR and the comparative LSH-based CBR and representation methods perform the same two-step retrieval process, i.e., retrieving candidates from hash tables and reranking the candidates for top-N similar cases, we compare HeCBR with five state-of-the-art case-based methods to further investigate the retrieval efficiency of HeCBR. Note that HCBR proposes an efficient retrieval algorithm, while the other four baselines  adopt the same linear nearest neighbor search (NNS) algorithm. Therefore, we report the time cost of HeCBR, HCBR and ANNCBR in the retrieval efficiency comparison, where ANNCBR is representative for the four NNS-based baselines.

The efficiency comparison is reported in Fig.\ref{fig:perfeffcomp} along with the classification accuracy comparison. From the figures, we observe that HeCBR performs worse than the CBR baselines in terms of accuracy, while it achieves much higher efficiency than the baselines. Specifically, HCBR has the highest accuracy on all datasets, which is reasonable since HCBR leverages the structural information among cases to effectively optimize case similarity measures. HeCBR achieves comparable and even better accuracy compared with the other baselines on most of the datasets, especially on ADT and Dota. Although HeCBR loses accuracy during learning the discrete hash codes, it enhances case representation to compensate for the loss by introducing heterogeneous embedding and capturing feature interaction. In addition, HeCBR greatly improves the retrieval efficiency on all datasets such that it reduces $41\%\sim78\%$ retrieval time costs over ANNCBR and up to $62\%\sim84\%$ retrieval time costs over HCBR as shown on the right of each subfigure in Fig.\ref{fig:perfeffcomp}. As known, the time complexity of the NNS algorithm is proportional to the cardinality of the retrieval set, i.e., the number of cases, HeCBR thus performs increasingly efficiently over ANNCBR with the increase of the number of instances (cases) from ADV to ML1M. Regarding HCBR, it achieves desirable efficiency on MT, CT and ML1M, which is attributed that HCBR structurally organizes all cases and performs large-scale pruning, suitable for large-scale and sparse datasets. From the results, we conclude that HeCBR achieves desirable retrieval accuracy over the CBR-based based, but it greatly reduces retrieval costs. The slight sacrifice of retrieval accuracy for large efficiency improvement is considerably acceptable, especially for the online or real-time scenarios with high demands of retrieval efficiency.

\subsection{Hyperparameter Study (Q3)}
To investigate the parameter sensitivity of HeCBR, we further evaluate the classification accuracy of HeCBR in terms of the weight parameter $\lambda$, the scaling parameter $\alpha$, the view dimension $k_v$, and the embedding dimension $k_w$. All the experiments are conducted under the settings: $\lambda=0.2$, $\alpha=0.6$, $k_v=64$ and $k_w=64$ if not specified.

\subsubsection{Evaluating HeCBR w.r.t. Different $\alpha$ and $\lambda$}
\begin{table}[!t]
  \centering
  \caption{Average accuracy comparison of HeCBR under different values of the weight parameter $\lambda$ and scaling parameter $\alpha$ respectively. The lowest average accuracy values among $\lambda$ and $\alpha$ on each dataset are underlined respectively, and the highest average accuracy value on each dataset is highlighted in bold.}
    \begin{tabular}{l|ccccc|cccc}
    \toprule
    \multirow{2}[4]{*}{Dataset} & \multicolumn{5}{c|}{Weight Parameter ($\lambda$)} & \multicolumn{4}{c}{Scaling Parameter ($\alpha$)} \\
\cmidrule{2-10}          & 0     & 0.2   & 0.4   & 0.6   & 0.8   & 0.2   & 0.4   & 0.6   & 0.8 \\
    \midrule
    ADV   & 0.9661 & 0.9585 & 0.9508 & 0.9261 & \underline{0.9102} & 0.9655 & 0.9661 & \textbf{0.967} & \underline{0.9652} \\
    PT    & \underline{0.433} & 0.4677 & 0.4461 & 0.4544 & 0.4469 & \underline{0.4718} & \textbf{0.5046} & 0.4785 & 0.4866 \\
    ADT   & \textbf{0.8097} & 0.7869 & 0.7808 & \underline{0.7562} & 0.7586 & \underline{0.7972} & 0.8063 & 0.7991 & 0.8033 \\
    Dota  & 0.5378 & 0.503 & 0.5044 & 0.502 & \underline{0.4934} & \textbf{0.5464} & 0.5391 & 0.5429 & \underline{0.5334} \\
    Font  & 0.5031 & 0.5047 & 0.4963 & 0.4834 & \underline{0.4692} & 0.4872 & 0.4934 & \textbf{0.505} & 0.5013 \\
    MT    & \textbf{0.2172} & 0.2048 & 0.1964 & 0.2005 & \underline{0.1889} & 0.1976 & 0.1948 & 0.2041 & \underline{0.1889} \\
    CT    & \underline{0.7383} & 0.7403 & \textbf{0.7436} & 0.7436 & 0.7445 & 0.7433 & \underline{0.7273} & 0.7346 & 0.7308 \\
    ML1M  & \textbf{0.3145} & 0.2903 & \underline{0.2817} & 0.2825 & 0.2855 & 0.2885 & 0.2898 & 0.2916 & \underline{0.2867} \\
    \bottomrule
    \end{tabular}%
  \label{tab:comp_lab_al}%
\end{table}%

As shown in Table \ref{tab:comp_lab_al}, we perform a grid search on $\lambda$ over $\{0, 0.2,0.4,0.6,0.8\}$ and $\alpha$ over $\{0.2,0.4,0.6,0.8\}$ and report the average accuracy under each value of $\lambda$ and $\alpha$ respectively. From the table, we observe that: 1) HeCBR obtains relatively higher performance at $\lambda=0.2$ and $\alpha=0.6$ respectively since HeCBR does not hit the lowest accuracy at $\lambda=0.2$ or $\alpha=0.6$ and achieves much better accuracy compared with other settings. 2) In contrast to $\alpha$, HeCBR has larger but more desirable accuracy fluctuation on $\lambda$, indicating the necessity of a more careful selection of $\lambda$ than that of $\alpha$. This is reasonable because the weight parameter $\lambda$ largely influences the learning objective while $\alpha$ mainly scales the similarities to guarantee higher gradients from the Sigmoid function during backpropagation. 3) In addition, with the increase of $\lambda$, HeCBR may have an increase of accuracy but then has a great decrease when $\lambda$ reaches larger values, e.g., $0.8$, which indicates large $\lambda$ may mislead the learning objective and suggests a relatively small value of $\lambda$. On the contrary, HeCBR achieves better performance at $\alpha=0.4$ or $\alpha=0.6$, suggesting a moderate value of $\alpha$. The results show that large $\alpha$ may not work as a scaling parameter while small $\alpha$ may 
excessively erase the similarity difference.

\subsubsection{Evaluating HeCBR w.r.t. Different View Dimension $k_v$ and Embedding Dimension $k_w$}
Similarly, we perform a grid search on $k_v$ over $\{16,32,64,128,256\}$ and $k_w$ over $\{16,32,64,128,256\}$ and report the average accuracy under each value of $k_v$ and $k_w$ respectively in Table \ref{tab:comp_kw_kv}. From the results, we observe that: 1) HeCBR achieves slightly better performance when $k_v$ and $k_w$ take more moderate values, e.g., $k_v, k_v\in\{32, 64, 128\}$, while smaller or larger embedding dimensions may lead to more inferior performance possibly due to underfitting and overfitting  respectively; 2) overall, HeCBR achieves stable and desirable performance, e.g., without terribly poor accuracy, on all the datasets over the grid search.

\begin{table}[!t]
  \centering
  \caption{Average accuracy comparison of HeCBR under different values of the view dimension $k_v$ and embedding dimension $k_w$ respectively. The lowest average accuracy values between $k_v$ and $k_w$ on each dataset are underlined respectively, and the highest average accuracy value on each dataset is highlighted in bold.}
  \scalebox{0.95}{
    \begin{tabular}{l|ccccc|ccccc}
    \toprule
    \multirow{2}[4]{*}{Dataset} & \multicolumn{5}{c|}{View Dimension}   & \multicolumn{5}{c}{Embedding Dimension} \\
\cmidrule{2-11}          & 16    & 32    & 64    & 128   & 256   & 16    & 32    & 64    & 128   & 256 \\
    \midrule
    ADV   & 0.9685 & 0.9673 & \underline{0.9664} & \textbf{0.9698} & 0.9682 & 0.9658 & 0.967 & 0.9658 & \underline{0.9634} & 0.9652 \\
    PT    & \underline{0.433} & 0.492 & 0.446 & 0.454 & 0.447 & \underline{0.472} & \textbf{0.505} & 0.492 & 0.487 & 0.485 \\
    ADT   & \underline{0.7972} & 0.8049 & 0.8102 & \textbf{0.8114} & 0.8079 & \underline{0.7891} & 0.7937 & 0.7963 & 0.7998 & 0.8003 \\
    Dota  & \underline{0.5308} & 0.5353 & \textbf{0.5458} & 0.5341 & 0.5421 & 0.5375 & 0.5446 & 0.5413 & 0.5396 & \underline{0.5313} \\
    Font  & 0.5023 & 0.5047 & 0.4976 & 0.4821 & \underline{0.4687} & \underline{0.4867} & 0.4953 & \textbf{0.5049} & 0.5041 & 0.4943 \\
    MT    & 0.2037 & 0.1955 & 0.1951 & 0.1971 & \underline{0.1839} & 0.2005 & \textbf{0.2061} & 0.1996 & 0.2006 & \underline{0.1969} \\
    CT    & 0.7346 & 0.7375 & 0.7339 & 0.7326 & \underline{0.7308} & \underline{0.7352} & 0.7386 & \textbf{0.7393} & 0.7364 & 0.7361 \\
    ML1M  & \textbf{0.3253} & 0.2945 & 0.2969 & 0.2936 & \underline{0.2919} & 0.2855 & 0.294 & 0.2938 & \underline{0.2768} & 0.287 \\
    \bottomrule
    \end{tabular}%
    }
  \label{tab:comp_kw_kv}%
\end{table}%

\begin{figure*}[!t]
    \centering
    \subfigure[ADV]{
    \begin{minipage}[t]{0.24\linewidth}
    \centering
    \includegraphics[width=\linewidth]{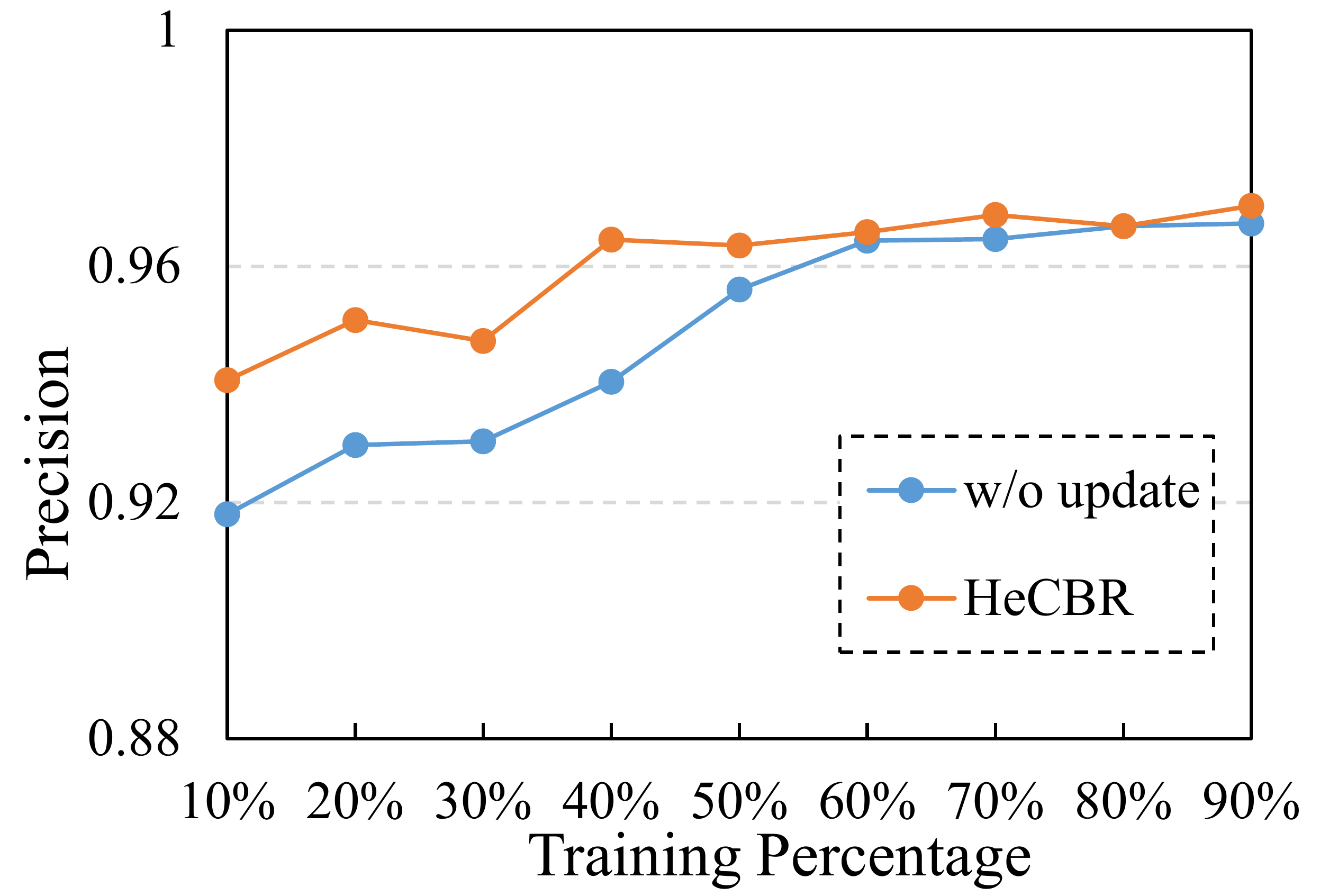}
    \end{minipage}%
    }%
	 \subfigure[PT]{
    \begin{minipage}[t]{0.24\linewidth}
    \centering
    \includegraphics[width=\linewidth]{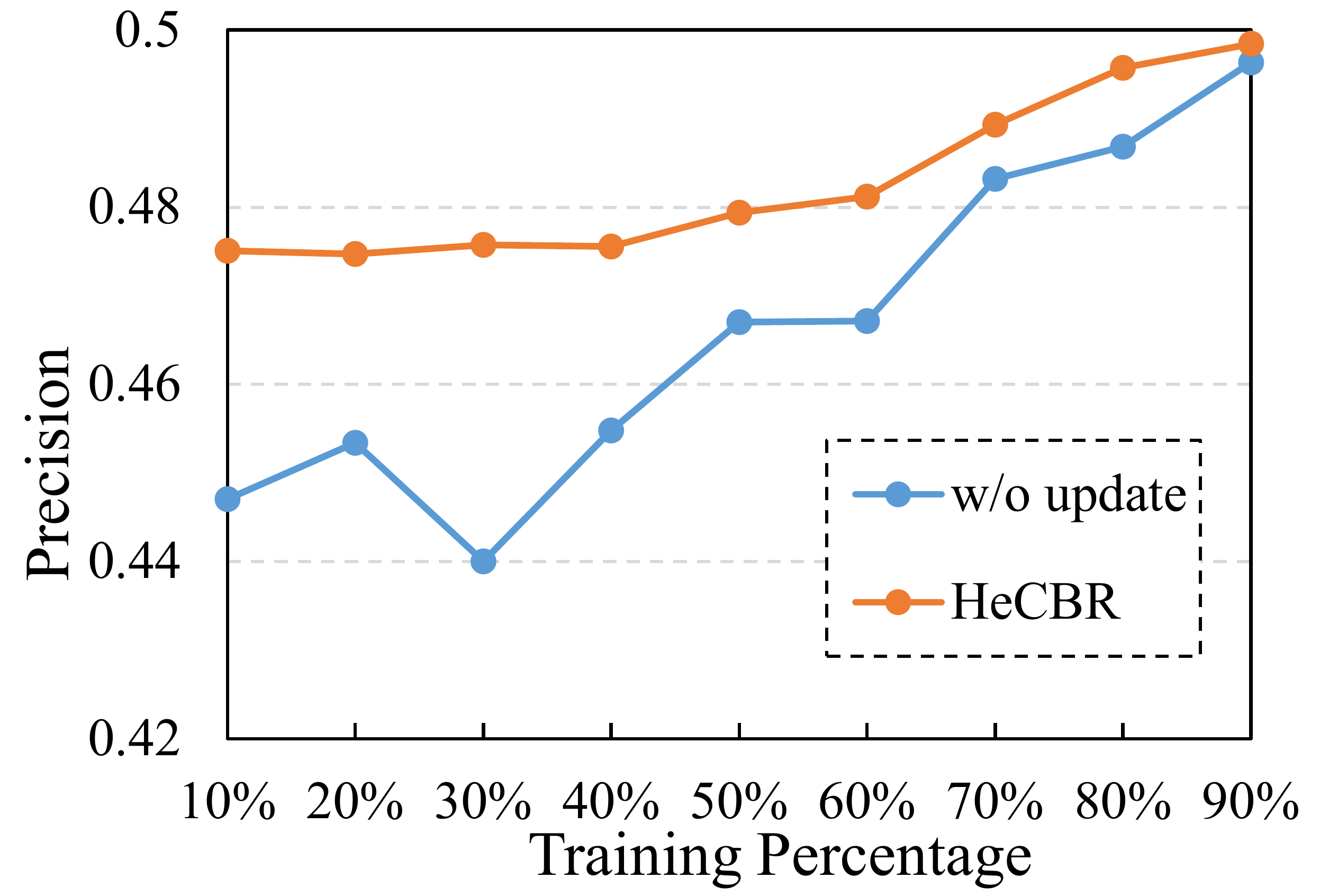}
    \end{minipage}%
    }%
    \subfigure[ADT]{
    \begin{minipage}[t]{0.24\linewidth}
    \centering
    \includegraphics[width=\linewidth]{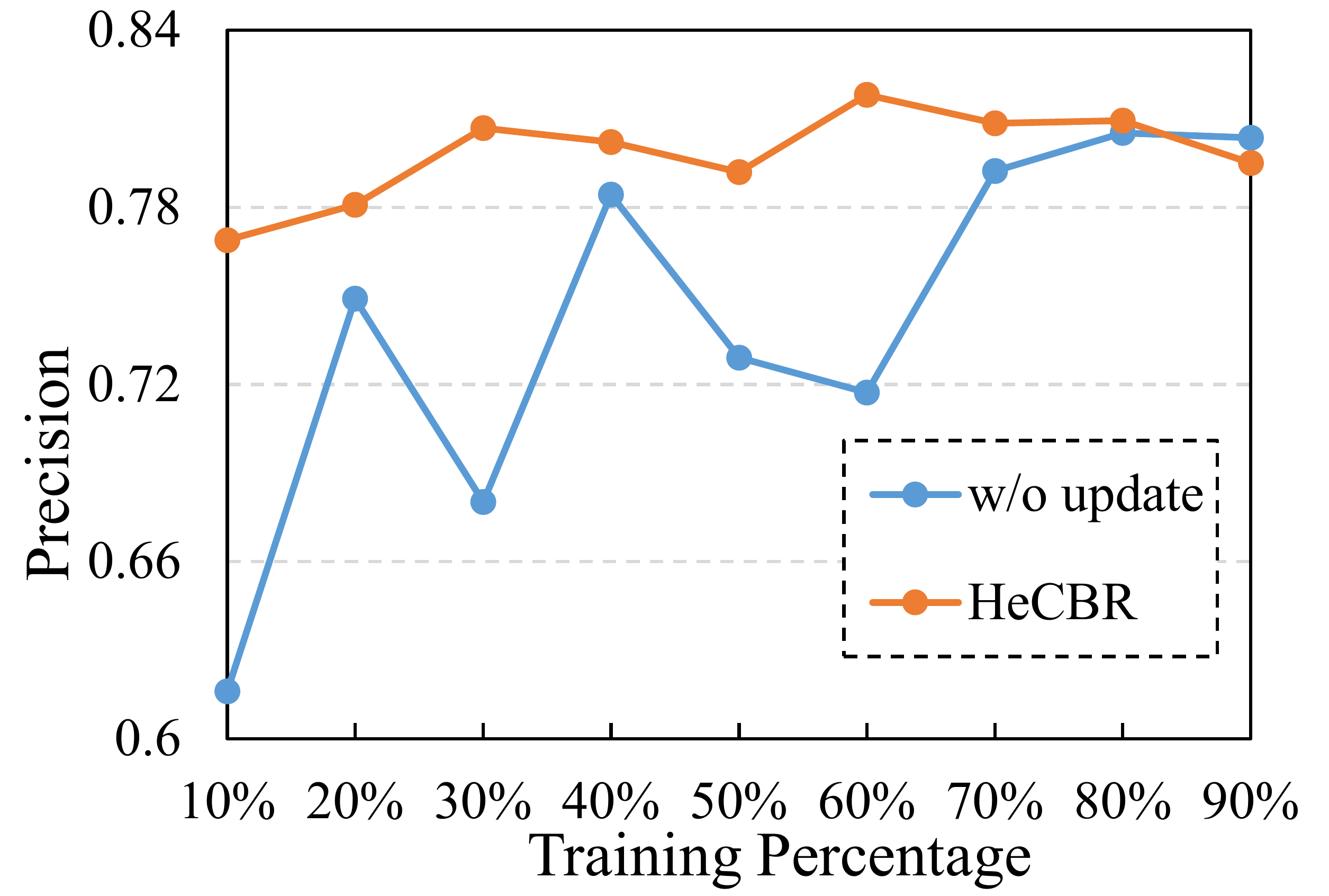}
    \end{minipage}%
    }%
    \subfigure[Dota]{
    \begin{minipage}[t]{0.24\linewidth}
    \centering
    \includegraphics[width=\linewidth]{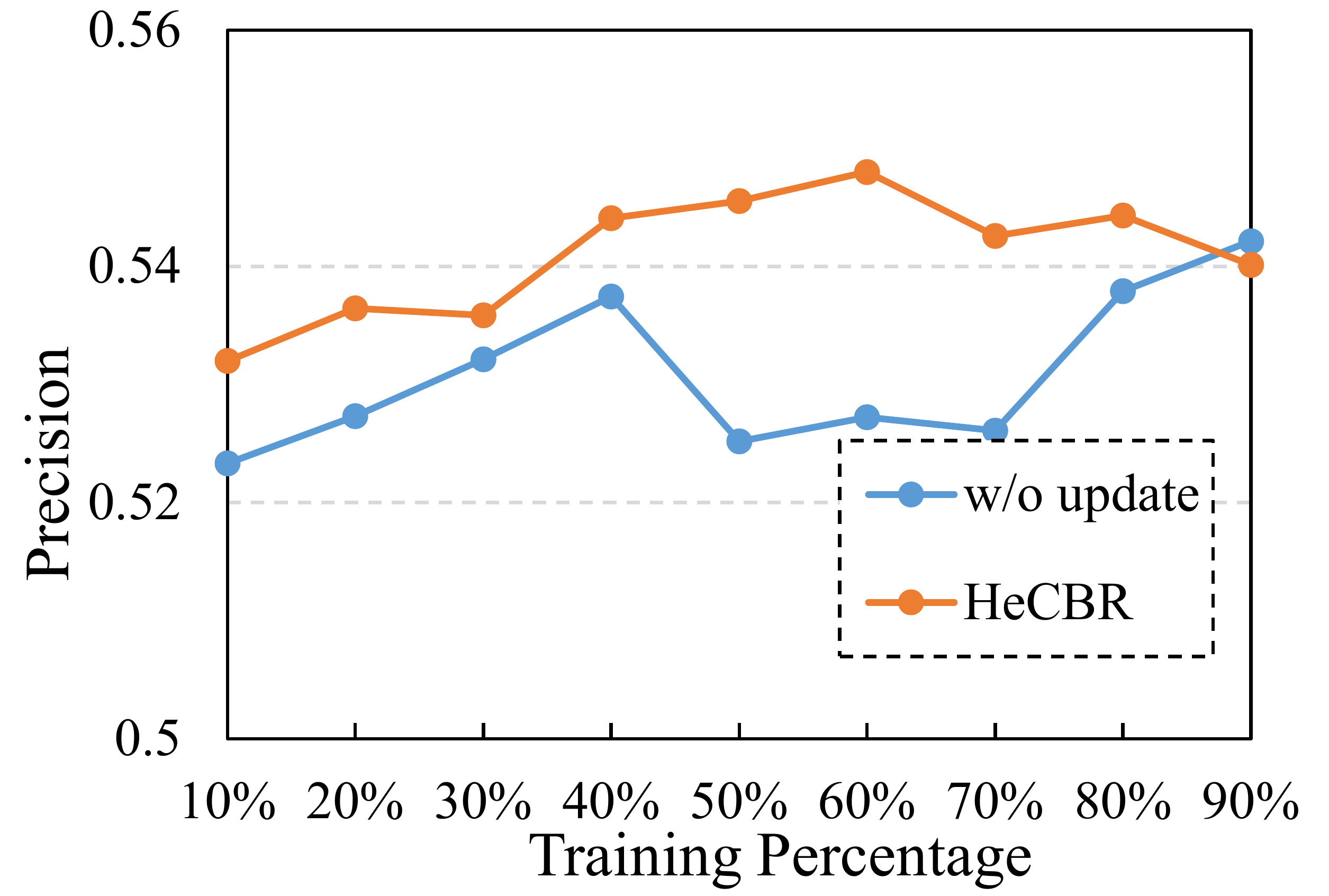}
    \end{minipage}%
    }%
    
	\subfigure[Font]{
    \begin{minipage}[t]{0.24\linewidth}
    \centering
    \includegraphics[width=\linewidth]{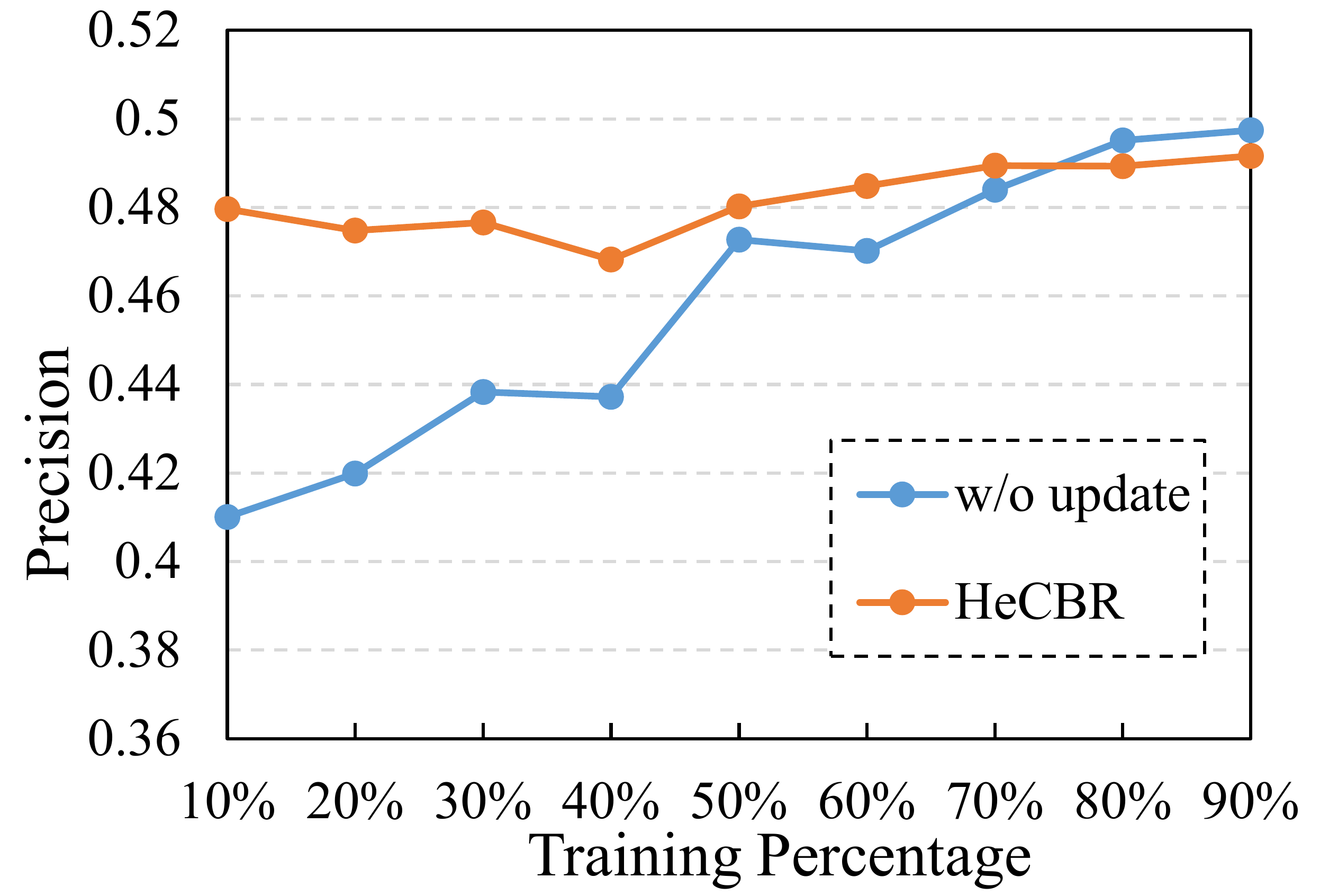}
    \end{minipage}%
    }%
    \subfigure[MT]{
    \begin{minipage}[t]{0.24\linewidth}
    \centering
    \includegraphics[width=\linewidth]{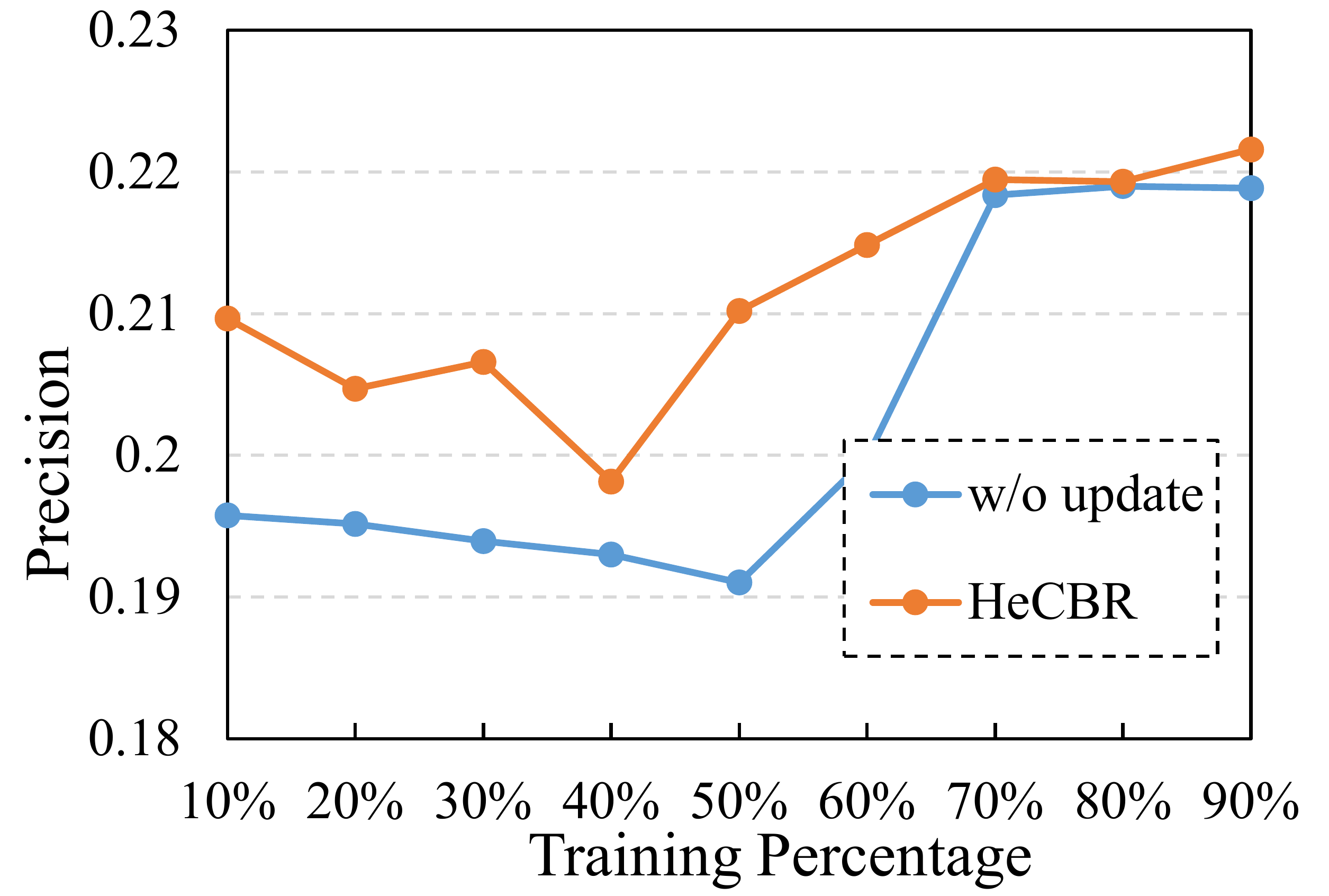}
    \end{minipage}%
    }%
    \subfigure[CT]{
    \begin{minipage}[t]{0.24\linewidth}
    \centering
    \includegraphics[width=\linewidth]{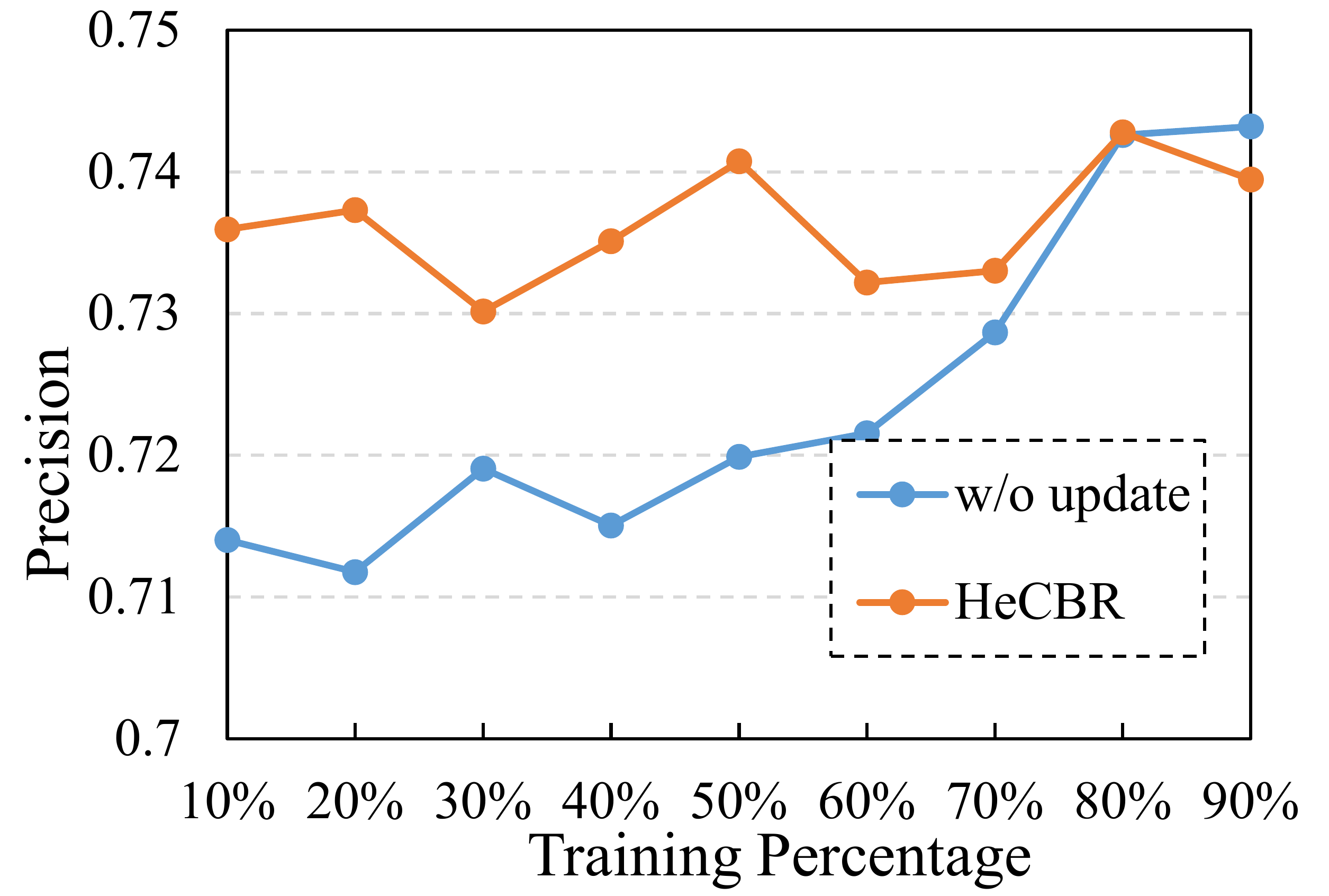}
    \end{minipage}%
    }%
    \subfigure[ML1M]{
    \begin{minipage}[t]{0.24\linewidth}
    \centering
    \includegraphics[width=\linewidth]{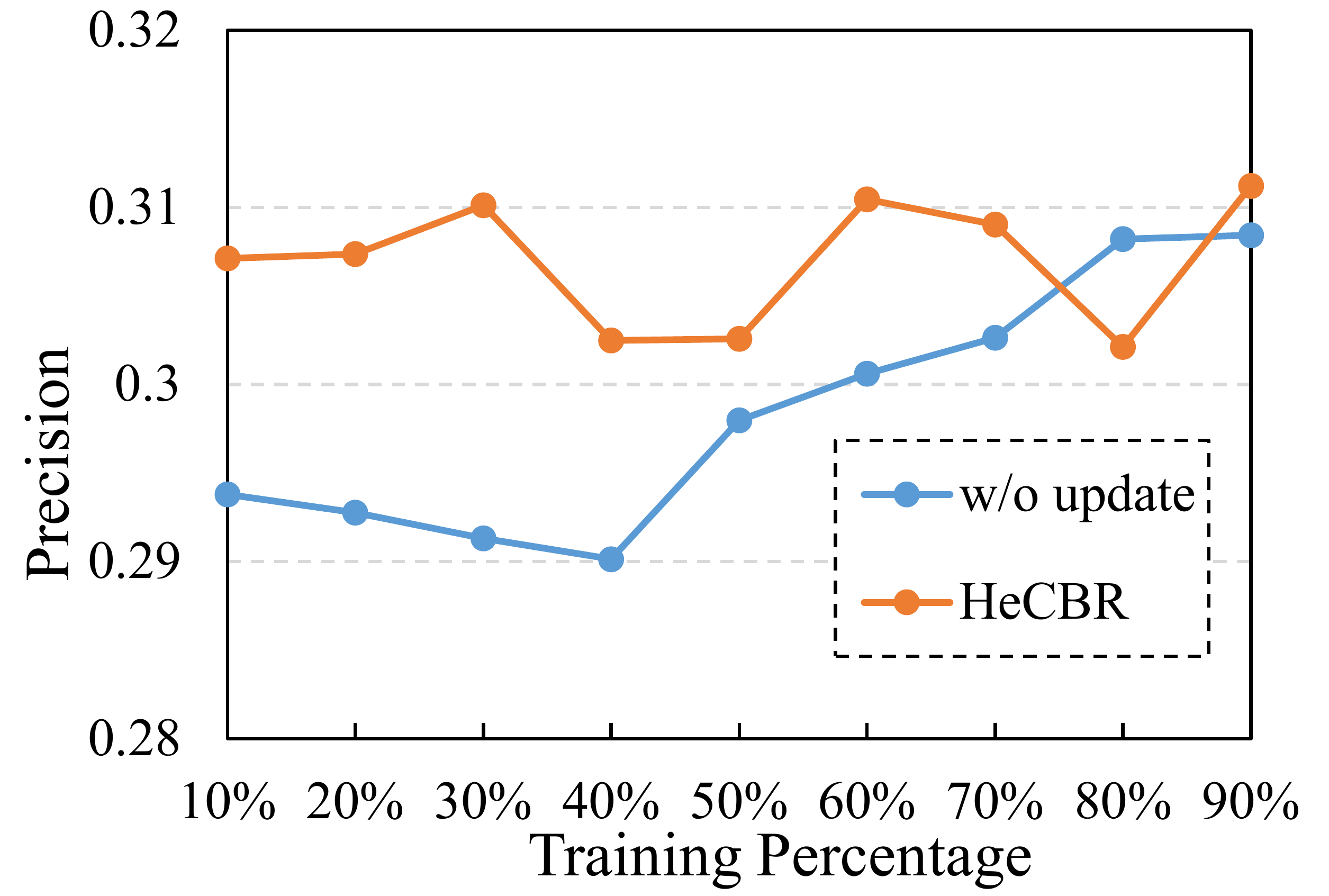}
    \end{minipage}%
    }%
    \caption{Accuracy comparison of HeCBR and its variant \textbf{w/o update} under different training sample rates.}
    \label{fig:update_results}
\end{figure*}

\subsection{Performance Under Adaptive Update (Q4)}
To investigate the stability of HeCBR under different training sample rates, we gradually increase the proportion of training samples on each dataset in Table {\ref{tab:data_stats}} from $10\%$ to $90\%$ in the experiments. We compare HeCBR with its variant (denoted as \textbf{w/o update}) that does not apply the adaptive update strategy to retain newly-solved cases in terms of classification accuracy, and the results are shown in Fig.\ref{fig:update_results}, which indicate three observations:
\begin{itemize}
    \item HeCBR outperforms its variant \textbf{w/o update} nearly under all sampling rates on all datasets, indicating that the update strategy effectively retains beneficial cases to update the case base and hash function for improving future classification.
    \item With the increase of training sample rates, HeCBR shows less fluctuation and increase than the variant \textbf{w/o update}, demonstrating the robustness of HeCBR in relation to different proportions of training samples. The large accuracy increase of \textbf{w/o update} is attributed to the fact that more samples generally benefit the performance even for non-incremental methods.
    \item When the training sample rates reach $80\%~90\%$, the variant \textbf{w/o update} gets comparable or slightly better accuracy than HeCBR. This is because HeCBR may lose its superiority or even introduce noises with the update strategy when the training samples are sufficient. Notably, the proposed adaptive strategy effectively avoids invalid updates and alleviates the inferior performance to a great extent.
\end{itemize}

\section{Conclusion}
In this work, we propose a novel deep hashing network to enhance case-based reasoning. Specifically, the proposed network introduces multiview feature interactions to represent high-dimensional and heterogeneous cases and generates binary hash codes with a quantization regularizer to control the quantization loss. We further propose an adaptive learning loss to strategically update the hash function in the phase of case retraining. Extensive experimental results on public datasets show the superiority of HeCBR over the state-of-the-art hash-based CBR methods in terms of classification and retrieval performance and demonstrate the higher efficiency of HeCBR than the state-of-the-art CBR methods. In the near future, we will consider more complicated scenarios, e.g., case-based planning, detection and decision support, where it may not have sufficient supervision information, we attempt to design an unsupervised/sim-supervised hash learning paradigm for case-based reasoning.

\begin{acks}
This work is supported in part by Australian Research Council Discovery Grant (DP190101079), ARC Future Fellowship Grant (FT190100734), the National Key R\&D Program of China (2019YFB1406300), National Natural Science Foundation of China (No. 61502033), and the Fundamental Research Funds for the Central Universities.
\end{acks}

\bibliographystyle{ACM-Reference-Format}
\bibliography{acmart}


\begin{thebibliography}{92}


\ifx \showCODEN    \undefined \def \showCODEN     #1{\unskip}     \fi
\ifx \showDOI      \undefined \def \showDOI       #1{#1}\fi
\ifx \showISBNx    \undefined \def \showISBNx     #1{\unskip}     \fi
\ifx \showISBNxiii \undefined \def \showISBNxiii  #1{\unskip}     \fi
\ifx \showISSN     \undefined \def \showISSN      #1{\unskip}     \fi
\ifx \showLCCN     \undefined \def \showLCCN      #1{\unskip}     \fi
\ifx \shownote     \undefined \def \shownote      #1{#1}          \fi
\ifx \showarticletitle \undefined \def \showarticletitle #1{#1}   \fi
\ifx \showURL      \undefined \def \showURL       {\relax}        \fi
\providecommand\bibfield[2]{#2}
\providecommand\bibinfo[2]{#2}
\providecommand\natexlab[1]{#1}
\providecommand\showeprint[2][]{arXiv:#2}

\bibitem[\protect\citeauthoryear{Begum, Ahmed, Funk, Xiong, and Folke}{Begum
  et~al\mbox{.}}{2011}]%
        {BegumAFXF11}
\bibfield{author}{\bibinfo{person}{Shahina Begum},
  \bibinfo{person}{Mobyen~Uddin Ahmed}, \bibinfo{person}{Peter Funk},
  \bibinfo{person}{Ning Xiong}, {and} \bibinfo{person}{Mia Folke}.}
  \bibinfo{year}{2011}\natexlab{}.
\newblock \showarticletitle{Case-Based Reasoning Systems in the Health
  Sciences: {A} Survey of Recent Trends and Developments}.
\newblock \bibinfo{journal}{\emph{{IEEE} Trans. Systems, Man, and Cybernetics,
  Part {C}}} \bibinfo{volume}{41}, \bibinfo{number}{4} (\bibinfo{year}{2011}),
  \bibinfo{pages}{421--434}.
\newblock


\bibitem[\protect\citeauthoryear{Beutel, Covington, Jain, Xu, Li, Gatto, and
  Chi}{Beutel et~al\mbox{.}}{2018}]%
        {BeutelCJXLGC18}
\bibfield{author}{\bibinfo{person}{Alex Beutel}, \bibinfo{person}{Paul
  Covington}, \bibinfo{person}{Sagar Jain}, \bibinfo{person}{Can Xu},
  \bibinfo{person}{Jia Li}, \bibinfo{person}{Vince Gatto}, {and}
  \bibinfo{person}{Ed~H. Chi}.} \bibinfo{year}{2018}\natexlab{}.
\newblock \showarticletitle{Latent Cross: Making Use of Context in Recurrent
  Recommender Systems}. In \bibinfo{booktitle}{\emph{{WSDM}}}.
  \bibinfo{publisher}{{ACM}}, \bibinfo{pages}{46--54}.
\newblock


\bibitem[\protect\citeauthoryear{Biswas, Sinha, Purakayastha, and
  Marbaniang}{Biswas et~al\mbox{.}}{2014}]%
        {BiswasSinha-524}
\bibfield{author}{\bibinfo{person}{Saroj~Kr. Biswas}, \bibinfo{person}{Nidul
  Sinha}, \bibinfo{person}{Biswajit Purakayastha}, {and}
  \bibinfo{person}{Leniency Marbaniang}.} \bibinfo{year}{2014}\natexlab{}.
\newblock \showarticletitle{Hybrid expert system using case based reasoning and
  neural network for classification}.
\newblock \bibinfo{journal}{\emph{Biol. Inspir. Cogn. Arc.}}
  \bibinfo{volume}{9} (\bibinfo{year}{2014}), \bibinfo{pages}{57--70}.
\newblock
\urldef\tempurl%
\url{https://doi.org/10.1016/j.bica.2014.06.004}
\showDOI{\tempurl}


\bibitem[\protect\citeauthoryear{{\c{C}}akir and Sclaroff}{{\c{C}}akir and
  Sclaroff}{2015}]%
        {CakirS15}
\bibfield{author}{\bibinfo{person}{Fatih {\c{C}}akir} {and}
  \bibinfo{person}{Stan Sclaroff}.} \bibinfo{year}{2015}\natexlab{}.
\newblock \showarticletitle{Adaptive Hashing for Fast Similarity Search}. In
  \bibinfo{booktitle}{\emph{{ICCV}}}. \bibinfo{pages}{1044--1052}.
\newblock


\bibitem[\protect\citeauthoryear{Cao}{Cao}{2015}]%
        {ipm_Cao15}
\bibfield{author}{\bibinfo{person}{Longbing Cao}.}
  \bibinfo{year}{2015}\natexlab{}.
\newblock \showarticletitle{Coupling learning of complex interactions}.
\newblock \bibinfo{journal}{\emph{Inf. Process. Manage.}} \bibinfo{volume}{51},
  \bibinfo{number}{2} (\bibinfo{year}{2015}), \bibinfo{pages}{167--186}.
\newblock


\bibitem[\protect\citeauthoryear{Cao}{Cao}{2018}]%
        {dst_Cao15}
\bibfield{author}{\bibinfo{person}{Longbing Cao}.}
  \bibinfo{year}{2018}\natexlab{}.
\newblock \bibinfo{booktitle}{\emph{Data Science Thinking: The Next Scientific,
  Technological and Economic Revolution}}.
\newblock \bibinfo{publisher}{Springer International Publishing}.
\newblock
\showISBNx{978-3-319-95092-1}


\bibitem[\protect\citeauthoryear{Cao, Long, Liu, and Wang}{Cao
  et~al\mbox{.}}{2018}]%
        {CaoLL018}
\bibfield{author}{\bibinfo{person}{Yue Cao}, \bibinfo{person}{Mingsheng Long},
  \bibinfo{person}{Bin Liu}, {and} \bibinfo{person}{Jianmin Wang}.}
  \bibinfo{year}{2018}\natexlab{}.
\newblock \showarticletitle{Deep Cauchy Hashing for Hamming Space Retrieval}.
  In \bibinfo{booktitle}{\emph{{CVPR}}}. \bibinfo{pages}{1229--1237}.
\newblock


\bibitem[\protect\citeauthoryear{Chan}{Chan}{2005}]%
        {Chan05}
\bibfield{author}{\bibinfo{person}{Felix T.~S. Chan}.}
  \bibinfo{year}{2005}\natexlab{}.
\newblock \showarticletitle{Application of a hybrid case-based reasoning
  approach in electroplating industry}.
\newblock \bibinfo{journal}{\emph{Expert Syst. Appl.}} \bibinfo{volume}{29},
  \bibinfo{number}{1} (\bibinfo{year}{2005}), \bibinfo{pages}{121--130}.
\newblock


\bibitem[\protect\citeauthoryear{Cheng, Miao, Wang, and Cao}{Cheng
  et~al\mbox{.}}{2013}]%
        {ChengMWC13}
\bibfield{author}{\bibinfo{person}{Xin Cheng}, \bibinfo{person}{Duoqian Miao},
  \bibinfo{person}{Can Wang}, {and} \bibinfo{person}{Longbing Cao}.}
  \bibinfo{year}{2013}\natexlab{}.
\newblock \showarticletitle{Coupled term-term relation analysis for document
  clustering}. In \bibinfo{booktitle}{\emph{{IJCNN}'2013}}.
  \bibinfo{pages}{1--8}.
\newblock


\bibitem[\protect\citeauthoryear{Chuang}{Chuang}{2013}]%
        {Chuang13}
\bibfield{author}{\bibinfo{person}{Chun{-}Ling Chuang}.}
  \bibinfo{year}{2013}\natexlab{}.
\newblock \showarticletitle{Application of hybrid case-based reasoning for
  enhanced performance in bankruptcy prediction}.
\newblock \bibinfo{journal}{\emph{Inf. Sci.}}  \bibinfo{volume}{236}
  (\bibinfo{year}{2013}), \bibinfo{pages}{174--185}.
\newblock


\bibitem[\protect\citeauthoryear{Dalleau, Couceiro, and
  Sma{\"{\i}}l{-}Tabbone}{Dalleau et~al\mbox{.}}{2020}]%
        {DalleauCS20}
\bibfield{author}{\bibinfo{person}{Kevin Dalleau}, \bibinfo{person}{Miguel
  Couceiro}, {and} \bibinfo{person}{Malika Sma{\"{\i}}l{-}Tabbone}.}
  \bibinfo{year}{2020}\natexlab{}.
\newblock \showarticletitle{Unsupervised extra trees: a stochastic approach to
  compute similarities in heterogeneous data}.
\newblock \bibinfo{journal}{\emph{Int. J. Data Sci. Anal.}}
  \bibinfo{volume}{9}, \bibinfo{number}{4} (\bibinfo{year}{2020}),
  \bibinfo{pages}{447--459}.
\newblock


\bibitem[\protect\citeauthoryear{Dang, Deng, Yang, and Huang}{Dang
  et~al\mbox{.}}{2020}]%
        {DangDYH20}
\bibfield{author}{\bibinfo{person}{Zhiyuan Dang}, \bibinfo{person}{Cheng Deng},
  \bibinfo{person}{Xu Yang}, {and} \bibinfo{person}{Heng Huang}.}
  \bibinfo{year}{2020}\natexlab{}.
\newblock \showarticletitle{Multi-Scale Fusion Subspace Clustering Using
  Similarity Constraint}. In \bibinfo{booktitle}{\emph{{CVPR}}}.
  \bibinfo{pages}{6657--6666}.
\newblock


\bibitem[\protect\citeauthoryear{Dasgupta, Stevens, and Navlakha}{Dasgupta
  et~al\mbox{.}}{2017}]%
        {flyhash}
\bibfield{author}{\bibinfo{person}{Sanjoy Dasgupta}, \bibinfo{person}{Charles~F
  Stevens}, {and} \bibinfo{person}{Saket Navlakha}.}
  \bibinfo{year}{2017}\natexlab{}.
\newblock \showarticletitle{A neural algorithm for a fundamental computing
  problem}.
\newblock \bibinfo{journal}{\emph{Science}} \bibinfo{volume}{358},
  \bibinfo{number}{6364} (\bibinfo{year}{2017}), \bibinfo{pages}{793--796}.
\newblock


\bibitem[\protect\citeauthoryear{Datar, Immorlica, Indyk, and Mirrokni}{Datar
  et~al\mbox{.}}{2004}]%
        {DatarIIM04}
\bibfield{author}{\bibinfo{person}{Mayur Datar}, \bibinfo{person}{Nicole
  Immorlica}, \bibinfo{person}{Piotr Indyk}, {and} \bibinfo{person}{Vahab~S.
  Mirrokni}.} \bibinfo{year}{2004}\natexlab{}.
\newblock \showarticletitle{Locality-sensitive hashing scheme based on p-stable
  distributions}. In \bibinfo{booktitle}{\emph{Symposium on Computational
  Geometry}}. \bibinfo{pages}{253--262}.
\newblock


\bibitem[\protect\citeauthoryear{de~M{\'{a}}ntaras, McSherry, Bridge, Leake,
  Smyth, Craw, Faltings, Maher, Cox, Forbus, Keane, Aamodt, and
  Watson}{de~M{\'{a}}ntaras et~al\mbox{.}}{2005}]%
        {MantarasMBLSCFMCFKAW05}
\bibfield{author}{\bibinfo{person}{Ram{\'{o}}n~L{\'{o}}pez de M{\'{a}}ntaras},
  \bibinfo{person}{David McSherry}, \bibinfo{person}{Derek~G. Bridge},
  \bibinfo{person}{David~B. Leake}, \bibinfo{person}{Barry Smyth},
  \bibinfo{person}{Susan Craw}, \bibinfo{person}{Boi Faltings},
  \bibinfo{person}{Mary~Lou Maher}, \bibinfo{person}{Michael~T. Cox},
  \bibinfo{person}{Kenneth~D. Forbus}, \bibinfo{person}{Mark~T. Keane},
  \bibinfo{person}{Agnar Aamodt}, {and} \bibinfo{person}{Ian~D. Watson}.}
  \bibinfo{year}{2005}\natexlab{}.
\newblock \showarticletitle{Retrieval, reuse, revision and retention in
  case-based reasoning}.
\newblock \bibinfo{journal}{\emph{Knowl. Eng. Rev.}} \bibinfo{volume}{20},
  \bibinfo{number}{3} (\bibinfo{year}{2005}), \bibinfo{pages}{215--240}.
\newblock


\bibitem[\protect\citeauthoryear{Echihabi}{Echihabi}{2020}]%
        {Echihabi20}
\bibfield{author}{\bibinfo{person}{Karima Echihabi}.}
  \bibinfo{year}{2020}\natexlab{}.
\newblock \showarticletitle{High-Dimensional Vector Similarity Search: From
  Time Series to Deep Network Embeddings}. In
  \bibinfo{booktitle}{\emph{{SIGMOD}}}. \bibinfo{pages}{2829--2832}.
\newblock


\bibitem[\protect\citeauthoryear{Espadoto, Martins, Kerren, Hirata, and
  Telea}{Espadoto et~al\mbox{.}}{2021}]%
        {EspadotoMKHT21}
\bibfield{author}{\bibinfo{person}{Mateus Espadoto},
  \bibinfo{person}{Rafael~Messias Martins}, \bibinfo{person}{Andreas Kerren},
  \bibinfo{person}{Nina S.~T. Hirata}, {and} \bibinfo{person}{Alexandru~C.
  Telea}.} \bibinfo{year}{2021}\natexlab{}.
\newblock \showarticletitle{Toward a Quantitative Survey of Dimension Reduction
  Techniques}.
\newblock \bibinfo{journal}{\emph{{IEEE} Trans. Vis. Comput. Graph.}}
  \bibinfo{volume}{27}, \bibinfo{number}{3} (\bibinfo{year}{2021}),
  \bibinfo{pages}{2153--2173}.
\newblock


\bibitem[\protect\citeauthoryear{Gong, Lazebnik, Gordo, and Perronnin}{Gong
  et~al\mbox{.}}{2013}]%
        {GongLGP13}
\bibfield{author}{\bibinfo{person}{Yunchao Gong}, \bibinfo{person}{Svetlana
  Lazebnik}, \bibinfo{person}{Albert Gordo}, {and} \bibinfo{person}{Florent
  Perronnin}.} \bibinfo{year}{2013}\natexlab{}.
\newblock \showarticletitle{Iterative Quantization: {A} Procrustean Approach to
  Learning Binary Codes for Large-Scale Image Retrieval}.
\newblock \bibinfo{journal}{\emph{{IEEE} Trans. Pattern Anal. Mach. Intell.}}
  \bibinfo{volume}{35}, \bibinfo{number}{12} (\bibinfo{year}{2013}),
  \bibinfo{pages}{2916--2929}.
\newblock


\bibitem[\protect\citeauthoryear{Gu, Liang, Bichindaritz, Zuo, and Wang}{Gu
  et~al\mbox{.}}{2012}]%
        {GuLBZW12}
\bibfield{author}{\bibinfo{person}{Dong{-}xiao Gu},
  \bibinfo{person}{Chang{-}yong Liang}, \bibinfo{person}{Isabelle
  Bichindaritz}, \bibinfo{person}{Chun{-}rong Zuo}, {and} \bibinfo{person}{Jun
  Wang}.} \bibinfo{year}{2012}\natexlab{}.
\newblock \showarticletitle{A case-based knowledge system for safety evaluation
  decision making of thermal power plants}.
\newblock \bibinfo{journal}{\emph{Knowl.-Based Syst.}}  \bibinfo{volume}{26}
  (\bibinfo{year}{2012}), \bibinfo{pages}{185--195}.
\newblock


\bibitem[\protect\citeauthoryear{Gu, Liang, and Zhao}{Gu et~al\mbox{.}}{2017}]%
        {GuLZ17}
\bibfield{author}{\bibinfo{person}{Dongxiao Gu}, \bibinfo{person}{Changyong
  Liang}, {and} \bibinfo{person}{Huimin Zhao}.}
  \bibinfo{year}{2017}\natexlab{}.
\newblock \showarticletitle{A case-based reasoning system based on weighted
  heterogeneous value distance metric for breast cancer diagnosis}.
\newblock \bibinfo{journal}{\emph{Artif. Intell. Medicine}}
  \bibinfo{volume}{77} (\bibinfo{year}{2017}), \bibinfo{pages}{31--47}.
\newblock


\bibitem[\protect\citeauthoryear{Guo, Hu, and Peng}{Guo et~al\mbox{.}}{2014}]%
        {GuoHP14}
\bibfield{author}{\bibinfo{person}{Yuan Guo}, \bibinfo{person}{Jie Hu}, {and}
  \bibinfo{person}{Ying{-}hong Peng}.} \bibinfo{year}{2014}\natexlab{}.
\newblock \showarticletitle{Research of new strategies for improving {CBR}
  system}.
\newblock \bibinfo{journal}{\emph{Artif. Intell. Rev.}} \bibinfo{volume}{42},
  \bibinfo{number}{1} (\bibinfo{year}{2014}), \bibinfo{pages}{1--20}.
\newblock


\bibitem[\protect\citeauthoryear{Hao, Sun, Liu, and Cheng}{Hao
  et~al\mbox{.}}{2017}]%
        {HaoSLC17}
\bibfield{author}{\bibinfo{person}{Botao Hao}, \bibinfo{person}{Will~Wei Sun},
  \bibinfo{person}{Yufeng Liu}, {and} \bibinfo{person}{Guang Cheng}.}
  \bibinfo{year}{2017}\natexlab{}.
\newblock \showarticletitle{Simultaneous Clustering and Estimation of
  Heterogeneous Graphical Models}.
\newblock \bibinfo{journal}{\emph{J. Mach. Learn. Res.}}  \bibinfo{volume}{18}
  (\bibinfo{year}{2017}), \bibinfo{pages}{217:1--217:58}.
\newblock


\bibitem[\protect\citeauthoryear{He, Chang, Radhakrishnan, and Bauer}{He
  et~al\mbox{.}}{2011}]%
        {HeCRB11}
\bibfield{author}{\bibinfo{person}{Junfeng He}, \bibinfo{person}{Shih{-}Fu
  Chang}, \bibinfo{person}{Regunathan Radhakrishnan}, {and}
  \bibinfo{person}{Claus Bauer}.} \bibinfo{year}{2011}\natexlab{}.
\newblock \showarticletitle{Compact hashing with joint optimization of search
  accuracy and time}. In \bibinfo{booktitle}{\emph{{CVPR}}}.
  \bibinfo{pages}{753--760}.
\newblock


\bibitem[\protect\citeauthoryear{He and Chua}{He and Chua}{2017}]%
        {0001C17}
\bibfield{author}{\bibinfo{person}{Xiangnan He} {and}
  \bibinfo{person}{Tat{-}Seng Chua}.} \bibinfo{year}{2017}\natexlab{}.
\newblock \showarticletitle{Neural Factorization Machines for Sparse Predictive
  Analytics}. In \bibinfo{booktitle}{\emph{{SIGIR}}}.
  \bibinfo{pages}{355--364}.
\newblock


\bibitem[\protect\citeauthoryear{Hong, Lin, Yang, Li, Fu, Wang, Qie, and
  Ye}{Hong et~al\mbox{.}}{2020}]%
        {HongLYLFWQY20}
\bibfield{author}{\bibinfo{person}{Huiting Hong}, \bibinfo{person}{Yucheng
  Lin}, \bibinfo{person}{Xiaoqing Yang}, \bibinfo{person}{Zang Li},
  \bibinfo{person}{Kung Fu}, \bibinfo{person}{Zheng Wang},
  \bibinfo{person}{Xiaohu Qie}, {and} \bibinfo{person}{Jieping Ye}.}
  \bibinfo{year}{2020}\natexlab{}.
\newblock \showarticletitle{HetETA: Heterogeneous Information Network Embedding
  for Estimating Time of Arrival}. In \bibinfo{booktitle}{\emph{{KDD}}}.
  \bibinfo{pages}{2444--2454}.
\newblock


\bibitem[\protect\citeauthoryear{Hu, Cao, Cao, Gu, Xu, and Yang}{Hu
  et~al\mbox{.}}{2016}]%
        {HuCCGXY16}
\bibfield{author}{\bibinfo{person}{Liang Hu}, \bibinfo{person}{Longbing Cao},
  \bibinfo{person}{Jian Cao}, \bibinfo{person}{Zhiping Gu},
  \bibinfo{person}{Guandong Xu}, {and} \bibinfo{person}{Dingyu Yang}.}
  \bibinfo{year}{2016}\natexlab{}.
\newblock \showarticletitle{Learning Informative Priors from Heterogeneous
  Domains to Improve Recommendation in Cold-Start User Domains}.
\newblock \bibinfo{journal}{\emph{{ACM} Trans. Inf. Syst.}}
  \bibinfo{volume}{35}, \bibinfo{number}{2} (\bibinfo{year}{2016}),
  \bibinfo{pages}{13:1--13:37}.
\newblock


\bibitem[\protect\citeauthoryear{Hu, Jian, Cao, Gu, Chen, and Amirbekyan}{Hu
  et~al\mbox{.}}{2019}]%
        {0004JCGCA19}
\bibfield{author}{\bibinfo{person}{Liang Hu}, \bibinfo{person}{Songlei Jian},
  \bibinfo{person}{Longbing Cao}, \bibinfo{person}{Zhiping Gu},
  \bibinfo{person}{Qingkui Chen}, {and} \bibinfo{person}{Artak Amirbekyan}.}
  \bibinfo{year}{2019}\natexlab{}.
\newblock \showarticletitle{{HERS:} Modeling Influential Contexts with
  Heterogeneous Relations for Sparse and Cold-Start Recommendation}. In
  \bibinfo{booktitle}{\emph{{AAAI}}}. \bibinfo{pages}{3830--3837}.
\newblock


\bibitem[\protect\citeauthoryear{Indyk and Motwani}{Indyk and Motwani}{1998}]%
        {IndykM98}
\bibfield{author}{\bibinfo{person}{Piotr Indyk} {and} \bibinfo{person}{Rajeev
  Motwani}.} \bibinfo{year}{1998}\natexlab{}.
\newblock \showarticletitle{Approximate Nearest Neighbors: Towards Removing the
  Curse of Dimensionality}. In \bibinfo{booktitle}{\emph{{STOC}}}.
  \bibinfo{pages}{604--613}.
\newblock


\bibitem[\protect\citeauthoryear{Jalali and Leake}{Jalali and Leake}{2018}]%
        {JalaliL18}
\bibfield{author}{\bibinfo{person}{Vahid Jalali} {and} \bibinfo{person}{David
  Leake}.} \bibinfo{year}{2018}\natexlab{}.
\newblock \showarticletitle{Harnessing Hundreds of Millions of Cases:
  Case-Based Prediction at Industrial Scale}. In
  \bibinfo{booktitle}{\emph{{ICCBR}}}, Vol.~\bibinfo{volume}{11156}.
  \bibinfo{pages}{153--169}.
\newblock


\bibitem[\protect\citeauthoryear{Jalali and Leake}{Jalali and Leake}{2015}]%
        {JalaliL15}
\bibfield{author}{\bibinfo{person}{Vahid Jalali} {and}
  \bibinfo{person}{David~B. Leake}.} \bibinfo{year}{2015}\natexlab{}.
\newblock \showarticletitle{{CBR} Meets Big Data: {A} Case Study of Large-Scale
  Adaptation Rule Generation}. In \bibinfo{booktitle}{\emph{{ICCBR}}},
  Vol.~\bibinfo{volume}{9343}. \bibinfo{pages}{181--196}.
\newblock


\bibitem[\protect\citeauthoryear{Jiang, Zhang, Huang, Yang, and Metaxas}{Jiang
  et~al\mbox{.}}{2016}]%
        {JiangZHYM16}
\bibfield{author}{\bibinfo{person}{Menglin Jiang}, \bibinfo{person}{Shaoting
  Zhang}, \bibinfo{person}{Junzhou Huang}, \bibinfo{person}{Lin Yang}, {and}
  \bibinfo{person}{Dimitris~N. Metaxas}.} \bibinfo{year}{2016}\natexlab{}.
\newblock \showarticletitle{Scalable histopathological image analysis via
  supervised hashing with multiple features}.
\newblock \bibinfo{journal}{\emph{Medical Image Anal.}}  \bibinfo{volume}{34}
  (\bibinfo{year}{2016}), \bibinfo{pages}{3--12}.
\newblock


\bibitem[\protect\citeauthoryear{Khosravani and Nasiri}{Khosravani and
  Nasiri}{2020}]%
        {KhosravaniN20}
\bibfield{author}{\bibinfo{person}{Mohammad~Reza Khosravani} {and}
  \bibinfo{person}{Sara Nasiri}.} \bibinfo{year}{2020}\natexlab{}.
\newblock \showarticletitle{Injection molding manufacturing process: review of
  case-based reasoning applications}.
\newblock \bibinfo{journal}{\emph{J. Intell. Manuf.}} \bibinfo{volume}{31},
  \bibinfo{number}{4} (\bibinfo{year}{2020}), \bibinfo{pages}{847--864}.
\newblock


\bibitem[\protect\citeauthoryear{Khosravani, Nasiri, and Weinberg}{Khosravani
  et~al\mbox{.}}{2019}]%
        {KhosravaniNW19}
\bibfield{author}{\bibinfo{person}{Mohammad~Reza Khosravani},
  \bibinfo{person}{Sara Nasiri}, {and} \bibinfo{person}{Kerstin Weinberg}.}
  \bibinfo{year}{2019}\natexlab{}.
\newblock \showarticletitle{Application of case-based reasoning in a fault
  detection system on production of drippers}.
\newblock \bibinfo{journal}{\emph{Appl. Soft Comput.}}  \bibinfo{volume}{75}
  (\bibinfo{year}{2019}), \bibinfo{pages}{227--232}.
\newblock


\bibitem[\protect\citeauthoryear{Kosiorek, Sabour, Teh, and Hinton}{Kosiorek
  et~al\mbox{.}}{2019}]%
        {KosiorekSTH19}
\bibfield{author}{\bibinfo{person}{Adam~R. Kosiorek}, \bibinfo{person}{Sara
  Sabour}, \bibinfo{person}{Yee~Whye Teh}, {and} \bibinfo{person}{Geoffrey~E.
  Hinton}.} \bibinfo{year}{2019}\natexlab{}.
\newblock \showarticletitle{Stacked Capsule Autoencoders}. In
  \bibinfo{booktitle}{\emph{NeurIPS}}. \bibinfo{pages}{15486--15496}.
\newblock


\bibitem[\protect\citeauthoryear{Krishnan, Liang, and Hoffman}{Krishnan
  et~al\mbox{.}}{2018}]%
        {KrishnanLH18}
\bibfield{author}{\bibinfo{person}{Rahul~G. Krishnan}, \bibinfo{person}{Dawen
  Liang}, {and} \bibinfo{person}{Matthew~D. Hoffman}.}
  \bibinfo{year}{2018}\natexlab{}.
\newblock \showarticletitle{On the challenges of learning with inference
  networks on sparse, high-dimensional data}. In
  \bibinfo{booktitle}{\emph{{AISTATS}}}, Vol.~\bibinfo{volume}{84}.
  \bibinfo{pages}{143--151}.
\newblock


\bibitem[\protect\citeauthoryear{Kusner, Paige, and
  Hern{\'{a}}ndez{-}Lobato}{Kusner et~al\mbox{.}}{2017}]%
        {KusnerPH17}
\bibfield{author}{\bibinfo{person}{Matt~J. Kusner}, \bibinfo{person}{Brooks
  Paige}, {and} \bibinfo{person}{Jos{\'{e}}~Miguel Hern{\'{a}}ndez{-}Lobato}.}
  \bibinfo{year}{2017}\natexlab{}.
\newblock \showarticletitle{Grammar Variational Autoencoder}. In
  \bibinfo{booktitle}{\emph{{ICML}}} \emph{(\bibinfo{series}{Proceedings of
  Machine Learning Research}, Vol.~\bibinfo{volume}{70})}.
  \bibinfo{pages}{1945--1954}.
\newblock


\bibitem[\protect\citeauthoryear{Lamy, Sekar, Gu{\'{e}}zennec, Bouaud, and
  S{\'{e}}roussi}{Lamy et~al\mbox{.}}{2019}]%
        {LamySGBS19}
\bibfield{author}{\bibinfo{person}{Jean{-}Baptiste Lamy},
  \bibinfo{person}{Booma~Devi Sekar}, \bibinfo{person}{Gilles Gu{\'{e}}zennec},
  \bibinfo{person}{Jacques Bouaud}, {and} \bibinfo{person}{Brigitte
  S{\'{e}}roussi}.} \bibinfo{year}{2019}\natexlab{}.
\newblock \showarticletitle{Explainable artificial intelligence for breast
  cancer: {A} visual case-based reasoning approach}.
\newblock \bibinfo{journal}{\emph{Artif. Intell. Medicine}}
  \bibinfo{volume}{94} (\bibinfo{year}{2019}), \bibinfo{pages}{42--53}.
\newblock


\bibitem[\protect\citeauthoryear{Lao, Choy, Ho, Yam, Tsim, and Poon}{Lao
  et~al\mbox{.}}{2012}]%
        {LaoCHYTP12}
\bibfield{author}{\bibinfo{person}{S.~I. Lao}, \bibinfo{person}{King~Lun Choy},
  \bibinfo{person}{George T.~S. Ho}, \bibinfo{person}{Richard C.~M. Yam},
  \bibinfo{person}{Y.~C. Tsim}, {and} \bibinfo{person}{T.~C. Poon}.}
  \bibinfo{year}{2012}\natexlab{}.
\newblock \showarticletitle{Achieving quality assurance functionality in the
  food industry using a hybrid case-based reasoning and fuzzy logic approach}.
\newblock \bibinfo{journal}{\emph{Expert Syst. Appl.}} \bibinfo{volume}{39},
  \bibinfo{number}{5} (\bibinfo{year}{2012}), \bibinfo{pages}{5251--5261}.
\newblock


\bibitem[\protect\citeauthoryear{Lim, Chae, Yang, Park, Lee, and Park}{Lim
  et~al\mbox{.}}{2015}]%
        {lim2015fast}
\bibfield{author}{\bibinfo{person}{Junseok Lim}, \bibinfo{person}{Moon-Jung
  Chae}, \bibinfo{person}{Yongseok Yang}, \bibinfo{person}{In-Beom Park},
  \bibinfo{person}{Jaeyong Lee}, {and} \bibinfo{person}{Jonghun Park}.}
  \bibinfo{year}{2015}\natexlab{}.
\newblock \showarticletitle{Fast scheduling of semiconductor manufacturing
  facilities using case-based reasoning}.
\newblock \bibinfo{journal}{\emph{IEEE Trans. Semiconduct M.}}
  \bibinfo{volume}{29}, \bibinfo{number}{1} (\bibinfo{year}{2015}),
  \bibinfo{pages}{22--32}.
\newblock


\bibitem[\protect\citeauthoryear{Liu and Chen}{Liu and Chen}{2012}]%
        {LiuC12}
\bibfield{author}{\bibinfo{person}{Cheng{-}Hsiang Liu} {and}
  \bibinfo{person}{Hung{-}Chi Chen}.} \bibinfo{year}{2012}\natexlab{}.
\newblock \showarticletitle{A novel {CBR} system for numeric prediction}.
\newblock \bibinfo{journal}{\emph{Inf. Sci.}} \bibinfo{volume}{185},
  \bibinfo{number}{1} (\bibinfo{year}{2012}), \bibinfo{pages}{178--190}.
\newblock


\bibitem[\protect\citeauthoryear{Liu, Wang, Shan, and Chen}{Liu
  et~al\mbox{.}}{2019}]%
        {LiuWSC19}
\bibfield{author}{\bibinfo{person}{Haomiao Liu}, \bibinfo{person}{Ruiping
  Wang}, \bibinfo{person}{Shiguang Shan}, {and} \bibinfo{person}{Xilin Chen}.}
  \bibinfo{year}{2019}\natexlab{}.
\newblock \showarticletitle{Deep Supervised Hashing for Fast Image Retrieval}.
\newblock \bibinfo{journal}{\emph{Int. J. Comput. Vis.}} \bibinfo{volume}{127},
  \bibinfo{number}{9} (\bibinfo{year}{2019}), \bibinfo{pages}{1217--1234}.
\newblock


\bibitem[\protect\citeauthoryear{Liu, Bellet, and Sha}{Liu
  et~al\mbox{.}}{2015}]%
        {LiuBS15}
\bibfield{author}{\bibinfo{person}{Kuan Liu}, \bibinfo{person}{Aur{\'{e}}lien
  Bellet}, {and} \bibinfo{person}{Fei Sha}.} \bibinfo{year}{2015}\natexlab{}.
\newblock \showarticletitle{Similarity Learning for High-Dimensional Sparse
  Data}. In \bibinfo{booktitle}{\emph{{AISTATS}}}, Vol.~\bibinfo{volume}{38}.
\newblock


\bibitem[\protect\citeauthoryear{Liu, Wang, Kumar, and Chang}{Liu
  et~al\mbox{.}}{2011}]%
        {LiuWKC11}
\bibfield{author}{\bibinfo{person}{Wei Liu}, \bibinfo{person}{Jun Wang},
  \bibinfo{person}{Sanjiv Kumar}, {and} \bibinfo{person}{Shih{-}Fu Chang}.}
  \bibinfo{year}{2011}\natexlab{}.
\newblock \showarticletitle{Hashing with Graphs}. In
  \bibinfo{booktitle}{\emph{{ICML}}}. \bibinfo{pages}{1--8}.
\newblock


\bibitem[\protect\citeauthoryear{{Liu}, {Qiu}, {Mercier}, and {Pan}}{{Liu}
  et~al\mbox{.}}{2019}]%
        {8878018}
\bibfield{author}{\bibinfo{person}{Z. {Liu}}, \bibinfo{person}{G. {Qiu}},
  \bibinfo{person}{G. {Mercier}}, {and} \bibinfo{person}{Q. {Pan}}.}
  \bibinfo{year}{2019}\natexlab{}.
\newblock \showarticletitle{A Transfer Classification Method for Heterogeneous
  Data Based on Evidence Theory}.
\newblock \bibinfo{journal}{\emph{IEEE Transactions on Systems, Man, and
  Cybernetics: Systems}} (\bibinfo{year}{2019}), \bibinfo{pages}{1--13}.
\newblock


\bibitem[\protect\citeauthoryear{Montani}{Montani}{2011}]%
        {Montani11}
\bibfield{author}{\bibinfo{person}{Stefania Montani}.}
  \bibinfo{year}{2011}\natexlab{}.
\newblock \showarticletitle{How to use contextual knowledge in medical
  case-based reasoning systems: {A} survey on very recent trends}.
\newblock \bibinfo{journal}{\emph{Artif. Intell. Medicine}}
  \bibinfo{volume}{51}, \bibinfo{number}{2} (\bibinfo{year}{2011}),
  \bibinfo{pages}{125--131}.
\newblock


\bibitem[\protect\citeauthoryear{Muangprathub, Boonjing, and
  Pattaraintakorn}{Muangprathub et~al\mbox{.}}{2013}]%
        {MuangprathubBP13}
\bibfield{author}{\bibinfo{person}{Jirapond Muangprathub},
  \bibinfo{person}{Veera Boonjing}, {and} \bibinfo{person}{Puntip
  Pattaraintakorn}.} \bibinfo{year}{2013}\natexlab{}.
\newblock \showarticletitle{A new case-based classification using incremental
  concept lattice knowledge}.
\newblock \bibinfo{journal}{\emph{Data Knowl. Eng.}}  \bibinfo{volume}{83}
  (\bibinfo{year}{2013}), \bibinfo{pages}{39--53}.
\newblock


\bibitem[\protect\citeauthoryear{Nguyen, Wang, Hilario, and Kalousis}{Nguyen
  et~al\mbox{.}}{2012}]%
        {NguyenWHK12}
\bibfield{author}{\bibinfo{person}{Phong Nguyen}, \bibinfo{person}{Jun Wang},
  \bibinfo{person}{Melanie Hilario}, {and} \bibinfo{person}{Alexandros
  Kalousis}.} \bibinfo{year}{2012}\natexlab{}.
\newblock \showarticletitle{Learning Heterogeneous Similarity Measures for
  Hybrid-Recommendations in Meta-Mining}. In
  \bibinfo{booktitle}{\emph{{ICDM}}}. \bibinfo{pages}{1026--1031}.
\newblock


\bibitem[\protect\citeauthoryear{Pandove, Goel, and Rani}{Pandove
  et~al\mbox{.}}{2018}]%
        {PandoveGR18}
\bibfield{author}{\bibinfo{person}{Divya Pandove}, \bibinfo{person}{Shivani
  Goel}, {and} \bibinfo{person}{Rinkle Rani}.} \bibinfo{year}{2018}\natexlab{}.
\newblock \showarticletitle{Systematic Review of Clustering High-Dimensional
  and Large Datasets}.
\newblock \bibinfo{journal}{\emph{{ACM} Trans. Knowl. Discov. Data}}
  \bibinfo{volume}{12}, \bibinfo{number}{2} (\bibinfo{year}{2018}),
  \bibinfo{pages}{16:1--16:68}.
\newblock


\bibitem[\protect\citeauthoryear{Pang and Cao}{Pang and Cao}{2020}]%
        {PangC20}
\bibfield{author}{\bibinfo{person}{Guansong Pang} {and}
  \bibinfo{person}{Longbing Cao}.} \bibinfo{year}{2020}\natexlab{}.
\newblock \showarticletitle{Heterogeneous Univariate Outlier Ensembles in
  Multidimensional Data}.
\newblock \bibinfo{journal}{\emph{{ACM} Trans. Knowl. Discov. Data}}
  \bibinfo{volume}{14}, \bibinfo{number}{6} (\bibinfo{year}{2020}),
  \bibinfo{pages}{68:1--68:27}.
\newblock


\bibitem[\protect\citeauthoryear{Pang, Cao, and Chen}{Pang
  et~al\mbox{.}}{2016}]%
        {ijcai_PangCC16}
\bibfield{author}{\bibinfo{person}{Guansong Pang}, \bibinfo{person}{Longbing
  Cao}, {and} \bibinfo{person}{Ling Chen}.} \bibinfo{year}{2016}\natexlab{}.
\newblock \showarticletitle{Outlier Detection in Complex Categorical Data by
  Modeling the Feature Value Couplings}. In
  \bibinfo{booktitle}{\emph{Proceedings of the Twenty-Fifth International Joint
  Conference on Artificial Intelligence, {IJCAI} 2016, New York, NY, USA, 9-15
  July 2016}}. \bibinfo{pages}{1902--1908}.
\newblock


\bibitem[\protect\citeauthoryear{Pang, Cao, Chen, and Liu}{Pang
  et~al\mbox{.}}{2018}]%
        {PangCCL18}
\bibfield{author}{\bibinfo{person}{Guansong Pang}, \bibinfo{person}{Longbing
  Cao}, \bibinfo{person}{Ling Chen}, {and} \bibinfo{person}{Huan Liu}.}
  \bibinfo{year}{2018}\natexlab{}.
\newblock \showarticletitle{Learning Representations of Ultrahigh-dimensional
  Data for Random Distance-based Outlier Detection}. In
  \bibinfo{booktitle}{\emph{{KDD}}}. \bibinfo{pages}{2041--2050}.
\newblock


\bibitem[\protect\citeauthoryear{Petrovic, Mishra, and Sundar}{Petrovic
  et~al\mbox{.}}{2011}]%
        {PetrovicMS11}
\bibfield{author}{\bibinfo{person}{Sanja Petrovic}, \bibinfo{person}{Nishikant
  Mishra}, {and} \bibinfo{person}{Santhanam Sundar}.}
  \bibinfo{year}{2011}\natexlab{}.
\newblock \showarticletitle{A novel case based reasoning approach to
  radiotherapy planning}.
\newblock \bibinfo{journal}{\emph{Expert Syst. Appl.}} \bibinfo{volume}{38},
  \bibinfo{number}{9} (\bibinfo{year}{2011}), \bibinfo{pages}{10759--10769}.
\newblock


\bibitem[\protect\citeauthoryear{Rezvan, Hamadani, and Shalbafzadeh}{Rezvan
  et~al\mbox{.}}{2013}]%
        {RezvanHS13}
\bibfield{author}{\bibinfo{person}{Mohammad~Taghi Rezvan},
  \bibinfo{person}{Ali~Zeinal Hamadani}, {and} \bibinfo{person}{Ali
  Shalbafzadeh}.} \bibinfo{year}{2013}\natexlab{}.
\newblock \showarticletitle{Case-based reasoning for classification in the
  mixed data sets employing the compound distance methods}.
\newblock \bibinfo{journal}{\emph{Eng. Appl. Artif. Intell.}}
  \bibinfo{volume}{26}, \bibinfo{number}{9} (\bibinfo{year}{2013}),
  \bibinfo{pages}{2001--2009}.
\newblock


\bibitem[\protect\citeauthoryear{Salakhutdinov and Hinton}{Salakhutdinov and
  Hinton}{2009}]%
        {SalakhutdinovH09}
\bibfield{author}{\bibinfo{person}{Ruslan Salakhutdinov} {and}
  \bibinfo{person}{Geoffrey~E. Hinton}.} \bibinfo{year}{2009}\natexlab{}.
\newblock \showarticletitle{Semantic hashing}.
\newblock \bibinfo{journal}{\emph{Int. J. Approx. Reason.}}
  \bibinfo{volume}{50}, \bibinfo{number}{7} (\bibinfo{year}{2009}),
  \bibinfo{pages}{969--978}.
\newblock


\bibitem[\protect\citeauthoryear{S{\'{a}}nchez{-}Ruiz and
  Onta{\~{n}}{\'{o}}n}{S{\'{a}}nchez{-}Ruiz and Onta{\~{n}}{\'{o}}n}{2014}]%
        {Sanchez-RuizO14}
\bibfield{author}{\bibinfo{person}{Antonio~A. S{\'{a}}nchez{-}Ruiz} {and}
  \bibinfo{person}{Santiago Onta{\~{n}}{\'{o}}n}.}
  \bibinfo{year}{2014}\natexlab{}.
\newblock \showarticletitle{Least Common Subsumer Trees for Plan Retrieval}. In
  \bibinfo{booktitle}{\emph{{ICCBR}}} \emph{(\bibinfo{series}{Lecture Notes in
  Computer Science}, Vol.~\bibinfo{volume}{8765})}. \bibinfo{pages}{405--419}.
\newblock


\bibitem[\protect\citeauthoryear{Sartori, Mazzucchelli, and Gregorio}{Sartori
  et~al\mbox{.}}{2016}]%
        {SartoriMG16}
\bibfield{author}{\bibinfo{person}{Fabio Sartori}, \bibinfo{person}{Alice
  Mazzucchelli}, {and} \bibinfo{person}{Angelo~Di Gregorio}.}
  \bibinfo{year}{2016}\natexlab{}.
\newblock \showarticletitle{Bankruptcy forecasting using case-based reasoning:
  The CRePERIE approach}.
\newblock \bibinfo{journal}{\emph{Expert Syst. Appl.}}  \bibinfo{volume}{64}
  (\bibinfo{year}{2016}), \bibinfo{pages}{400--411}.
\newblock


\bibitem[\protect\citeauthoryear{Shen, Shen, Liu, and Shen}{Shen
  et~al\mbox{.}}{2015}]%
        {ShenSLS15}
\bibfield{author}{\bibinfo{person}{Fumin Shen}, \bibinfo{person}{Chunhua Shen},
  \bibinfo{person}{Wei Liu}, {and} \bibinfo{person}{Heng~Tao Shen}.}
  \bibinfo{year}{2015}\natexlab{}.
\newblock \showarticletitle{Supervised Discrete Hashing}. In
  \bibinfo{booktitle}{\emph{{CVPR}}}. \bibinfo{pages}{37--45}.
\newblock


\bibitem[\protect\citeauthoryear{Shi, Ma, Zhang, Zhang, Yu, Shan, Liu, and
  Ma}{Shi et~al\mbox{.}}{2020}]%
        {ShiMZZYSLM20}
\bibfield{author}{\bibinfo{person}{Shaoyun Shi}, \bibinfo{person}{Weizhi Ma},
  \bibinfo{person}{Min Zhang}, \bibinfo{person}{Yongfeng Zhang},
  \bibinfo{person}{Xinxing Yu}, \bibinfo{person}{Houzhi Shan},
  \bibinfo{person}{Yiqun Liu}, {and} \bibinfo{person}{Shaoping Ma}.}
  \bibinfo{year}{2020}\natexlab{}.
\newblock \showarticletitle{Beyond User Embedding Matrix: Learning to Hash for
  Modeling Large-Scale Users in Recommendation}. In
  \bibinfo{booktitle}{\emph{{SIGIR}}}. \bibinfo{pages}{319--328}.
\newblock


\bibitem[\protect\citeauthoryear{Shiu and Pal}{Shiu and Pal}{2004}]%
        {ShiuP04}
\bibfield{author}{\bibinfo{person}{Simon C.~K. Shiu} {and}
  \bibinfo{person}{Sankar~K. Pal}.} \bibinfo{year}{2004}\natexlab{}.
\newblock \showarticletitle{Case-Based Reasoning: Concepts, Features and Soft
  Computing}.
\newblock \bibinfo{journal}{\emph{Appl. Intell.}} \bibinfo{volume}{21},
  \bibinfo{number}{3} (\bibinfo{year}{2004}), \bibinfo{pages}{233--238}.
\newblock


\bibitem[\protect\citeauthoryear{Smyth, Keane, and Cunningham}{Smyth
  et~al\mbox{.}}{2001}]%
        {SmythKC01}
\bibfield{author}{\bibinfo{person}{Barry Smyth}, \bibinfo{person}{Mark~T.
  Keane}, {and} \bibinfo{person}{Padraig Cunningham}.}
  \bibinfo{year}{2001}\natexlab{}.
\newblock \showarticletitle{Hierarchical Case-Based Reasoning Integrating
  Case-Based and Decompositional Problem-Solving Techniques for Plant-Control
  Software Design}.
\newblock \bibinfo{journal}{\emph{{IEEE} Trans. Knowl. Data Eng.}}
  \bibinfo{volume}{13}, \bibinfo{number}{5} (\bibinfo{year}{2001}),
  \bibinfo{pages}{793--812}.
\newblock


\bibitem[\protect\citeauthoryear{Song, Yang, Yang, Huang, and Shen}{Song
  et~al\mbox{.}}{2013}]%
        {SongYYHS13}
\bibfield{author}{\bibinfo{person}{Jingkuan Song}, \bibinfo{person}{Yang Yang},
  \bibinfo{person}{Yi Yang}, \bibinfo{person}{Zi Huang}, {and}
  \bibinfo{person}{Heng~Tao Shen}.} \bibinfo{year}{2013}\natexlab{}.
\newblock \showarticletitle{Inter-media hashing for large-scale retrieval from
  heterogeneous data sources}. In \bibinfo{booktitle}{\emph{{SIGMOD}}}.
  \bibinfo{pages}{785--796}.
\newblock


\bibitem[\protect\citeauthoryear{Tang, Fan, and Ti{\~{n}}o}{Tang
  et~al\mbox{.}}{2021}]%
        {TangFT21}
\bibfield{author}{\bibinfo{person}{Fengzhen Tang}, \bibinfo{person}{Mengling
  Fan}, {and} \bibinfo{person}{Peter Ti{\~{n}}o}.}
  \bibinfo{year}{2021}\natexlab{}.
\newblock \showarticletitle{Generalized Learning Riemannian Space Quantization:
  {A} Case Study on Riemannian Manifold of {SPD} Matrices}.
\newblock \bibinfo{journal}{\emph{{IEEE} Trans. Neural Networks Learn. Syst.}}
  \bibinfo{volume}{32}, \bibinfo{number}{1} (\bibinfo{year}{2021}),
  \bibinfo{pages}{281--292}.
\newblock


\bibitem[\protect\citeauthoryear{Tsamardinos, Borboudakis, Katsogridakis,
  Pratikakis, and Christophides}{Tsamardinos et~al\mbox{.}}{2019}]%
        {TsamardinosBKPC19}
\bibfield{author}{\bibinfo{person}{Ioannis Tsamardinos},
  \bibinfo{person}{Giorgos Borboudakis}, \bibinfo{person}{Pavlos
  Katsogridakis}, \bibinfo{person}{Polyvios Pratikakis}, {and}
  \bibinfo{person}{Vassilis Christophides}.} \bibinfo{year}{2019}\natexlab{}.
\newblock \showarticletitle{A greedy feature selection algorithm for Big Data
  of high dimensionality}.
\newblock \bibinfo{journal}{\emph{Mach. Learn.}} \bibinfo{volume}{108},
  \bibinfo{number}{2} (\bibinfo{year}{2019}), \bibinfo{pages}{149--202}.
\newblock


\bibitem[\protect\citeauthoryear{Wang, She, and Cao}{Wang
  et~al\mbox{.}}{2013}]%
        {WangSC13}
\bibfield{author}{\bibinfo{person}{Can Wang}, \bibinfo{person}{Zhong She},
  {and} \bibinfo{person}{Longbing Cao}.} \bibinfo{year}{2013}\natexlab{}.
\newblock \showarticletitle{Coupled Attribute Analysis on Numerical Data}. In
  \bibinfo{booktitle}{\emph{{IJCAI}'2013}}. \bibinfo{pages}{1736--1742}.
\newblock


\bibitem[\protect\citeauthoryear{Wang, Skau, Krim, and Cervone}{Wang
  et~al\mbox{.}}{2018a}]%
        {WangSKC18}
\bibfield{author}{\bibinfo{person}{Han Wang}, \bibinfo{person}{Erik Skau},
  \bibinfo{person}{Hamid Krim}, {and} \bibinfo{person}{Guido Cervone}.}
  \bibinfo{year}{2018}\natexlab{a}.
\newblock \showarticletitle{Fusing Heterogeneous Data: {A} Case for Remote
  Sensing and Social Media}.
\newblock \bibinfo{journal}{\emph{{IEEE} Trans. Geosci. Remote. Sens.}}
  \bibinfo{volume}{56}, \bibinfo{number}{12} (\bibinfo{year}{2018}),
  \bibinfo{pages}{6956--6968}.
\newblock


\bibitem[\protect\citeauthoryear{Wang, Kumar, and Chang}{Wang
  et~al\mbox{.}}{2012a}]%
        {WangKC12}
\bibfield{author}{\bibinfo{person}{Jun Wang}, \bibinfo{person}{Sanjiv Kumar},
  {and} \bibinfo{person}{Shih{-}Fu Chang}.} \bibinfo{year}{2012}\natexlab{a}.
\newblock \showarticletitle{Semi-Supervised Hashing for Large-Scale Search}.
\newblock \bibinfo{journal}{\emph{{IEEE} Trans. Pattern Anal. Mach. Intell.}}
  \bibinfo{volume}{34}, \bibinfo{number}{12} (\bibinfo{year}{2012}),
  \bibinfo{pages}{2393--2406}.
\newblock


\bibitem[\protect\citeauthoryear{Wang, Zhang, Song, Sebe, and Shen}{Wang
  et~al\mbox{.}}{2018b}]%
        {WangZSSS18}
\bibfield{author}{\bibinfo{person}{Jingdong Wang}, \bibinfo{person}{Ting
  Zhang}, \bibinfo{person}{Jingkuan Song}, \bibinfo{person}{Nicu Sebe}, {and}
  \bibinfo{person}{Heng~Tao Shen}.} \bibinfo{year}{2018}\natexlab{b}.
\newblock \showarticletitle{A Survey on Learning to Hash}.
\newblock \bibinfo{journal}{\emph{{IEEE} Trans. Pattern Anal. Mach. Intell.}}
  \bibinfo{volume}{40}, \bibinfo{number}{4} (\bibinfo{year}{2018}),
  \bibinfo{pages}{769--790}.
\newblock


\bibitem[\protect\citeauthoryear{Wang, Wu, and Li}{Wang et~al\mbox{.}}{2012b}]%
        {wang2012quantile}
\bibfield{author}{\bibinfo{person}{Lan Wang}, \bibinfo{person}{Yichao Wu},
  {and} \bibinfo{person}{Runze Li}.} \bibinfo{year}{2012}\natexlab{b}.
\newblock \showarticletitle{Quantile regression for analyzing heterogeneity in
  ultra-high dimension}.
\newblock \bibinfo{journal}{\emph{J. AM. STAT. ASSOC.}} \bibinfo{volume}{107},
  \bibinfo{number}{497} (\bibinfo{year}{2012}), \bibinfo{pages}{214--222}.
\newblock


\bibitem[\protect\citeauthoryear{Weiss, Torralba, and Fergus}{Weiss
  et~al\mbox{.}}{2008}]%
        {WeissTF08}
\bibfield{author}{\bibinfo{person}{Yair Weiss}, \bibinfo{person}{Antonio
  Torralba}, {and} \bibinfo{person}{Robert Fergus}.}
  \bibinfo{year}{2008}\natexlab{}.
\newblock \showarticletitle{Spectral Hashing}. In
  \bibinfo{booktitle}{\emph{{NIPS}}}. \bibinfo{pages}{1753--1760}.
\newblock


\bibitem[\protect\citeauthoryear{Woodbridge, Mortazavi, Bui, and
  Sarrafzadeh}{Woodbridge et~al\mbox{.}}{2016}]%
        {WoodbridgeMBS16}
\bibfield{author}{\bibinfo{person}{Jonathan Woodbridge}, \bibinfo{person}{Bobak
  Mortazavi}, \bibinfo{person}{Alex A.~T. Bui}, {and} \bibinfo{person}{Majid
  Sarrafzadeh}.} \bibinfo{year}{2016}\natexlab{}.
\newblock \showarticletitle{Improving biomedical signal search results in big
  data case-based reasoning environments}.
\newblock \bibinfo{journal}{\emph{Pervasive Mob. Comput.}}
  \bibinfo{volume}{28} (\bibinfo{year}{2016}), \bibinfo{pages}{69--80}.
\newblock


\bibitem[\protect\citeauthoryear{Wu, Dai, Liu, Li, and Wang}{Wu
  et~al\mbox{.}}{2019}]%
        {WuDL0W19}
\bibfield{author}{\bibinfo{person}{Dayan Wu}, \bibinfo{person}{Qi Dai},
  \bibinfo{person}{Jing Liu}, \bibinfo{person}{Bo Li}, {and}
  \bibinfo{person}{Weiping Wang}.} \bibinfo{year}{2019}\natexlab{}.
\newblock \showarticletitle{Deep Incremental Hashing Network for Efficient
  Image Retrieval}. In \bibinfo{booktitle}{\emph{{CVPR}}}.
  \bibinfo{pages}{9069--9077}.
\newblock


\bibitem[\protect\citeauthoryear{Wu, Hoi, Mei, and Yu}{Wu
  et~al\mbox{.}}{2017}]%
        {WuHMY17}
\bibfield{author}{\bibinfo{person}{Yue Wu}, \bibinfo{person}{Steven C.~H. Hoi},
  \bibinfo{person}{Tao Mei}, {and} \bibinfo{person}{Nenghai Yu}.}
  \bibinfo{year}{2017}\natexlab{}.
\newblock \showarticletitle{Large-Scale Online Feature Selection for Ultra-High
  Dimensional Sparse Data}.
\newblock \bibinfo{journal}{\emph{{ACM} Trans. Knowl. Discov. Data}}
  \bibinfo{volume}{11}, \bibinfo{number}{4} (\bibinfo{year}{2017}),
  \bibinfo{pages}{48:1--48:22}.
\newblock


\bibitem[\protect\citeauthoryear{Yan, Shao, and Guo}{Yan
  et~al\mbox{.}}{2014a}]%
        {YanSG14}
\bibfield{author}{\bibinfo{person}{Aijun Yan}, \bibinfo{person}{Hongshan Shao},
  {and} \bibinfo{person}{Zhen Guo}.} \bibinfo{year}{2014}\natexlab{a}.
\newblock \showarticletitle{Weight optimization for case-based reasoning using
  membrane computing}.
\newblock \bibinfo{journal}{\emph{Inf. Sci.}}  \bibinfo{volume}{287}
  (\bibinfo{year}{2014}), \bibinfo{pages}{109--120}.
\newblock


\bibitem[\protect\citeauthoryear{Yan, Wang, Zhang, and Zhao}{Yan
  et~al\mbox{.}}{2014b}]%
        {YanWZZ14}
\bibfield{author}{\bibinfo{person}{Aijun Yan}, \bibinfo{person}{Weixian Wang},
  \bibinfo{person}{Chunxiao Zhang}, {and} \bibinfo{person}{Hui Zhao}.}
  \bibinfo{year}{2014}\natexlab{b}.
\newblock \showarticletitle{A fault prediction method that uses improved
  case-based reasoning to continuously predict the status of a shaft furnace}.
\newblock \bibinfo{journal}{\emph{Inf. Sci.}}  \bibinfo{volume}{259}
  (\bibinfo{year}{2014}), \bibinfo{pages}{269--281}.
\newblock


\bibitem[\protect\citeauthoryear{Yang, Lin, and Chen}{Yang
  et~al\mbox{.}}{2018}]%
        {YangLC18}
\bibfield{author}{\bibinfo{person}{Huei{-}Fang Yang}, \bibinfo{person}{Kevin
  Lin}, {and} \bibinfo{person}{Chu{-}Song Chen}.}
  \bibinfo{year}{2018}\natexlab{}.
\newblock \showarticletitle{Supervised Learning of Semantics-Preserving Hash
  via Deep Convolutional Neural Networks}.
\newblock \bibinfo{journal}{\emph{{IEEE} Trans. Pattern Anal. Mach. Intell.}}
  \bibinfo{volume}{40}, \bibinfo{number}{2} (\bibinfo{year}{2018}),
  \bibinfo{pages}{437--451}.
\newblock


\bibitem[\protect\citeauthoryear{Zhang, Huang, Zhang, Glymour, and
  Sch{\"{o}}lkopf}{Zhang et~al\mbox{.}}{2017}]%
        {ZhangHZGS17}
\bibfield{author}{\bibinfo{person}{Kun Zhang}, \bibinfo{person}{Biwei Huang},
  \bibinfo{person}{Jiji Zhang}, \bibinfo{person}{Clark Glymour}, {and}
  \bibinfo{person}{Bernhard Sch{\"{o}}lkopf}.} \bibinfo{year}{2017}\natexlab{}.
\newblock \showarticletitle{Causal Discovery from Nonstationary/Heterogeneous
  Data: Skeleton Estimation and Orientation Determination}. In
  \bibinfo{booktitle}{\emph{{IJCAI}}}. \bibinfo{pages}{1347--1353}.
\newblock


\bibitem[\protect\citeauthoryear{Zhang, Cao, Shi, and Niu}{Zhang
  et~al\mbox{.}}{2021}]%
        {9404857}
\bibfield{author}{\bibinfo{person}{Qi Zhang}, \bibinfo{person}{Longbing Cao},
  \bibinfo{person}{Chongyang Shi}, {and} \bibinfo{person}{Zhendong Niu}.}
  \bibinfo{year}{2021}\natexlab{}.
\newblock \showarticletitle{Neural Time-Aware Sequential Recommendation by
  Jointly Modeling Preference Dynamics and Explicit Feature Couplings}.
\newblock \bibinfo{journal}{\emph{IEEE Transactions on Neural Networks and
  Learning Systems}} (\bibinfo{year}{2021}), \bibinfo{pages}{1--13}.
\newblock
\urldef\tempurl%
\url{https://doi.org/10.1109/TNNLS.2021.3069058}
\showDOI{\tempurl}


\bibitem[\protect\citeauthoryear{Zhang, Shi, Niu, and Cao}{Zhang
  et~al\mbox{.}}{2019}]%
        {ZhangSNC19}
\bibfield{author}{\bibinfo{person}{Qi Zhang}, \bibinfo{person}{Chongyang Shi},
  \bibinfo{person}{Zhendong Niu}, {and} \bibinfo{person}{Longbing Cao}.}
  \bibinfo{year}{2019}\natexlab{}.
\newblock \showarticletitle{{HCBC:} {A} Hierarchical Case-Based Classifier
  Integrated with Conceptual Clustering}.
\newblock \bibinfo{journal}{\emph{{IEEE} Trans. Knowl. Data Eng.}}
  \bibinfo{volume}{31}, \bibinfo{number}{1} (\bibinfo{year}{2019}),
  \bibinfo{pages}{152--165}.
\newblock


\bibitem[\protect\citeauthoryear{Zhang, Shi, Sun, and Niu}{Zhang
  et~al\mbox{.}}{2016}]%
        {ZhangSSN16}
\bibfield{author}{\bibinfo{person}{Qi Zhang}, \bibinfo{person}{Chongyang Shi},
  \bibinfo{person}{Ping Sun}, {and} \bibinfo{person}{Zhendong Niu}.}
  \bibinfo{year}{2016}\natexlab{}.
\newblock \showarticletitle{Case-Based Classification on Hierarchical Structure
  of Formal Concept Analysis}. In \bibinfo{booktitle}{\emph{{ECAI}}},
  Vol.~\bibinfo{volume}{285}. \bibinfo{pages}{1758--1759}.
\newblock


\bibitem[\protect\citeauthoryear{Zhao, Xiong, Cao, Luo, Su, and Zhu}{Zhao
  et~al\mbox{.}}{2009}]%
        {ZhaoXCLSZ09}
\bibfield{author}{\bibinfo{person}{Gang Zhao}, \bibinfo{person}{Yun Xiong},
  \bibinfo{person}{Longbing Cao}, \bibinfo{person}{Dan Luo},
  \bibinfo{person}{Xuchun Su}, {and} \bibinfo{person}{Yangyong Zhu}.}
  \bibinfo{year}{2009}\natexlab{}.
\newblock \showarticletitle{A Cost-Effective LSH Filter for Fast Pairwise
  Mining}. In \bibinfo{booktitle}{\emph{{ICDM}'2009}}.
  \bibinfo{pages}{1088--1093}.
\newblock


\bibitem[\protect\citeauthoryear{Zheng, Zhao, Weng, Hung, Liu, and
  Jensen}{Zheng et~al\mbox{.}}{2020}]%
        {pmlsh}
\bibfield{author}{\bibinfo{person}{Bolong Zheng}, \bibinfo{person}{Xi Zhao},
  \bibinfo{person}{Lianggui Weng}, \bibinfo{person}{Nguyen Quoc~Viet Hung},
  \bibinfo{person}{Hang Liu}, {and} \bibinfo{person}{Christian~S. Jensen}.}
  \bibinfo{year}{2020}\natexlab{}.
\newblock \showarticletitle{{PM-LSH:} {A} Fast and Accurate {LSH} Framework for
  High-Dimensional Approximate {NN} Search}.
\newblock \bibinfo{journal}{\emph{Proc. {VLDB} Endow.}} \bibinfo{volume}{13},
  \bibinfo{number}{5} (\bibinfo{year}{2020}), \bibinfo{pages}{643--655}.
\newblock


\bibitem[\protect\citeauthoryear{Zheng, Liu, Shi, Zhuang, Li, and Wu}{Zheng
  et~al\mbox{.}}{2017}]%
        {ZhengLSZLW17}
\bibfield{author}{\bibinfo{person}{Jing Zheng}, \bibinfo{person}{Jian Liu},
  \bibinfo{person}{Chuan Shi}, \bibinfo{person}{Fuzhen Zhuang},
  \bibinfo{person}{Jingzhi Li}, {and} \bibinfo{person}{Bin Wu}.}
  \bibinfo{year}{2017}\natexlab{}.
\newblock \showarticletitle{Recommendation in heterogeneous information network
  via dual similarity regularization}.
\newblock \bibinfo{journal}{\emph{Int. J. Data Sci. Anal.}}
  \bibinfo{volume}{3}, \bibinfo{number}{1} (\bibinfo{year}{2017}),
  \bibinfo{pages}{35--48}.
\newblock


\bibitem[\protect\citeauthoryear{Zhong, Rohatgi, Wu, Giles, and Zanibbi}{Zhong
  et~al\mbox{.}}{2020}]%
        {ZhongR0GZ20}
\bibfield{author}{\bibinfo{person}{Wei Zhong}, \bibinfo{person}{Shaurya
  Rohatgi}, \bibinfo{person}{Jian Wu}, \bibinfo{person}{C.~Lee Giles}, {and}
  \bibinfo{person}{Richard Zanibbi}.} \bibinfo{year}{2020}\natexlab{}.
\newblock \showarticletitle{Accelerating Substructure Similarity Search for
  Formula Retrieval}. In \bibinfo{booktitle}{\emph{{ECIR} {(1)}}},
  Vol.~\bibinfo{volume}{12035}. \bibinfo{pages}{714--727}.
\newblock


\bibitem[\protect\citeauthoryear{Zhu, Cao, Liu, Yin, and Kumar}{Zhu
  et~al\mbox{.}}{2018a}]%
        {ZhuCLYK18}
\bibfield{author}{\bibinfo{person}{Chengzhang Zhu}, \bibinfo{person}{Longbing
  Cao}, \bibinfo{person}{Qiang Liu}, \bibinfo{person}{Jianping Yin}, {and}
  \bibinfo{person}{Vipin Kumar}.} \bibinfo{year}{2018}\natexlab{a}.
\newblock \showarticletitle{Heterogeneous Metric Learning of Categorical Data
  with Hierarchical Couplings}.
\newblock \bibinfo{journal}{\emph{{IEEE} Trans. Knowl. Data Eng.}}
  \bibinfo{volume}{30}, \bibinfo{number}{7} (\bibinfo{year}{2018}),
  \bibinfo{pages}{1254--1267}.
\newblock


\bibitem[\protect\citeauthoryear{Zhu, Cao, and Yin}{Zhu et~al\mbox{.}}{2022}]%
        {untie}
\bibfield{author}{\bibinfo{person}{Chengzhang Zhu}, \bibinfo{person}{Longbing
  Cao}, {and} \bibinfo{person}{Jianping Yin}.} \bibinfo{year}{2022}\natexlab{}.
\newblock \showarticletitle{Unsupervised Heterogeneous Coupling Learning for
  Categorical Representation}.
\newblock \bibinfo{journal}{\emph{{IEEE} Trans. Pattern Anal. Mach. Intell.}}
  \bibinfo{volume}{44}, \bibinfo{number}{1} (\bibinfo{year}{2022}),
  \bibinfo{pages}{533--549}.
\newblock


\bibitem[\protect\citeauthoryear{Zhu, Zhang, Cao, and Abrahamyan}{Zhu
  et~al\mbox{.}}{2020b}]%
        {ZhuZCA20}
\bibfield{author}{\bibinfo{person}{Chengzhang Zhu}, \bibinfo{person}{Qi Zhang},
  \bibinfo{person}{Longbing Cao}, {and} \bibinfo{person}{Arman Abrahamyan}.}
  \bibinfo{year}{2020}\natexlab{b}.
\newblock \showarticletitle{Mix2Vec: Unsupervised Mixed Data Representation}.
  In \bibinfo{booktitle}{\emph{7th {IEEE} International Conference on Data
  Science and Advanced Analytics, {DSAA} 2020, Sydney, Australia, October 6-9,
  2020}}. \bibinfo{publisher}{{IEEE}}, \bibinfo{pages}{118--127}.
\newblock


\bibitem[\protect\citeauthoryear{Zhu, Hu, Qi, Ma, and Peng}{Zhu
  et~al\mbox{.}}{2015}]%
        {ZhuHQMP15}
\bibfield{author}{\bibinfo{person}{Guo{-}Niu Zhu}, \bibinfo{person}{Jie Hu},
  \bibinfo{person}{Jin Qi}, \bibinfo{person}{Jin Ma}, {and}
  \bibinfo{person}{Ying{-}hong Peng}.} \bibinfo{year}{2015}\natexlab{}.
\newblock \showarticletitle{An integrated feature selection and cluster
  analysis techniques for case-based reasoning}.
\newblock \bibinfo{journal}{\emph{Eng. Appl. Artif. Intell.}}
  \bibinfo{volume}{39} (\bibinfo{year}{2015}), \bibinfo{pages}{14--22}.
\newblock


\bibitem[\protect\citeauthoryear{Zhu, Long, Wang, and Cao}{Zhu
  et~al\mbox{.}}{2016}]%
        {ZhuL0C16}
\bibfield{author}{\bibinfo{person}{Han Zhu}, \bibinfo{person}{Mingsheng Long},
  \bibinfo{person}{Jianmin Wang}, {and} \bibinfo{person}{Yue Cao}.}
  \bibinfo{year}{2016}\natexlab{}.
\newblock \showarticletitle{Deep Hashing Network for Efficient Similarity
  Retrieval}. In \bibinfo{booktitle}{\emph{{AAAI}}}.
  \bibinfo{pages}{2415--2421}.
\newblock


\bibitem[\protect\citeauthoryear{Zhu, Lu, Cheng, Li, and Zhang}{Zhu
  et~al\mbox{.}}{2020a}]%
        {ZhuLCLZ20}
\bibfield{author}{\bibinfo{person}{Lei Zhu}, \bibinfo{person}{Xu Lu},
  \bibinfo{person}{Zhiyong Cheng}, \bibinfo{person}{Jingjing Li}, {and}
  \bibinfo{person}{Huaxiang Zhang}.} \bibinfo{year}{2020}\natexlab{a}.
\newblock \showarticletitle{Flexible Multi-modal Hashing for Scalable
  Multimedia Retrieval}.
\newblock \bibinfo{journal}{\emph{{ACM} Trans. Intell. Syst. Technol.}}
  \bibinfo{volume}{11}, \bibinfo{number}{2} (\bibinfo{year}{2020}),
  \bibinfo{pages}{14:1--14:20}.
\newblock


\bibitem[\protect\citeauthoryear{Zhu, Cheng, Hu, Wang, and Zhang}{Zhu
  et~al\mbox{.}}{2018b}]%
        {ZhuCHWZ18}
\bibfield{author}{\bibinfo{person}{Pengfei Zhu}, \bibinfo{person}{Hao Cheng},
  \bibinfo{person}{Qinghua Hu}, \bibinfo{person}{Qilong Wang}, {and}
  \bibinfo{person}{Changqing Zhang}.} \bibinfo{year}{2018}\natexlab{b}.
\newblock \showarticletitle{Towards Generalized and Efficient Metric Learning
  on Riemannian Manifold}. In \bibinfo{booktitle}{\emph{{IJCAI}}}.
  \bibinfo{pages}{3235--3241}.
\newblock


\bibitem[\protect\citeauthoryear{Zhu, Li, Zhang, Xu, Yu, and Wang}{Zhu
  et~al\mbox{.}}{2017}]%
        {ZhuLZXYW17}
\bibfield{author}{\bibinfo{person}{Xiaofeng Zhu}, \bibinfo{person}{Xuelong Li},
  \bibinfo{person}{Shichao Zhang}, \bibinfo{person}{Zongben Xu},
  \bibinfo{person}{Litao Yu}, {and} \bibinfo{person}{Can Wang}.}
  \bibinfo{year}{2017}\natexlab{}.
\newblock \showarticletitle{Graph {PCA} Hashing for Similarity Search}.
\newblock \bibinfo{journal}{\emph{{IEEE} Trans. Multim.}} \bibinfo{volume}{19},
  \bibinfo{number}{9} (\bibinfo{year}{2017}), \bibinfo{pages}{2033--2044}.
\newblock


\bibitem[\protect\citeauthoryear{Zhu, Zhang, Wang, Zheng, and Zhu}{Zhu
  et~al\mbox{.}}{2018c}]%
        {zhuZWZZ18}
\bibfield{author}{\bibinfo{person}{Yonghua Zhu}, \bibinfo{person}{Xuejun
  Zhang}, \bibinfo{person}{Ruili Wang}, \bibinfo{person}{Wei Zheng}, {and}
  \bibinfo{person}{Yingying Zhu}.} \bibinfo{year}{2018}\natexlab{c}.
\newblock \showarticletitle{Self-representation and {PCA} embedding for
  unsupervised feature selection}.
\newblock \bibinfo{journal}{\emph{World Wide Web}} \bibinfo{volume}{21},
  \bibinfo{number}{6} (\bibinfo{year}{2018}), \bibinfo{pages}{1675--1688}.
\newblock


\end{thebibliography}

\end{document}